\def\dOi{12(1:7)2016}
\keywords{register automata, automata over infinite alphabets,
  infinite systems reachability, freshness, counter automata}
\tikzset{automaton/.style={node distance=2cm,on grid}}
\tikzset{every state/.style={minimum size=0pt,inner sep=1pt}}
\tikzset{transition/.style={->,>=stealth',shorten >=1pt}}
\tikzset{every initial by arrow/.style={transition}}
\tikzset{initial text={}}
\tikzset{
    %Define standard arrow tip
    >=stealth',
    %Define style for boxes
    punkt/.style={
           rectangle,
           rounded corners,
           draw=black, thick,
           text width=6.5em,
           minimum height=2em,
           text centered},
    % Define arrow style
    pil/.style={
           ->,
           shorten <=2pt,
           shorten >=2pt,},
    qwe/.style={
           -,
           shorten <=2pt,
           shorten >=2pt,}
}
\declaretheorem[within=section]{theorem}
\declaretheorem[sibling=theorem]{corollary}
\declaretheorem[sibling=theorem]{lemma}
\declaretheorem[sibling=theorem]{proposition}
\declaretheorem[sibling=theorem,style=definition]{definition}
\declaretheorem[sibling=theorem,style=definition]{example}
\newcommand{\cutout}[1]{}
\newcommand{\ifdraft}[2]{#2}  % #1 for draft; #2 for nondraft
\newcommand{\noterg}[2]{\textcolor{darkgray}{[\textcolor{red}{#1}: #2]}}
\newcommand{\rg}[1]{\noterg{RG}{#1}}
\newcommand{\ntside}[1]{\marginpar{\parbox{12mm}{\color{blue}\raggedright\tiny #1}}}
\newcommand\card[1]{|#1|}
\newcommand\defeq{\triangleq}
\newcommand\too[1][]{\overset{#1}{\mathrel{{\rlap{$\rightarrow$}\,\mathord{\rightarrow}}}}}
\newcommand\boldemph[1]{\emph{\textbf{#1}}}
\renewcommand\AA{\mathcal{A}}
\newcommand\FF{\mathcal{F}}
\newcommand\PP{\mathcal{P}}
\newcommand\PPfn{\mathcal{P}_{\!\mathsf{fn}}}
\newcommand\PPnz{\mathcal{P}_{\!\not=\emptyset}} % TODO: use \empti ?
\newcommand\LL{\mathcal{L}}
\newcommand\N{\mathbb{N}}
\newcommand\Z{\mathbb{Z}}
\newcommand\lab{\mathsf{Lab}}
\newcommand\trdelta[2][]{\overset{#2}\longrightarrow}
\newcommand\xtrdelta[2][]{\xrightarrow{#2}}
\newcommand\trdeltaa[2][]{\overset{#2}{\mathrel{{\rlap{$\longrightarrow$}\,\mathord{\longrightarrow}}}}}
\newcommand\trans[1]{\overset{#1}\longrightarrow}
\newcommand\transs[1]{\overset{#1}{\mathrel{{\rlap{$\longrightarrow$}\,\mathord{\longrightarrow}}}}}
\newcommand\xrightt[1]{\mathrel{\xrightarrow{#1}\!\!\!\!\!\to}}
\newcommand\range[1]{\mathsf{img}(#1)}
\newcommand\id{\mathsf{id}}
\newcommand\move[2]{#1\mathbin{\mathsf{in}}#2}
\newcommand\names{\mathcal{N}}
\newcommand\his{\mathsf{Asn}}
\newcommand\ta{,}
\newcommand\empti{\varnothing}
\newcommand\at{@}
\newcommand\fix[2]{#1\,\mathsf{fix}\,#2}
\title[History-Register Automata]{History-Register Automata}
\author[R.~Grigore]{Radu Grigore\rsuper a}
\address{{\lsuper a}University of Kent}
\email{radugrigore@gmail.com}
\author[N.~Tzevelekos]{Nikos Tzevelekos\rsuper b}
\address{{\lsuper b}Queen Mary University of London}
\email{nikos.tzevelekos@qmul.ac.uk}
\begin{document}

\begin{abstract}
Programs with dynamic allocation are able to create and use an unbounded number of fresh resources, such as references, objects, files, etc.
We propose History-Register Automata (HRA), a new automata-theoretic formalism for modelling such programs.
HRAs extend the expressiveness of previous approaches and bring us to the limits of decidability for reachability checks.
The distinctive feature of our machines is their use of unbounded memory sets (histories) where input symbols can be selectively stored and compared with symbols to follow. In addition, stored symbols can be consumed or deleted by reset.
We show that the combination of consumption and reset capabilities renders the automata powerful enough to imitate counter machines,
and yields closure under all regular operations apart from complementation.
We moreover examine weaker notions of HRAs which strike different balances between expressiveness and effectiveness.
\end{abstract}

\maketitle % lmcs warns if this comes before abstract

\section{Introduction}\label{sec:intro}
%input{introduction.tex}
%
Program analysis faces substantial challenges due to its aim to devise finitary methods and machines which are required to operate on potentially infinite program computations.
A specific such challenge stems from dynamic generative behaviours such as, for example, object or thread creation in Java, or reference creation in ML\null. A program engaging in such behaviours is expected to generate a possibly unbounded amount of distinct resources, each of which is assigned a unique identifier, a \emph{name}. Hence, any machine designed for analysing such programs is expected to operate on an infinite alphabet of names.
The latter need has brought about the introduction of automata over infinite alphabets in program analysis, starting from prototypical machines for mobile calculi~\cite{HDA} and variable programs~\cite{RA1}, and recently developing towards automata for verification tasks such as equivalence checks of ML programs~\cite{ML1,ML2}, context-bounded analysis of concurrent programs~\cite{Java1,Java2} and runtime program monitoring~\cite{TOPL}.

The literature on automata over infinite alphabets is rich in formalisms each based on a different approach for tackling the infiniteness of the alphabet in a finitary manner (see~e.g.~\cite{Segoufin_overview} for an overview). A particularly intuitive such model is that of \emph{Register Automata (RA)}~\cite{RA1,RA2}, which are machines built around the concept of an ordinary finite-state automaton attached with a fixed finite amount of registers. The automaton can store in its registers names coming from the input, and make control decisions by comparing new input names with those already stored. Thus, by talking about addresses of its memory registers rather than actual names, a so finitely-described automaton can tackle the infinite alphabet of names.
Driven by program analysis considerations, register automata have been recently extended with the feature of name-freshness recognition~\cite{FRA}, that is, the capability of the automaton to accept specific inputs just if they are \emph{fresh}\,---\,they have not appeared before during computation.
Those automata, called \emph{Fresh-Register Automata (FRA)}, can account for languages like the following,
\[
\LL_0 =\{a_1\ldots a_n\in\names^*\ |\ \forall i\not=j.\ a_i\not=a_j\}
\]
which captures the output of a fresh-name generator ($\names$ is a countably infinite set of names). 
FRAs are expressive enough to model, for example, finitary fragments of languages like the $\pi$-calculus~\cite{FRA} or ML~\cite{ML1}. 

The freshness oracle of FRAs administers the automata with perhaps too restricted an access to the full history of the computation: it allows them to detect name freshness, but not non-freshness.
Consider, for instance, the following simple language,
\begin{align*}
\LL'=\{w\in(\{O,P\}\times\names)^*\ |\ &\text{each letter of $w$ appears exactly once in it}\\
	&\land \text{ each $(O,a)$ in $w$ is preceded by some $(P,a)$}\, \}
\end{align*}
where the alphabet is made of pairs containing an element from the set $\{O,P\}$ and a name ($O$ and $P$ can be seen as different processes or agents exchanging names). The language $\LL'$
represents a paradigmatic scenario of a name generator $P$ coupled with a name consumer $O$: each consumed name must have been created first, and no name can be consumed twice. It can capture e.g.~the interaction of a process which creates new files with one that opens them, where no file can be opened twice.
\cutout{; or, see~\cite{Dan}, that of a program which can pass control to its environment through continuation names, where the environment must return one of the continuations already created but not returned yet (so as e.g.~to avoid replay attacks).}
The inability of FRAs to detect non-freshness, as well as the fact that names in their history cannot be removed from it, do not allow them to express $\LL'$. More generally, the notion of \emph{re-usage} or \emph{consumption} of names is beyond the reach of those machines.
Another limitation of FRAs is the failure of closure under concatenation, interleaving and Kleene star.

%
%In particular, they fail to express create-before-use-name behaviours, as present 
%e.g. in reference allocate/write or file create/open scenarios.
%
%3. Another limitation of FRAs is that their history is never emptied. Thus, they 
%cannot capture re-usage of names (e.g. reference de-allocation or file deletion).
%
%The above shortcomings are addressed by HRAs.
%

\begin{figure}[t]
\begin{center}
\parbox{.28\linewidth}{
\begin{tikzpicture}[automaton]
\node[state,initial,above,accepting] (q0) {$q_0$};
\node[state,right=of q0] (P) {$P$};
\node[state,below=of q0] (O) {$O$};
\path[transition]
  (q0) edge[pil,above] node{$P$} (P)
  (q0) edge[pil,left] node{$O$} (O)
  (P) edge[pil,bend left,below] node{$\empti\ta 1$} (q0)
  (O) edge[pil,bend right,right] node{$1\ta 2$} (q0);
\end{tikzpicture}}
\parbox{.7\linewidth}{%
The automaton starts at state $q_0$ with empty history and non-deterministically makes a transition to state $P$ or $Q$, accepting the respective symbol. From state $P$, it accepts any input name $a$ which does not appear in any of its histories (this is what $\empti$ stands for), puts it in history number 1, and moves back to $q_0$. From state $O$, it accepts any input name $a$ which appears in history number 1, puts it in history number 2, and moves back to $q_0$.}%\vspace{-2.5mm}
\end{center}
\caption{History-register automaton accepting $\LL'$.}%\vspace{-2mm}
\label{fig:HRA}
\end{figure}

Aiming at providing a stronger theoretical tool for analysing computation with names, in this work we further capitalise on the use of histories by effectively upgrading them to the status of registers. That is, in addition to registers, we equip our automata with a fixed number of unbounded sets of names (\emph{histories}) where input names can be stored and compared with names to follow. As histories are internally unordered, the kind of name comparison we allow for is name belonging (\emph{does the input name belong to the $i$-th history?}). 
Moreover, names can be selected and removed from histories, and individual histories can be emptied/reset. 
We call the resulting machines \emph{History-Register Automata (HRA)}. 
For example, $\LL'$ is accepted by the HRA with 2 histories depicted in 
\autoref{fig:HRA}, where by convention we model pairs of symbols by sequences of two symbols.\footnote{Although, technically speaking, the machines we define below do not handle constants (as e.g.~$O,P$), the latter are encoded as names appearing in initial registers, in standard fashion.} 

The strengthening of the role of histories substantially increases the expressive power of our machines.
%
%Different input names may be stored in distinct histories and checked for different properties; or, individual names can be removed from histories, thus allowing us to express \emph{consumption} of resources.
More specifically, we identify three distinctive features of HRAs:
\begin{enumerate}
\item[(1)] the capability to reset histories;
\item[(2)] the use of multiple histories;
\item[(3)] the capability to select and remove individual names from histories.
\end{enumerate}
Each feature allows us to express one of the paradigmatic languages below, none of which are FRA-recognisable. 
%(note that $\LL_3$ is a simplified version of $\LL'$). 
\begin{align*}
\LL_1 &= \{a_0w_1\ldots a_0w_n \in\names^* |\ \forall i.\ w_i\in\names^*\land a_0w_i\in\LL_0\} \;\;\text{for given $a_0$}\\[1mm]
\LL_2 &= \{a_1a_1'\ldots a_na_n'\in\names^* |\ a_1\ldots a_n,a_1'\ldots a_n'\in\LL_0\} \\[1mm]
\LL_3 &= \{a_1\ldots a_na_1'\ldots a_{n'}'\in\names^* |\ a_1\ldots a_n,a_1'\ldots a_{n'}'\in\LL_0\land \forall i.\exists j.\,a_i'=a_j\} 
%\LL_3 &= \{a_1\ldots a_n\in\names^* |\ \text{each $a_i$ appears at most twice}\,\}
\end{align*}
%are used here to demonstrate the limitations of FRAs: although the latter are sufficiently powerful for expressing the semantics of programs with generative effects, they fall short of providing a satisfactorily rich verification toolkit for them. We cannot (a)~Kleene-close FRAs (actually, not even concatenate them), (b)~interleave them nor (c)~use them to express {non-freshness} or {consumption} of names.
%
%Note that the language $\LL_3$ is a simplified version of $\LL'$.
%
Apart from the gains in expressive power, the passage to HRAs yields a more well-rounded automata-theoretic formalism for generative behaviours as these machines enjoy closure under all regular operations apart from complementation.
\cutout{Moreover, the use of several histories 
allows one to express interleavings of languages as different histories can be used for each of them (in the same automaton). Using a similar technique we can also simulate behaviours where an automaton can drop the first input symbol if the input string cannot lead to an accepting state. Such \emph{rollback} behaviours are essential for runtime checks~\cite{TOPL}.}%
On the other hand, the combination of features (1-3) above enables us to use histories as counters and simulate counter machines. 
\cutout{In particular, we are able to reduce coverability for reset Petri nets to language emptiness for HRAs, and vice versa.}%
We therefore obtain non-primitive recursive bounds for checking language emptiness.
Given that language containment and universality are undecidable already for register automata~\cite{RA2}, HRAs are fairly close to the decidability boundary for properties of languages over infinite alphabets.
Nonetheless, starting from HRAs and weakening them in each of the first two factors (1,2) we obtain automata models which are still highly expressive but computationally more tractable.
Overall, the expressiveness hierarchy of the machines we examine is depicted in \autoref{fig:hierarchy} (weakening in (1) and (2) respectively occurs in the second column of the figure).

\begin{figure}[t]
\begin{center}
\begin{tikzpicture}[node distance=8mm, auto,]
 %nodes
 \node[punkt] (HRA) {HRA};
 \node[punkt, left=of HRA] (unHRA) {unary HRA}
    edge[pil] (HRA);  
 \node[punkt, below=of unHRA] (nrHRA) {non-reset HRA}
    edge[pil] (HRA);
 \node[punkt, right=of nrHRA] (CMA) {\em DA\,/\,CMA};    
 \node[punkt, left=of unHRA] (FRA) {\em FRA}
    edge[pil] (unHRA)
    edge[pil] (nrHRA);
 \node[punkt, left=of nrHRA] (RA) {\em RA}
    edge[pil] (FRA);
 \path (nrHRA) edge[pil] (CMA);
\end{tikzpicture}%\vspace{-3mm}
\end{center}
\caption{Expressiveness of history-register automata compared to previous models (in italics). The inclusion $\mathcal{M}\longrightarrow\mathcal{M}'$ means that for each $\AA\in\mathcal{M}$ we can effectively construct an $\AA'\in\mathcal{M}'$ accepting the same language as $\AA$. All inclusions are strict.}%\vspace{1.5mm}
%\hrule\vspace{-4mm}
\label{fig:hierarchy}
\end{figure}

\subsection*{Motivation and related work.}
The motivation for this work stems from semantics and verification. In semantics, the use of names to model resource generation originates in the work of Pitts and Stark on the $\nu$-calculus~\cite{nu} and Stark's PhD~\cite{Stark:PhD}.
Names have subsequently been incorporated in the semantics literature (see e.g.~\cite{nom1,nom2,nom3,nom4}), especially after 
the advent of \emph{Nominal Sets}~\cite{nom}, which provided formal foundations for doing mathematics with names. Moreover,
recent work in game semantics has produced algorithmic representations of game models using extensions of fresh-register automata~\cite{ML1,ML2,IMJ2}, thus achieving automated equivalence checks for fragments of ML and Java. 
In a parallel development, a research stream on automated analysis of dynamic concurrent programs has developed essentially the same formalisms, this time stemming from trace-based operational  techniques~\cite{Java1,Java2}.
This confluence of different methodologies is exciting and encourages the
development of stronger automata for a wider range of verification tasks, and just such an automaton we examine herein.

Although our work is driven by program analysis, the closest existing automata models to ours come from XML database theory and model checking. Research in the latter area has made great strides in the last years on automata over infinite alphabets and related logics (e.g.~see~\cite{Segoufin_overview} for an overview from 2006).
As we show in this paper, history-register automata fit very well inside the big picture of automata over infinite alphabets (cf.~\autoref{fig:hierarchy}) and in fact can be seen as closely related to \emph{Data Automata (DA)}~\cite{DA} or, equivalently, \emph{Class Memory Automata (CMA)}~\cite{CMA}. 
A crucial difference lies in the reset capabilities of our machines, which allow us to express languages like $\LL_1$ that cannot be expressed by DA/CMAs.
On the other hand, the local termination conditions of DA/CMAs allow them to express languages
that HRAs cannot capture.
We find the correspondence between HRAs and DAs particularly pleasing as it relates two seemingly very different kind of machines, with distant operational descriptions and intuitions.
%This fit leaves space for transfer of technologies and, more specifically, of the associated logics of data automata.

A recent strand of research in foundations of atom-based computation~\cite{WNom1,WNom2,WNom3} has examined nominal variants of classical machine models, ranging from finite-state automata to Turing machines.
% Based on the interpretations carried out in this paper, 
%we conjecture that
%HRAs can be seen as nominal versions of vector addition systems (with states and possibly some tailor-made restrictions).
Finally, since the publication of the conference version of this paper~\cite{CONF}, there has been work in \emph{nested DA/CMAs}~\cite{NDA,NCMA}, which can be seen as extensions of non-reset HRAs whereby the histories satisfy some nesting relations. The latter are a clean extension of our machines, leading to higher reachability complexities.
%\ntside{make this more precise}

This article is the journal version of~\cite{CONF},
  with strengthened results and with full proofs.
\autoref{sec:registers} is new:
  it collects all results that show how registers can be simulated by other means.
Many upper bounds are tighter:
  Propositions
  \ref{prop:regs_his},
  \ref{prop:nrHRA-regs},
  \ref{prop:emptiness-ub},
  \ref{prop:nrHRA-ub},
  \ref{prop:unary-empty-ub}.
Some results are new:
  Propositions
  \ref{prop:empty-init-histories},
  \ref{prop:remove-regs-withstates},
  \ref{prop:emptiness-general},
  \ref{prop:unary-empty}.
\autoref{ex:HRA-regs-yes} is new.
Most proofs have been revised.
\autoref{sec:summary} is new:
  it collects in one place the main properties of HRAs.
Also, we relate our work to what has been done after the conference version was published.
%\rg{Feedback from Hongseok:
%`Reviewers appreciate if the differences from the conference version are described in detail.'}

\subsection*{Overview}
In \autoref{sec:defns} we introduce HRAs and their basic properties.
In \autoref{sec:closure} we examine regular closure properties of HRAs.
In \autoref{sec:registers} we explain how registers can be simulated by other means,
  such as histories.
In \autoref{sec:empty} we prove that emptiness is $\textsc{Ackermann}$-complete.
In \autoref{sec:weak} we introduce weaker models, and study their properties.
In \autoref{sec:summary} we summarize the main properties of HRAs.
In \autoref{sec:connect} we connect HRAs to existing automata formalisms.
We conclude by discussing future directions which emanate from this work.

%%% Local Variables: 
%%% mode: latex
%%% TeX-master: "jour"
%%% End: 

\section{Definitions and first properties}\label{sec:defns}

We start by fixing some notation. Let $\names$ be a countably infinite alphabet of \emph{names}, over which we range by $a$,~$b$, $c$, etc. For any pair of natural numbers $i\leq j$, we write $[i,j]$ for the set $\{i,i{+}1,\ldots,j\}$, and for each $i$ we let $[i]$ be the set $\{1,\ldots,i\}$.
For any set $S$,
  we write $\card{S}$ for the cardinality of $S$,
  we write $\PP(S)$ for the powerset of $S$,
  we write $\PPfn(S)$ for the set of finite subsets of $S$, and
  we write $\PPnz(S)$ for the set of nonempty subsets of $S$.
We write $\id:S\to S$ for the identity function on $S$, and $\range{f}$ for the image of $f:S\to T$. 

We define automata which are equipped with a fixed number of \boldemph{registers} and \boldemph{histories} where they can store names.
Each register is a memory cell where one name can be stored at a time; each history can hold an unbounded set of names.
We use the term \boldemph{place} to refer to both histories and registers.
\cutout{\footnote{We will see below that, in fact, registers are subsumed by histories in the case of machines with resets, but this does not hold in general.}}
Transitions are of two kinds: name-accepting transitions and reset transitions. Those of the former kind have labels of the form $(X,X')$, for sets of places $X$ and $X'$; and those of the latter carry labels with single sets of places $X$. A transition labelled $(X\ta X')$ means:
\begin{itemize}
\item accept name $a$ if it is contained precisely in places $X$, and
\item update places in $X$ and $X'$ so that $a$ be contained precisely in places $X'$ after the transition (without touching other names).
\end{itemize}
By $a$ being contained precisely in places $X$ we mean that it appears in every place in $X$, and in no other place.
In particular, the label $(\empti,X')$ signifies accepting a fresh name (one which does not appear in any place) and inserting it in places $X'$.
On the other hand,
a transition labelled by $X$ resets all the places in $X$;
  that is, it updates each of them to the empty set (registers are modelled as sets with at most one element).
Reset transitions do not accept names; they are $\epsilon$-transitions from the outside.
Note that the label $(X\ta\empti)$ has different semantics from the label $X$:
the former stipulates that a name appearing precisely in $X$ be accepted and then removed from $X$; whereas the latter clears all the contents of places in $X$, without accepting anything.

\subsection{Definitions}

Formally, {let us fix positive integers $m$ and $n$ which will stand for the default number of histories and registers respectively in the machines we define below.} The set $\his$ of \boldemph{assignments} and the set $\lab$ of \boldemph{labels} are:
\begin{align*} 
\his &= \{\,H:[m+n]\to\PPfn(\names) \mid \forall i>m.\,|H(i)|\leq 1\,\}\\
\lab &= \PP([m+n])^2 \cup \PP([m+n])
\end{align*}
For example, $\{(i,\empti)\ |\ i\in[m{+}n]\}$ is the empty assignment.\footnote{We represent functions as sets of pairs.}
We range over elements of $\his$ by $H$ and variants, and over elements of $\lab$ by {$\ell$} and variants.

%For any assignment $H$ and any $a\in\names$, $S\subseteq\names$ and $X\subseteq[m{+}n]$:
% RG: i replaced the above mainly to avoid two formulas separated only by punctuation
Let $H \in \his$ be an assignment,
  let $a \in \names$ be a name,
  let $S \subseteq \names$ be a set of names, and
  let $X \subseteq [m+n]$ be a set of places.
We introduce the following notation:
\begin{itemize}
\item We set $H\at X$ to be the set of names which \emph{appear precisely} in places $X$ in $H$;
  that is, $H\at X=\bigcap_{i\in X}\!H(i)\setminus \bigcup_{i\notin X}\!H(i)$.
In particular, $H\at\,\empti=\names\setminus \bigcup_{i}\!H(i)$ is the set of names which do not appear in $H$.
\item $H[X\mapsto S]$ is the update $H'$ of $H$ so that all places in $X$ are mapped to $S$;
  that is, \penalty-100 $H'\!=\!\{(i,H(i))\ |\ i\not\in X\}\cup\{(i,S)\ |\ i\in X\}$.
  E.g., $H[X\mapsto\empti]$ resets all places in~$X$.
\item $H[\move{a}{X}]$ is the update of $H$ which removes name $a$ from all places and inserts it back in~$X$;
   that is, $H[\move{a}{X}]$ is the assignment: %for all $i$:
%\[
%H[\move{a}{X}](i)=
%\begin{cases}
%H(i)\setminus\{a\} & i\notin X\\
%H(i)\cup\{a\} & i\in X\cap[m]\\
%\{a\} & i\in X\setminus[m]
%\end{cases}
%\]
%\vspace{-1mm}
\[
\{ (i,H(i)\cup\{a\})\ |\ i\in X\cap[m]\}
\ \cup \
\{ (i,\{a\})\ |\ i\in X\setminus[m]\}
\ \cup \
\{ (i,H(i)\setminus\{a\})\ |\ i\notin X\}
\]
\end{itemize}
Note above that operation $H[\move{a}{X}]$ acts differently in the case of histories ($i\leq m$) and registers ($i>m$) in $X$: in the former case, the name $a$ is added to the history $H(i)$, while in the latter the register $H(i)$ is set to $\{a\}$ and its previous content is cleared.

We can now define our automata.
%\dd{I don't see how the automata definition depends from $m$. Maybe we should make the connection explicit, for example by parameterising
%the $\his$ into $\his [m]$ so in the definition below we would have $H_0 \in \his [m]$? }
%\nt{I added a sentence above. We could use indexing by $m$ but then we'd need to be consistent and add it also to $\lab$ (possibly elsewhere) and clutter notation.}

\begin{definition}
A \boldemph{history-register automaton (HRA)} of type $(m,n)$ is a tuple $\AA=\langle Q,q_0,H_0,\delta,F\rangle$ where:
\begin{itemize}
\item $Q$ is a finite set of states, $q_0$ is the initial state, $F\subseteq Q$ are the final ones,
\item $H_0\in\his$ \ is the initial assignment, and
\item $\delta\subseteq Q\times\lab\times Q$ \ is the transition relation.
\end{itemize}
For brevity, we shall call $\AA$ an \emph{$(m,n)$-HRA}.
\end{definition}%\vspace{-1mm}

We write transitions in the forms $q\trans{X,X'}q'$ and $q\trdelta{X}q'$, for each kind of transition label.
\cutout{We can distinguish a few types of transitions of the first type, according to their label~$X\ta X'$.
If $X=\empti$ we say the transition is \emph{fresh};
if $X'=\empti$ we say it is a \emph{deletion};
If $X=X'$ we say it is a \emph{test};
and if $X\subseteq X'$ we say it is \emph{ascending}.}%
In diagrams,
we may unify different transitions with common source and target, for example $q\trdelta{X,X'}q'$ and $q\trdelta{Y,Y'}q'$ may be written \hbox{$q\xtrdelta{X,X'\,/\,Y,Y'}q'$;} moreover, we shall lighten notation and write $i$ for the singleton $\{i\}$, and $ij$ for $\{i,j\}$.

We already gave an overview of the semantics of HRAs.
This is formally defined by means of configurations representing the current computation state of the automaton.
A \boldemph{configuration} of $\AA$ is a pair $(q,H)\in\hat{Q}$, where:%\vspace{-1mm}
\[ \hat{Q}=Q\times\his
%\vspace{-1mm}
\]
From the transition relation~$\delta$ we obtain the configuration graph of $\AA$ as follows.
%Below we denote the empty sequence by~$\epsilon$.
%
\begin{definition}\label{def:confgraph}
Let $\AA$ be an $(m,n)$-HRA as above.
Its \boldemph{configuration graph} \hbox{$(\hat{Q},\trans{})$,} where \  ${\trans{}}\,\subseteq\, \hat Q\times\bigl(\names\cup\{\epsilon\}\bigr)\times\hat Q$, is constructed by setting $(q,H)\trans{x}(q',H')$ if and only if one of the following conditions is satisfied.
\begin{itemize}
\item $x=a\in\names$ and there is $q\trdelta{X\ta X'}q'\;\in\;\delta$ such that $a\in H\at X$ and $H'=H[\move{a}{X'}]$.
\item $x=\epsilon$ and there is $q\trdelta{X}q'\;\in\;\delta$ such that $H'=H[X\mapsto\empti]$.
\end{itemize}
The language accepted by $\AA$ is 
\[
\LL(\AA) = \{\,w\in\names^*\mid
  \text{$(q_0,H_0)\transs{w}(q,H)$ and $q\in F$}\,\}
\]
where \ $\transs{}$ \ is the reflexive transitive closure of \ $\trans{}$ \ (i.e.\ $\hat q\xrightt{x_1\ldots x_k}\hat q'$ if $\hat
q\trans{x_1}\!\cdots\!\trans{x_k}\hat q'$).
\end{definition}

Note that we use $\epsilon$ both for the empty sequence and the empty transition so, in particular, when writing sequences of the form $x_1\ldots x_k$ we may implicitly consume $\epsilon$'s.
It is worth noting here that our 
formulation follows \emph{M-automata}~\cite{RA1} in that multiple places can be mentioned at each transition.

\cutout{
\rg{TODO: make this fit in text}
 Another design choice regards the use of sets of places in transitions instead e.g.~of single places. Although the latter description would lead to an equivalent and probably conciser formalism, it would be inconvenient for combining HRAs e.g.~in order to produce the intersection of their accepted languages. In fact, our formulation follows \emph{M-automata}~\cite{RA1}, an equivalent presentation of RAs susceptible to closure constructions.
}

%Note also that reflexivity of $\transs{}$ implies $\hat{q}\transs{\epsilon}\hat{q}$.
%by writing, for example, $\epsilon$ for $\epsilon\ldots\epsilon$, or $a$ for $a\epsilon\ldots\epsilon$, etc.

\begin{example}\label{ex:HRA}
The language $\LL_1$ of the Introduction is recognised by the following $(1,1)$-HRA (leftmost below), with initial assignment $\{(1,\empti),(2,a_0)\}$. The automaton starts by accepting $a_0$, leaving it in register $2$, and moving to state $q_1$. There, it loops accepting fresh names (appearing in no place) which it stores in history $1$. From $q_1$ it goes back to $q_0$ by resetting its history.
\begin{center}%\vspace{-2.5mm}
\begin{tikzpicture}[automaton]
\node[state,initial,accepting] (q0) {$q_0$};
\node[state] (q1) [right=of q0] {$q_1$};
\path[transition]
  (q0) edge node[above]{$\scriptstyle 2,2$} (q1)
  (q1) edge[bend left] node[below]{$\scriptstyle 1$} (q0)
  (q1) edge[loop above] node[left]{$\scriptstyle \empti,1$} ();
\end{tikzpicture}\qquad
\begin{tikzpicture}[automaton]
\node[state,initial,accepting] (q0) {$q_0$};
\node[state] (q1) [right=of q0] {$q_1$};
\path[transition]
  (q0) edge[bend left] node[above]{$\scriptstyle\empti,1\,/\,2,12$} (q1)
  (q1) edge[bend left] node[below]{$\scriptstyle\empti,2\,/\,1,12$} (q0);
\node[state,initial,accepting] (q0a) [right=of q1] {$q_0$};
\node[state,accepting] (q1a) [right=of q0a] {$q_1$};
\path[transition]
  (q0a) edge node[above]{$\scriptstyle1\ta\empti$} (q1a)
       edge[loop above] node[left]{$\scriptstyle\empti\ta 1$} ()
  (q1a) edge[loop above] node[left]{$\scriptstyle1\ta\empti$} ();
\end{tikzpicture}
%\vspace{-3.5mm}
\end{center}
We can also see that the other two HRAs, of type $(2,0)$ and $(1,0)$, accept the languages $\LL_2$ and $\LL_3$ respectively. 
Both automata start with empty assignments.

Finally, the automaton we drew in \autoref{fig:HRA} is, in fact, a (2,2)-HRA where its two registers initially contain the names $O$ and $P$ respectively. The transition label $O$ corresponds to $(3,3)$, and $P$ to~$(4,4)$. 
\end{example}

As mentioned in the introductory section, HRAs build upon \emph{(Fresh) Register Automata}~\cite{RA1,RA2,FRA}. The latter can be defined within the HRA framework as follows.\footnote{The definitions given in~\cite{RA1,RA2,FRA} are slightly different but  can routinely be shown equivalent.}

\begin{definition}
A \emph{Register Automaton (RA)} of $n$ registers is a $(0,n)$-HRA with no reset transitions.
A \emph{Fresh-Register Automaton (FRA)} of $n$ registers is a $(1,n)$-HRA $\AA=\langle Q,q_0,H_0,\delta,F\rangle$ such that $H_0(1)=\bigcup_{i}H_0(i)$ and:
\begin{itemize}
\item for all $(q,\ell,q')\in\delta$, there are $X,X'$ such that $\ell=(X,X')$ and $1\in X'$;
\item for all $(q,\{1\},X',q')\in\delta$, there is also $(q,\empti,X',q')\in\delta$.
\end{itemize}
\end{definition}

Thus, in an FRA all the initial names must appear in its history, and the same holds for all the names the automaton accepts during computation ($1\in X'$). As, in addition, no reset transitions are allowed, the history effectively contains all names of a run. On the other hand, the automaton cannot recognise \emph{non-freshness}: if a name appearing only in the history is to be accepted at any point then a totally fresh name can be also be accepted in the same way. Now, from~\cite{FRA} we have the following. 

\begin{lemma}\label{lem:FRA}
The languages $\LL_1,\LL_2$ and $\LL_3$ are not FRA-recognisable.
\end{lemma}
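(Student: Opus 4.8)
The plan is to show each of $\LL_1,\LL_2,\LL_3$ falls outside the class of FRA-recognisable languages by exploiting the structural restriction identified just above: in an FRA the single history accumulates \emph{every} name ever accepted and can never be reset, so the machine can never re-encounter a previously seen name as if it were fresh, nor can it ``forget'' a name. Concretely, I would use a pumping-style / Myhill--Nerode-style argument tailored to FRAs, of the kind established in \cite{FRA}: since an FRA has finitely many states and $n$ registers, for a sufficiently long input the configuration (state, plus the equality type of the register contents) must repeat, and between two such repetitions the machine has accepted a block of names whose only memorable data is which of them landed in registers; everything else is indistinguishable from fresh.

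First, for $\LL_1=\{a_0w_1\cdots a_0w_n\}$: the separator name $a_0$ must be recognised as \emph{non-fresh} at the start of each block $w_i$ after the first, because $a_0$ has already occurred. But an FRA accepting $a_0$ at such a point via its history could equally accept any fresh name there (the second clause of the FRA definition, read contrapositively: history-membership tests cannot distinguish an old name from a genuinely fresh one once we drop the register constraints). So if the automaton accepts $a_0 b_1\cdots b_k\, a_0 c_1\cdots c_k \cdots$ it must also accept the string with some $a_0$ occurrences replaced by fresh names, which violates the defining property that all block-separators equal the fixed $a_0$ while the $w_i$ themselves contain no repeats — pick the $b_j,c_j$ etc.\ pairwise distinct and distinct from $a_0$, and the tampered string is not in $\LL_1$. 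For $\LL_2$ and $\LL_3$ I would do the analogous thing: $\LL_2$ demands that odd-position names form a fresh sequence and, independently, even-position names form a fresh sequence, but a single non-resettable history cannot track two interleaved ``all-distinct'' constraints with only $n$ registers — take $n+1$ pairs $a_i a_i'$ and force a register-type collision; $\LL_3$ additionally demands each $a_i'$ be a non-fresh name already seen among the $a_j$'s, which again requires recognising non-freshness against specific old names, impossible for the reasons above.

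The key steps, in order: (i) recall from \cite{FRA} the normal-form / symmetry lemma for FRAs — that runs are closed under name-permutations fixing the register contents, and that a history-membership test on an accepted name can be mimicked on a fresh name when no register demands otherwise; (ii) for $\LL_1$, assume an FRA $\AA$ with $n$ registers recognises it, feed it $a_0$ followed by blocks of $n{+}2$ pairwise-distinct fresh names, locate two block boundaries with the same state and register-equality-type, and build from the accepting run a new accepting run whose input replaces an $a_0$ by a fresh name, contradicting membership; (iii) repeat the template for $\LL_2$ (two independent freshness streams, overflow the registers) and $\LL_3$ (freshness streams \emph{plus} a forced non-freshness test). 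I expect the main obstacle to be step (i): making precise exactly which invariance property of FRA runs is being invoked and citing/deriving it cleanly, since the paper only sketches why FRAs ``cannot recognise non-freshness'' — the whole argument hinges on turning that informal remark into a usable lemma (permutation-invariance of runs together with the ``fresh-or-history-only are interchangeable'' observation), after which each of the three non-recognisability claims is a short fooling-set argument.
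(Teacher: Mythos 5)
The paper itself disposes of this lemma by citation: $\LL_1$ is handled explicitly in~\cite{FRA}, and $\LL_2,\LL_3$ by adapting the argument there for $\LL_0*\LL_0$. You are therefore reconstructing the underlying arguments, and your general machinery is the right one. In fact your worry in step~(i) is largely moot in this paper's setting: the ``fresh-or-history-only are interchangeable'' property is not a lemma to be derived but is baked syntactically into the FRA definition (for every $(q,\{1\},X',q')\in\delta$ there is also $(q,\emptyset,X',q')\in\delta$), and permutation-invariance of runs is immediate since transitions only mention places, never concrete names. Your sketches for $\LL_2$ (overflow the $n$ registers with $n{+}1$ interleaved pairs, so that two history-only names become indistinguishable) and for $\LL_3$ (a consumed name $a_i'$ not held in a register must be accepted via a history-only label, hence a globally fresh name is accepted at the same point, giving a word outside $\LL_3$) are essentially the intended arguments.

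The $\LL_1$ argument, however, has a genuine gap, in two places. First, you assume the separator $a_0$ must be accepted ``via its history''; but $a_0$ is a \emph{fixed} name, so the FRA can hold it in a register from its initial assignment and accept it with a label $(X,X')$ where $X$ contains that register --- exactly as the paper's own $(1,1)$-HRA for $\LL_1$ does with its label $2,2$. The fresh-for-history-only substitution then simply does not apply to $a_0$. Second, even if it did, replacing an occurrence of $a_0$ by a fresh name $d$ merely merges two adjacent blocks into one; if, as you stipulate, the $b_j,c_j$ are pairwise distinct across blocks, the merged block is still repetition-free and the tampered word remains in $\LL_1$, so no contradiction is reached. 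The correct obstruction lies elsewhere: take $w=a_0b_1\cdots b_k\,a_0b_1\cdots b_k$ with $k>n$. In the second block every $b_i$ already sits in the non-resettable history, so to enforce that the second block itself has no repeats the automaton would need to distinguish the $b_i$'s it has already re-accepted from those it has not --- all of which are history-resident --- and with only $n$ registers some re-accepted $b_i$ must eventually be evicted to ``history only'' status, whereupon the transition accepting the next block-internal name can accept $b_i$ again, yielding an accepting run on a word with a repeat inside a block. That is the argument of~\cite{FRA}; your pumping template in step~(ii) should be aimed at the block-internal names rather than at $a_0$.
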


\begin{proof}
$\LL_1$ was explicitly examined in~\cite{FRA}. For $\LL_2$ and $\LL_3$ we use a similar argument as the one for showing that $\LL_0*\LL_0$ is not FRA-recognisable~\cite{FRA} .
\end{proof}

\subsection{Bisimulation}
Bisimulation equivalence, also called \emph{bisimilarity}, is a useful tool for relating automata, even from different paradigms. It implies language equivalence and is generally easier to reason about than the latter. We will be using it avidly in the sequel.

\begin{definition}\label{d:bisim}
Let $\AA_i=\langle Q_i,q_{0i},H_{0i},\delta_i,F_i\rangle$ be $(m,n)$-HRAs, for $i=1,2$.
A relation $R\subseteq\hat Q_1\times\hat Q_2$ is called a \emph{simulation} on $\AA_1$ and $\AA_2$ if, for all $(\hat{q}_1,\hat{q}_2)\in R$,%\vspace{1mm}
\begin{itemize}
\item if $\hat q_1\trdeltaa[\delta_2]{\epsilon}\hat q_1'$ and $\pi_1(\hat q_1')\in F_1$ then $\hat q_2\trdeltaa[\delta_2]{\epsilon}\hat q_2'$ for some $\pi_1(\hat q_2')\in F_2$, where $\pi_1$ is the first projection function;
\item if $\hat q_1\trdeltaa{\epsilon}\!\cdot\!\trdelta[\delta_1]{a}\hat q_1'$ then $\hat q_2\trdeltaa{\epsilon}\!\cdot\!\trdelta[\delta_2]{a}\hat q_2'$ for some $(\hat q_1',\hat q_2')\in R$.%\vspace{1mm}
\end{itemize}
$R$ is called a \boldemph{bisimulation} if both $R$ and $R^{-1}$ are simulations.
We say that $\AA_1$~and~$\AA_2$ are \boldemph{bisimilar}, written $\AA_1\sim\AA_2$, if $((q_{01},H_{01}),(q_{02},H_{02}))\in R$ for some bisimulation $R$.
\end{definition}

%The following is a standard result.

\begin{lemma}\label{l:bisim}
If $\AA_1\sim\AA_2$ then $\LL(\AA_1)=\LL(\AA_2)$.
\end{lemma}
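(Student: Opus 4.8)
The plan is to show that a bisimulation between $\AA_1$ and $\AA_2$ can be transported from configurations to the words that label paths between them, and in particular that it preserves acceptance. Concretely, I would fix a bisimulation $R$ with $((q_{01},H_{01}),(q_{02},H_{02}))\in R$ and prove, by induction on the length of $w\in\names^*$, the following transfer property: for all $(\hat q_1,\hat q_2)\in R$, if $\hat q_1\transs{w}\hat q_1'$ then $\hat q_2\transs{w}\hat q_2'$ for some $\hat q_2'$ with $(\hat q_1',\hat q_2')\in R$. Symmetrically, using that $R^{-1}$ is also a simulation, the converse implication holds, so $\LL(\AA_1)=\LL(\AA_2)$ will follow once acceptance is handled.

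For the induction, the base case $w=\epsilon$ is where the only subtlety lies, because $\transs{}$ absorbs $\epsilon$-transitions and acceptance is defined via $(q_0,H_0)\transs{w}(q,H)$ with $q\in F$. If $\hat q_1\transs{\epsilon}\hat q_1'$ with $\pi_1(\hat q_1')\in F_1$, the first clause of the simulation definition gives $\hat q_2\transs{\epsilon}\hat q_2'$ with $\pi_1(\hat q_2')\in F_2$; note the clause is stated exactly so as to match a sequence of $\epsilon$-moves ending in a final state, which is all we need. (For the part of the inductive invariant that merely tracks membership in $R$ across $\epsilon$-moves, we instead read the second clause with the empty string of name-moves.) For the inductive step, write $w=w'a$; apply the induction hypothesis to $w'$ to land in a pair $(\hat q_1'',\hat q_2'')\in R$, then use the second simulation clause on the move $\hat q_1''\transs{\epsilon}\cdot\trdelta[\delta_1]{a}\hat q_1'$ to obtain a matching $\hat q_2''\transs{\epsilon}\cdot\trdelta[\delta_2]{a}\hat q_2'$ with $(\hat q_1',\hat q_2')\in R$, and concatenate the runs.

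Finally, to conclude $w\in\LL(\AA_1)$ implies $w\in\LL(\AA_2)$: if $(q_{01},H_{01})\transs{w}(q_1,H_1)$ with $q_1\in F_1$, decompose this run as $(q_{01},H_{01})\transs{w}\hat q_1''\transs{\epsilon}(q_1,H_1)$; by the transfer property applied to the initial pair we get $(q_{02},H_{02})\transs{w}\hat q_2''$ with $(\hat q_1'',\hat q_2'')\in R$, and then the first simulation clause turns the terminal $\epsilon$-run into final state into a matching $\epsilon$-run $\hat q_2''\transs{\epsilon}\hat q_2'$ with $\pi_1(\hat q_2')\in F_2$, so $w\in\LL(\AA_2)$. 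The reverse inclusion is identical using $R^{-1}$. The only place that requires care — the ``main obstacle'', such as it is — is bookkeeping the interaction between the $\epsilon$-absorbing definition of $\transs{}$ and the two asymmetric-looking clauses of the simulation definition, i.e.\ making sure the final-state clause is invoked precisely at the end of a run rather than in the middle; everything else is a routine induction.
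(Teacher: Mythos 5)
Your argument is correct and is exactly the standard induction the paper has in mind: the paper states this lemma without proof, merely calling it ``a standard result.'' You also correctly isolate the one genuine subtlety --- the transfer property must be applied to the prefix of the run ending immediately after the last name-move, with the trailing $\epsilon$-run into a final state handled separately by the first simulation clause --- so nothing further is needed.
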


\subsection{Determinism}
We close our presentation here by describing the deterministic class of HRAs.
We defined HRAs in such a way that, at any given configuration $(q,H)$ and for any input symbol $a$, there is at most one set of places $X$ that can match $a$, i.e.~such that $a\in H\at X$. As a result,
the notion of determinism in HRAs can be ensured by purely syntactic means.
Below we write $q\trdeltaa{X}q'\in\delta£$ if there is a sequence of transitions $q\trdelta{X_1}\cdots\trdelta{X_n}q'$ in $\delta$ such that $X=\bigcup_{i=1}^{n}X_i$.
In particular, $q\trdeltaa{\empti}q\in\delta$.

\begin{definition}
We say that an HRA $\AA$ is \boldemph{deterministic} when,
  for any reachable configuration $\hat{q}$ and any name $a$,
    if $\hat{q}\transs{\epsilon}\!\cdot\!\trans{a}\hat{q}_1$ and $\hat{q}\transs{\epsilon}\!\cdot\!\trans{a}\hat{q}_2$ then $\hat{q}_1=\hat{q}_2$.
We say that an HRA $\AA$ is \boldemph{strongly deterministic} when,
  for any state $q$ and any sets $X$, $X_1$, $X_2$, $Y_1$, $Y_2$,
  if $q\trdeltaa{Y_1}\!\cdot\!\xtrdelta{X\setminus Y_1,X_1}q_1\in\delta$ and $q\trdeltaa{Y_2}\!\cdot\!\xtrdelta{X\setminus Y_2,X_2}q_2\in\delta$ then $q_1=q_2$, $Y_1=Y_2$ and $X_1=X_2$.
\end{definition}

Even if $\AA$~is deterministic,
  it is still possible to have multiple paths in the configuration graph
  that are labeled by the same word.
However, such paths may only differ in their $\epsilon$-transitions.
In the definition of `strongly deterministic',
  the set $X$ is guessing where the name~$a$ occurs.

\begin{lemma}
If $\AA$ is strongly deterministic then it is deterministic.
\end{lemma}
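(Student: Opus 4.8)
The plan is to unfold the definition of strong determinism and show it forces the ordinary determinism condition. First I would fix a reachable configuration $\hat q=(q,H)$ and a name $a$, and suppose $\hat q\transs{\epsilon}\!\cdot\!\trans{a}\hat q_1$ and $\hat q\transs{\epsilon}\!\cdot\!\trans{a}\hat q_2$. Each of these two computations consists of a (possibly empty) sequence of reset transitions followed by a single name-accepting transition. I would first observe that each maximal reset sequence $q\trdelta{X_{1}}\cdots\trdelta{X_{k}}q'$ has the same net effect on the assignment as a single reset of $Y=\bigcup_i X_i$, namely it sends $H$ to $H[Y\mapsto\emptyset]$; in particular $q\trdeltaa{Y}q'\in\delta$ captures exactly this, and by the convention $q\trdeltaa{\emptyset}q\in\delta$ the empty reset sequence is also covered.

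So the first computation factors as $q\trdeltaa{Y_1}q^{(1)}\xtrdelta{Z_1,X_1}\pi_1(\hat q_1)$, where $a\in H[Y_1\mapsto\emptyset]\at Z_1$ and $H_1=H[Y_1\mapsto\emptyset][\move{a}{X_1}]$, and similarly the second factors through $Y_2,Z_2,X_2$. The key step is to argue $Y_1\cup Z_1 = Y_2\cup Z_2$: call this common set $X$. Indeed, since the automaton is built so that for a given assignment and name there is a unique place-set matching $a$, we have $a\in H[Y_1\mapsto\emptyset]\at Z_1$ iff $Z_1$ is exactly the set of places that contain $a$ after deleting the places in $Y_1$, which is $(\{i : a\in H(i)\}\setminus Y_1)$. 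Hence $Y_1\cup Z_1 \supseteq \{i: a\in H(i)\}$, and in fact $Y_1\cup Z_1 = \{i : a\in H(i)\}\cup Y_1$; but $Y_1\subseteq X$ must hold since a reset followed by a successful match can only involve places — more carefully, I would show $Z_1 = X\setminus Y_1$ where $X := Y_1\cup\{i:a\in H(i)\}$ does not actually depend on the factorization, because the name-accepting transition must have $Z_1$ disjoint from $Y_1$ (those places are empty) and must contain every place where $a$ survives. Matching this up on both sides and using that $a$ is the \emph{same} name in both computations gives a single $X$ with $Z_j = X\setminus Y_j$ for $j=1,2$. Then the two computations are precisely witnesses of $q\trdeltaa{Y_1}\!\cdot\!\xtrdelta{X\setminus Y_1,X_1}q_1\in\delta$ and $q\trdeltaa{Y_2}\!\cdot\!\xtrdelta{X\setminus Y_2,X_2}q_2\in\delta$, so strong determinism yields $q_1=q_2$, $Y_1=Y_2$, $X_1=X_2$.

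Finally I would conclude $\hat q_1=\hat q_2$: the states agree ($q_1=q_2$), and the assignments agree because $H_1 = H[Y_1\mapsto\emptyset][\move{a}{X_1}]$ is determined by $Y_1$ and $X_1$ alone (the set $X\setminus Y_1$ only governs whether the transition fires, not the resulting assignment), and these coincide with $Y_2,X_2$. Hence $\AA$ is deterministic.

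I expect the main obstacle to be the bookkeeping in the middle step: carefully justifying that the ``test part'' $Z_j$ of the name-accepting transition is forced to equal $X\setminus Y_j$ for a single set $X$ independent of $j$. This rests on the observation (made in the paper just before the definition) that for a fixed assignment and name there is at most one matching place-set, but one has to apply it to the \emph{post-reset} assignment $H[Y_j\mapsto\emptyset]$, whose matching set does depend on $Y_j$, and then reconcile the two via the shared name $a$ and the shared starting assignment $H$. Everything else is routine unfolding of the operations $H[X\mapsto\emptyset]$ and $H[\move{a}{X}]$ and of the $\trdeltaa{}$ notation.
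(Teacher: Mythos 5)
The paper states this lemma without proof, and your proposal is the routine unfolding it evidently has in mind: factor each run as a reset prefix followed by one name-accepting step, collapse the prefix to a single reset of the union $Y_j$, pin down the test set $Z_j$ of the accepting step, and feed the result to the strong-determinism condition. Your supporting observations are correct and are exactly what is needed: a sequence of resets has the same net effect as one reset of the union; $Z_j$ must be disjoint from $Y_j$ because the reset places are empty; and the resulting assignment $H[Y_j\mapsto\emptyset][\move{a}{X_j}]$ depends only on $Y_j$ and $X_j$, not on the test set.

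The one step that fails as literally written is your choice of the common set $X$. You correctly derive $Z_j=\{i : a\in H(i)\}\setminus Y_j$, but then propose $X:=Y_1\cup\{i : a\in H(i)\}$ and assert it is independent of the factorisation --- it is not (it contains $Y_1$), and it need not satisfy $X\setminus Y_2=Z_2$: if some place $i_0\in Y_1$ has $a\notin H(i_0)$ and $i_0\notin Y_2$, then $i_0\in X\setminus Y_2$ but $i_0\notin Z_2$. The repair is immediate: take $X:=\{i : a\in H(i)\}$ itself, which is manifestly independent of $j$ and gives $Z_j=X\setminus Y_j$ for both $j$ directly from your own formula. With that substitution the two runs witness $q\trdeltaa{Y_j}\cdot\xtrdelta{X\setminus Y_j,X_j}q_j\in\delta$ for a single $X$, strong determinism yields $q_1=q_2$, $Y_1=Y_2$, $X_1=X_2$, and the final computation of the assignments goes through as you describe.
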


%%% Local Variables: 
%%% mode: latex
%%% TeX-master: "jour"
%%% End: 
% vim:spell:spelllang=en_gb:

\section{Closure properties}\label{sec:closure}

History-register automata enjoy good closure properties with respect to regular language operations. In particular, they are closed under union, intersection, concatenation and Kleene star, but not closed under complementation. 

In fact, the design of HRAs is such that the automata for union and intersection come almost for free through a straightforward product construction which is essentially an ordinary product for finite-state automata, modulo reindexing of places to account for duplicate labels (cf.~\cite{RA1}).
\cutout{We do not show these constructions formally (they are very similar to those in~\cite{RA1}), but we demonstrate the second one through an example.

\begin{example}
Break $\LL_2$ into two languages.
\end{example}}%
The constructions for Kleene star and concatenation are slightly more involved as there is need for passing through intermediate automata which do not touch their initial names. 

We shall need the following technical gadget. Given an $(m,n)$-HRA $\AA$ and a sequence $w$ of $k$ distinct names, we construct a bisimilar $(m,n{+}k)$-HRA, denoted $\fix{\AA}{w}$, in which the names of $w$ appear exclusively in the additional $k$ registers, which, moreover, remain unchanged during computation. The construction will allow us, for instance, to create feedback loops in automata ensuring that after each feedback transition the same initial configuration occurs.
%\rg{TODO: say here that we can use also histories}

\begin{lemma}\label{lem:forclos}
Let $\AA$ be an $(m,n)$-HRA with initial assignment $H_0$ and $w=a_1\ldots a_k$ a sequence of distinct names. We can effectively construct an $(m,n{+}k)$-HRA $\fix{\AA}{w}$ with initial assignment $H_0'$ such that $\fix{\AA}{w}\sim \AA$ and:
\begin{itemize}
\item $H_0'(m{+}n{+}i)=a_i$ for all $i\in[k]$, and $H_0'(i)=H_0(i)\setminus\{a_1,\ldots,a_k\}$ for all $i\in[m{+}n]$;
\item for all reachable configurations $(q,H)$ of $\fix{\AA}{w}$ and all $i>m{+}n$, $H(i)=H_0'(i)$.
\end{itemize}
\end{lemma}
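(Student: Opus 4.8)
The plan is to keep the names of $w$ permanently parked in the $k$ new registers — so those registers are never touched — and to let $\fix{\AA}{w}$ mimic $\AA$ on the first $m{+}n$ places, while remembering in its control state, for each $a_i$, the set of places among $[m{+}n]$ where $a_i$ \emph{would} reside had we not moved it out. So the states of $\fix{\AA}{w}$ are pairs $(q,g)$ with $q\in Q$ and $g\colon[k]\to\PP([m{+}n])$, a finite set, together with a few dummy intermediate states used to split a transition into two (exactly as glossed over in the proof of Proposition~\ref{prop:regs_his}). We set $q_0'=(q_0,g_0)$ with $g_0(i)=\{j\in[m{+}n]\mid a_i\in H_0(j)\}$, we set $F'=\{(q,g)\mid q\in F\}$, and we take $H_0'$ exactly as prescribed: $H_0'(m{+}n{+}i)=\{a_i\}$ for $i\in[k]$ and $H_0'(j)=H_0(j)\setminus\{a_1,\dots,a_k\}$ for $j\in[m{+}n]$.

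For the transitions, write $X_{\mathsf{r}}=X\cap[m{+}1,m{+}n]$ for the register part of a set of places $X$. The one subtle point is that the HRA move operation overwrites a register wholesale, so moving \emph{any} name into $X'$ deletes whichever parked name was virtually residing in a register of $X'_{\mathsf{r}}$; this forces corresponding updates to $g$. Accordingly, $\delta'$ contains exactly:
\begin{itemize}
\item for each $q\trans{X,X'}q'\in\delta$ and each $g$, the transition $(q,g)\trans{X,X'}(q',g')$ with $g'(i)=g(i)\setminus X'_{\mathsf{r}}$ for all $i$; since $X,X'\subseteq[m{+}n]$ and each $a_i$ sits alone in a new register, this transition can fire only on an input name outside $w$;
\item for each $q\trans{X,X'}q'\in\delta$, each $i\in[k]$, and each $g$ with $g(i)=X$, the two-step transition $(q,g)\trdelta{X'_{\mathsf{r}}}\!\cdot\!\trans{m{+}n{+}i,\,m{+}n{+}i}(q',g')$ with $g'(i)=X'$ and $g'(i')=g(i')\setminus X'_{\mathsf{r}}$ for $i'\ne i$; the preliminary reset clears the ``shadow'' registers that $\AA$ would overwrite with $a_i$, and the name step can fire only on $a_i$, which is the unique name appearing precisely in register $m{+}n{+}i$;
\item for each $q\trdelta{X}q'\in\delta$ and each $g$, the reset transition $(q,g)\trdelta{X}(q',g')$ with $g'(i)=g(i)\setminus X$ for all $i$.
\end{itemize}

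The first displayed property of the lemma is just the definition of $H_0'$, and the second is a trivial induction on runs, since none of the transitions above ever changes the content of a register in $[m{+}n{+}1,m{+}n{+}k]$: in the middle family the name step removes $a_i$ from every place and reinserts it into $m{+}n{+}i$, so every new register ends up at its previous, hence initial, value. For the bisimilarity $\AA\sim\fix{\AA}{w}$ promised before the lemma, we use
\[
R=\bigl\{\,\bigl((q,H),((q,g),H')\bigr)\ \bigm|\ \forall j\in[m{+}n].\ H'(j)=H(j)\setminus\{a_1,\dots,a_k\};\ \forall i.\ H'(m{+}n{+}i)=\{a_i\},\ g(i)=\{j\mid a_i\in H(j)\}\,\bigr\},
\]
which contains the pair of initial configurations by construction. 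Showing that $R$ and $R^{-1}$ are simulations is a case analysis over the three families: on names outside $w$ the operation $H[\move{a}{X'}]$ and its counterpart on $H'$ agree after deleting $\{a_1,\dots,a_k\}$, the only possible displacement of an $a_i$ being an overwrite of a register of $X'_{\mathsf{r}}$, which is exactly what $g'(i)=g(i)\setminus X'_{\mathsf{r}}$ records; a transition $q\trans{X,X'}q'$ of $\AA$ can accept $a_i$ precisely when $\{j\mid a_i\in H(j)\}=X$, i.e.\ when $g(i)=X$, which is precisely the side condition enabling the matching middle-family transition, and after its preliminary reset and $a_i$-step the two configurations are related again, with the new virtual position of $a_i$ being $X'=g'(i)$ and the shadow registers empty on both sides; reset transitions are analogous. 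The final-configuration clause holds at once, since an $\epsilon$-path of $\AA$ is a sequence of reset transitions, each mirrored by a last-family transition that preserves the $Q$-component, and $F'$ consists exactly of the states whose $Q$-component lies in $F$.

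The main obstacle is this register bookkeeping: because a register is overwritten as a whole, moving any name into a register of $X'$ silently evicts the parked name that was virtually there, so $\fix{\AA}{w}$ must simultaneously update $g$ and physically clear the corresponding shadow register — hence the preliminary reset in the middle family. Getting this right uniformly across the three transition families, and then verifying in both directions that $R$ is indeed a bisimulation, is where the work lies; everything else is routine.
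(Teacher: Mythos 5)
Your construction is essentially the paper's: the same state space $Q\times([k]\to\PP([m{+}n]))$ recording the intended location of each parked name, the same initial assignment, the same three families of transitions, and a witnessing relation that differs only cosmetically from the paper's $H(j)=H'(j)\cup\{a_l\mid j\in f(l)\}$. The one substantive difference is the point you flag yourself: the eviction of parked names when a transition writes into a register of $X'$. The paper's published $\delta'$ leaves $f$ unchanged on ordinary $(X,X')$-transitions and does not clear the shadow registers before an $a_i$-step, whereas you update $g'(i)=g(i)\setminus X'_{\mathsf{r}}$ and insert a preliminary reset of $X'_{\mathsf{r}}$. This extra bookkeeping is in fact needed whenever $X'$ contains a register that is already physically or virtually occupied: without it the paper's relation is not preserved (the overwritten register holds $\{b\}$ in $\AA$ but the relation would demand it also contain the still-registered $a_l$), and $\fix{\AA}{w}$ can accept spurious words via the $(\{i^\circ\},\{i^\circ\})$ transitions. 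So your version is the more careful of the two; when $n=0$, where $X'_{\mathsf{r}}=\emptyset$ throughout, the two constructions coincide exactly.
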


\begin{proof}
We construct $\fix{\AA}{w}=\langle Q',q_0',H_0',\delta',F'\rangle$ as follows. First, we insert/move all names of $w$ to the new registers (places $[m{+}n{+}1,m{+}n{+}k]$), i.e.~we set $H_0'(i)=H_0(i)\setminus\{a_1\ldots a_k\}$ for all $i\in[m{+}n]$, and  $H_0'(m{+}n{+}i)=\{a_i\}$ for each $i\in[k]$. 
The role of the new registers is to constantly store the names in $w$ and act on the behalf of other places when the latter intend to use  those names: during computation, whenever an $a_i$ is captured by a transition of the initial automaton $\AA$, in $\fix{\AA}{w}$ it will be instead simulated by a transition involving the new registers. In order for the simulation to be accurate, we shall inject inside states information specifying the \emph{intended} location of the $a_i$s in the places of $\AA$.
Thus, the states of the new automaton are pairs $(q,f)$, where $q\in Q$ and $f$ is a function recording, for each of the new registers, where would the name of the register appear in the original automaton $\AA$. That is,
\[
Q'=Q\times\{f:[k]\to\PP([m{+}n])\ |\ \forall j\not=j'\!.\, f(j)\cap f(j')\subseteq[m]\}
\] 
while $q_0'=(q_0,\{(i,\{j\ |\ a_i\in H_0(j)\})\ |\ i\in[k]\})$
and $F'=\{(q,f)\in Q'\ |\ q\in F\}$. Finally, $\delta'$ operates just like $\delta$ albeit taking into account the $f$'s of states to figure out the intended positions of the $a_i$s and, at the same time, update the $f$'s after each transition. We therefore include in $\delta'$ precisely the following transitions. Below we write $i^\circ$ for $m{+}n{+}i$.
For each $(q,f)\in Q'$ and $q\trdelta{X,X'}q'\in\delta$,%
\begin{itemize}
\item add a transition $(q,f)\trdelta{X,X'}(q',f)$;
\item if $f(i)=X$ for some $i$ then add $(q,f)\xtrdelta{\{i^\circ\},\{i^\circ\}}(q',f')$ where $f'=f[i\mapsto X']$;
\end{itemize}
Moreover, for each $q\trdelta{X}q'\in\delta$ include $(q,f)\trdelta{X}(q',f')$ where $f'=\{(j,f(j)\setminus X)\ |\ j\in[k]\}$.
\\
Following the above line of reasoning, we can show that  the relation
\[
\{ ((q,H),(q,f,H'))\ |\ \forall i\!\in\![m{+}n].\,H(i)=H'(i)\cup\{a_j\,|\,i\in f(j)\}\}
\]
with $(q,H),(q,f,H')$ reachable configurations, is a bisimulation.
\end{proof}

%We now show that languages recognised by HRAs are closed under union, intersection, concatenation and Kleene star.

We write $\LL\circ\LL'$ for concatenation of languages,
  and $\LL^*$ for Kleene closure of a language.
We use the same definitions as the standard ones for languages over finite alphabets:
  $\LL \circ \LL'$ is $\{\,ww'\mid w\in\LL \,\land\, w'\in \LL'\,\}$,
  and $\LL^*$ is the least fixed-point of the
    equations $\epsilon\in\LL^*$ and~$\LL^*\circ\LL \subseteq \LL^*$,
    where $\epsilon$~is the empty word.

\begin{proposition}\label{prop:closure}
Languages recognised by HRAs are closed under union, intersection, concatenation and Kleene star.
\end{proposition}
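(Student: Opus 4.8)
The plan is to treat each operation in turn, building the composite automaton by the usual finite-state constructions while being careful about place management, which is the one place where working over an infinite alphabet with shared histories/registers makes things less routine than the classical case.

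For \textbf{union} and \textbf{intersection}, I would take two $(m,n)$-HRAs $\AA_1,\AA_2$ (after Proposition~\ref{prop:regs_his} one could even assume $n=0$, but it is not necessary) and form a product-style automaton on a disjoint-union indexing of places: use places $[m+n]$ for $\AA_1$'s places and a fresh copy $[m+n]+[m+n]$ for $\AA_2$'s, so the two automata never interfere. The state space is $Q_1\times Q_2$ with the initial assignment the ``disjoint union'' of $H_{01}$ and $H_{02}$; a name-accepting transition of the product fires $q_1\trans{X_1,X_1'}q_1'$ and $q_2\trans{X_2,X_2'}q_2'$ simultaneously on label $(X_1\uplus X_2, X_1'\uplus X_2')$, and reset transitions are handled componentwise (interleaved via $\epsilon$). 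The only subtlety is that a single input name $a$ must be matched against \emph{both} automata at once; since the two place-sets are disjoint, $a\in H\at(X_1\uplus X_2)$ holds in the product iff $a$ restricted to $\AA_1$'s places lies in $(H_1)\at X_1$ and similarly for $\AA_2$, so the product transition is correct. For intersection take $F_1\times F_2$ as final; for union, add an initial $\epsilon$-choice between two copies (or take $(F_1\times Q_2)\cup(Q_1\times F_2)$). Correctness is by exhibiting the obvious bisimulation between product configurations and pairs of configurations and invoking Lemma~\ref{l:bisim}. This mirrors \cite{RA1} and I would only sketch it.

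For \textbf{concatenation} $\LL(\AA_1)\cdot\LL(\AA_2)$ and \textbf{Kleene star} $\LL(\AA_1)^*$, the issue is that when we ``jump'' from a final configuration of $\AA_1$ (resp.\ back to the start of $\AA_1$) the second automaton (resp.\ the next copy) must begin from its \emph{own} initial assignment, yet the histories have been polluted by the first run, and for the star we cannot even afford to accumulate junk across iterations. This is exactly what Lemma~\ref{lem:forclos} is designed for: apply the $\fix{\cdot}{w}$ construction to each automaton so that the initial names live in dedicated registers that are never modified, then, at each splice point, insert a reset $\epsilon$-transition that clears every place \emph{except} those protected registers, followed by $\epsilon$-transitions that re-populate the other places to match the (now guaranteed reachable-as-initial) configuration $(q_{0i},H_{0i})$. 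Concretely: for concatenation, build the disjoint-place union of $\fix{\AA_1}{w_1}$ and $\fix{\AA_2}{w_2}$ (where $w_1,w_2$ list the initial names of each), add $\epsilon$-transitions from each final state of $\AA_1$ to $q_{02}$ preceded by the appropriate reset-and-reinitialise block, and make the finals those of $\AA_2$; if $\varepsilon\in\LL(\AA_1)$ one also keeps $q_{01}\in F$ only when forming star. For Kleene star, do the same on a single copy of $\fix{\AA_1}{w}$: add a fresh accepting initial state with an $\epsilon$-edge to $q_0$, and from every final state an $\epsilon$-edge back to $q_0$ guarded by the reset of all non-protected places plus the reinitialisation block; the Lemma guarantees that after the reset the protected registers still hold $w$, so the target configuration is literally $(q_0,H_0)$ again, closing the loop cleanly.

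The main obstacle I expect is the reinitialisation block: after a reset we have emptied all non-protected places, but $H_{0i}$ may place some initial name in several histories at once (and the same name in a register), so we must rebuild $H_{0i}$ from the protected registers using only HRA moves, which act on one name at a time. The trick is that each initial name is now sitting alone in its protected register $m+n+j$; a single name-accepting $\epsilon$-free transition can only fire on an input letter, so instead I would use a sequence of reset/re-add moves — but resets do not add names. The clean fix is to \emph{not} reinitialise by reading, but rather to let $\fix{\AA}{w}$ already carry, in addition to the protected register, a second protected register per initial name holding the same name, and then realise that we never actually need to restore the non-register places to a nonempty $H_{0i}$: if $H_{0i}$ has nonempty non-register places, precede everything by a normalisation that pushes those names into fresh protected registers too (enlarging $n$ by the total multiplicity), so that effectively $H_{0i}$ restricted to the ``working'' places is empty and a bare reset suffices. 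Thus the real content is: (i) state and use Lemma~\ref{lem:forclos}, (ii) argue that WLOG the initial assignment has all names in registers (again via $\fix{\cdot}{w}$ applied to $H_0$ itself), after which splicing is just reset-edges, and (iii) the bisimulation/run-decomposition argument showing the spliced automaton accepts exactly the concatenation/star. I would present (ii)–(iii) carefully and relegate the bookkeeping of (i) to the appendix where Lemma~\ref{lem:forclos} already lives.
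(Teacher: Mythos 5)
Your proposal is correct and follows essentially the same route as the paper: a place-reindexed product for union and intersection, and for concatenation and Kleene star the $\fix{\cdot}{w}$ construction of Lemma~\ref{lem:forclos} with $w$ enlisting \emph{all} relevant initial names (so that the working places of the initial assignment become empty and a single reset edge labelled with the set of all working places lands exactly on the initial configuration) --- your ``detour'' about reinitialisation resolves into precisely the paper's use of the lemma. One caveat: your parenthetical alternative for union, taking $(F_1\times Q_2)\cup(Q_1\times F_2)$ as final in the synchronous product, does not work because HRA runs can block (a word in $\LL(\AA_1)$ may admit no run at all in $\AA_2$, hence none in the product); stick with your primary construction, an initial $\epsilon$-choice between two disjoint copies.
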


\begin{proof}
We show concatenation and Kleene star only. For the former, consider HRAs $\AA_i=\langle Q_i,q_{0i},H_{0i},\delta_i,F_i\rangle$, $i=1,2$, and assume wlog that they have common type $(m,n)$. Let $w$ be an enlistment of all names in $H_{02}$ and construct $\AA_i'=\fix{\AA_i}{w}$, for $i=1,2$. Then, the concatenation $\LL(\AA_1)\circ\LL(\AA_2)$ is the language recognised by connecting $\AA_1'$ and $\AA_2'$ serially, that is, the automaton obtained by connecting each final state of $\AA_1'$ to the initial state of $\AA_2'$ with a transition labelled $[m{+}n]$, and with initial/final states those of $\AA_1'/\AA_2'$ respectively.

Finally, given an $(m,n)$-HRA $\AA$ and an enlistment $w$ of its initial names, we construct an automaton $\AA'$ by connecting the final states of $\fix{\AA}{w}$ to its initial state with a transition labelled $[m{+}n]$. We can see that $\LL(\AA')=\LL(\AA)^*$.
\end{proof}

As we shall next see, while universality is undecidable for HRAs,
their emptiness problem can be decided by reduction to coverability for transfer-reset vector addition systems. In combination, these results imply that HRAs cannot be effectively complemented. In fact, there are HRA-languages whose complements are not recognisable by HRAs. This can be shown via the following example, adapted from~\cite{Manuel_Ramanujam:2011}.

\begin{lemma}\label{lem:noncomplement}
HRAs are not closed under complementation.
\end{lemma}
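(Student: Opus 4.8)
The plan is to combine the two structural facts the paper establishes about HRAs: (i) language emptiness is decidable (by reduction to coverability for transfer-reset vector addition systems, done in Section~\ref{sec:empty}), and (ii) universality is undecidable. If the class of HRA-recognisable languages were effectively closed under complement, then given an HRA $\AA$ one could construct an HRA $\AA^c$ with $\LL(\AA^c)=\names^*\setminus\LL(\AA)$ and test $\LL(\AA^c)=\emptyset$ using the emptiness procedure, thereby deciding whether $\LL(\AA)=\names^*$ — contradicting undecidability of universality. So the lemma follows provided the complementation, if it existed, would be \emph{effective}; but any purely set-theoretic closure argument for automata of this shape is effective, so the contradiction stands. (Alternatively, for a non-effective statement one argues directly: exhibit a concrete HRA whose complement is provably not HRA-recognisable, via a pumping/Myhill–Nerode-style argument over infinite alphabets.)

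Concretely I would proceed as follows. First I would recall, or cite, that universality of register automata is undecidable~\cite{RA2}, and note that RAs are a subclass of HRAs (they are exactly the $(0,n)$-HRAs without reset transitions), so universality is undecidable for HRAs as well; moreover one should check that the undecidability is for a fixed alphabet $\names$ and that ``universal'' means $\LL(\AA)=\names^*$, matching our acceptance condition. Second I would invoke the emptiness decidability result of Section~\ref{sec:empty}: there is an algorithm that, on input an HRA $\BB$, decides whether $\LL(\BB)=\emptyset$. Third, I would argue that effective closure under complement plus these two facts is contradictory: the hypothetical complement-construction $\AA\mapsto\AA^c$ composed with the emptiness test decides universality of $\AA$. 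This is three short paragraphs of plumbing once the two ingredients are in hand.

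The main obstacle — really the only subtle point — is pinning down the precise sense of ``closed under complementation'' being refuted. The excerpt's preceding sentence already flags this: it says the emptiness/universality combination shows HRAs ``cannot be effectively complemented,'' and then states the lemma as the unqualified non-closure. So I expect the intended proof is exactly the effectivity argument above, and the honest thing is to note that the construction in any conceivable closure proof would be effective (closure proofs for such automata always yield an algorithm), hence non-effective-complementability already entails non-closure in the relevant sense; or, to be fully rigorous about plain (non-effective) non-closure, one supplements with a direct inexpressibility witness. I would most likely present the effectivity argument as the main proof and add one line pointing the reader to a concrete non-recognisable complement (e.g. the complement of a language like $\LL_0$, handled by a counting/pumping argument in the style used in~\cite{FRA} for $\LL_0*\LL_0$) to cover the unqualified reading.
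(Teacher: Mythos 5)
Your main argument establishes only that HRAs cannot be \emph{effectively} complemented: undecidable universality plus decidable emptiness rules out a computable map $\AA\mapsto\AA^c$. But the lemma is the unqualified, set-theoretic non-closure, and the paper deliberately separates the two claims --- the sentence preceding the lemma records the effectivity consequence of the emptiness/universality results, and the lemma itself (``In fact, the following holds'') is the strictly stronger statement, proved separately in Example~\ref{ex:complement}. Your bridge, ``any purely set-theoretic closure argument for automata of this shape is effective,'' is a heuristic about how proofs tend to go, not a mathematical step: it is a priori consistent with your two ingredients that every complement of an HRA language is HRA-recognisable while no algorithm produces the recognising automaton. So the main line of your proposal does not prove the statement as written.

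Your fallback --- exhibit a concrete HRA language whose complement is not recognisable --- is the right idea and is exactly what the paper does, but the witness you gesture at does not work. The complement of $\LL_0$ is the set of words in which some name occurs at least twice, and this \emph{is} HRA-recognisable: nondeterministically guess the repeating name at its first occurrence, park it in a second history, and accept when it is seen again. The paper instead takes $\LL_4$ (some name occurs a number of times other than two), recognisable by the same guess-a-witness trick, and shows that $\overline{\LL_4}$ (every name occurs exactly twice) is not HRA-recognisable. The obstruction is the universally quantified condition over all names: HRA acceptance depends only on the final control state, not on the final contents of the histories, so the automaton cannot certify that \emph{every} name has completed its quota. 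Concretely, one takes an accepting run on $a_1\cdots a_k\,a_1\cdots a_k$ with $k$ the number of states, pumps a repeated state in the first half (where all transitions must have labels of the form $(\emptyset,X)$ since the names are globally fresh), and observes that the quantities $|H\at X|$ only grow, so the second half still goes through --- yielding acceptance of a word with more than $k$ distinct names in its generator phase, which cannot lie in $\overline{\LL_4}$. Without a witness and argument of this kind, your proposal has a genuine gap.
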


\begin{example}\label{ex:complement}
Consider \ %the language
%\begin{align*}
$\LL_4 = \{w\in\names^*\ |\ \text{ not all names of $w$ occur exactly twice in it }\}$,
%\overline{\LL_4} &= \{w\in\names^*\ |\ \text{ all names in $w$ occur exactly twice in it }\}
%\end{align*}
which is accepted by the $(2,0)$-HRA below.
%, where ``$-$" can be any of $\empti,1,2$.%\vspace{-1.5mm}
\begin{center}
\begin{tikzpicture}[automaton]
\node[state,initial] (q0) {$q_0$};
\node[state,accepting] (q1) [right=of q0] {$q_1$};
\node[state] (q2) [right=of q1] {$q_2$};
\node[state,accepting] (q3) [right=of q2] {$q_3$};
\def\loops{
  edge[loop above] node{$\scriptstyle \empti\ta1\,/\,1\ta 1$} ()
%  edge[loop below] node{$\scriptstyle\empti\ta1$} ()
}
\path[transition]
  (q0) \loops edge node[above]{$\scriptstyle\empti\ta2$} (q1)
  (q1) \loops edge node[above]{$\scriptstyle2\ta2$} (q2)
  (q2) \loops edge node[above]{$\scriptstyle2\ta1$} (q3)
  (q3) \loops;
\end{tikzpicture}%\vspace{-1.5mm}
\end{center}
The automaton non-deterministically selects an input name which either appears only once in the input or at least three times.
We claim that $\overline{\LL_4}$, the language of all words whose names occur exactly twice in them, is not HRA-recognisable. 
\end{example}

\begin{proof}
Suppose it were recognisable (wlog, \autoref{prop:regs_his}) by an $(m,0)$-HRA $\AA$ with $k$ states. Then, $\AA$ would accept the word
%\[
$w = a_1\ldots a_{k}\,a_1\ldots a_{k}$
%\]
where all $a_i$'s are distinct and do not appear in the initial assignment of $\AA$. 
Let $p=p_1p_2$ be the path in $\AA$ through which $w$ is accepted, with each $p_i$ corresponding to one of the two halves of $w$. Since all $a_i$s are fresh for $\AA$, the non-reset transitions of $p_1$ must carry labels of the form $(\empti,X)$, for some sets $X$.
Let $q$ be a state appearing twice in $p_1$, say $p_1=p_{11}(q)p_{12}(q)p_{13}$. 
Consider now the path $p'=p_1'p_2$ where $p_1'$ is the extension of $p_1$ which repeats $p_{12}$, that is, $p_1'=p_{11}(q)p_{12}(q)p_{12}(q)p_{13}$. We claim that $p'$ is an accepting path in $\AA$. Indeed, by our previous observation on the labels of $p_1$, the path $p_1'$ does not block, i.e.~it cannot reach a transition $q_1\xrightarrow{X\ta Y}q_2$, with $X\not=\empti$, in some configuration $(q_1,H_1)$ such that $H_1\at X=\empti$. 
We need to show that $p_2$ does not block either (in $p'$).
Let us denote $(q,H_1)$ and $(q,H_2)$ the configurations in each of the two visits of $q$ in the run of $p$ on $w$; and let us write $(q,H_3)$ for the third visit in the run of $p_1'$, given that for the other two visits we assume the same configurations as in $p$. Now observe that, for each nonempty $X\subseteq[m]$, repeating $p_{12}$ cannot reduce the number of names appearing precisely in $X$, therefore $|H_2\at X|\leq|H_3\at X|$. The latter implies that, since $p$ does not block, $p'$ does not block either. Now observe that any word accepted by $w'$ is not in $\overline{\LL_4}$, as $p_1'$ accepts more than $k$ distinct names, a contradiction.
\end{proof}

\cutout{
visited twice by $\AA$ while accepting the first half of $w$, and suppose that these visits partition $w$ in $w=uvv'$, where $|u|\geq k$, so that $\AA$ is in configuration $(q,H)$ right after $u$, and in configuration $(q,H')$ right after $uv$. %(we omit register assignments for brevity). 
Then, $w$ is accepted by a run which traverses a path of the form
\begin{equation}\label{eq:path}
q_0\rightarrow\cdots\rightarrow \underbrace{q\rightarrow\cdots\rightarrow q}_p\!\!\!\!\overbrace{\,\,\,\,\rightarrow\cdots\rightarrow q_F}^{p'}
\end{equation}
for some final state $q_F$. The subpath $p'$ cannot be traversed by $\AA$ from configuration $(q,H)$ as then $\AA$ would accept some word $uv''$ of length $<2k$, which cannot be a member of $\overline{\LL_5}$. The only way in which such a traversal can fail is by reaching a transition $q_1'\xrightarrow{X\ta Y}q_2'$ in $p'$ with configuration $(q_1',H_1')$ such that $\bigcap_{i\in X}H_1'(i)=\empti$. On the other hand, in the original run which accepts $w$, this transition is reached with some configuration $(q_1',H_1)$ such that $\bigcap_{i\in X}H_1(i)\not=\empti$. This implies that there is a transition $q_1\xrightarrow{Z\ta X}q_2$ in $p$ such that no clearance of any history in $X$ occurs between $q_2$ and $q_1'$ in~\eqref{eq:path}. We can therefore rewrite~\eqref{eq:path} as \
$
q_0\to\cdots\to q\to\cdots\to q_1\xrightarrow{Z\ta X}q_2\to\cdots\to q\to\cdots\to q_1'\xrightarrow{X\ta Y}q_2'\to\cdots\to q_F
$
and partition $w$ in $w=u_1b_1u_2b_2u_3$ where $b_1,b_2$ the names accepted by the two denoted transitions respectively. Since $|u_1|\geq k$, $b_1$ does not occur in $u_2b_2u_3$. Hence,~\eqref{eq:path} can be traversed in the same way as in the run that accepts $w$ but with the two denoted transitions accepting the same name. That is, the word $u_1b_1u_2b_1u_3$ is also accepted by $\AA$, contradicting $\LL(\AA)=\overline{\LL_3}$.}

%%% Local Variables: 
%%% mode: latex
%%% TeX-master: "jour"
%%% End: 

\section{Removing Registers}\label{sec:registers}

Although registers are convenient for expressing some languages
  (\autoref{ex:HRA-regs-no} and \autoref{ex:HRA-regs-yes}),
 when reasoning about HRAs it is more convenient to focus on histories only.
In this section, we show that this approach is sound and in particular we present three ways of removing registers:
\begin{itemize}
\item we can construct a bisimilar automaton if we are allowed to use extra histories and resets;
\item we can preserve language if we are allowed to use extra histories (but no resets);
\item we can preserve emptiness if we are allowed to use extra states
  (but no histories nor resets).
\end{itemize}
Each of these constructions will be useful in the sequel either for devising emptiness checks and, more generally, they demonstrate that registers can be considered as a derivative notion.

\subsection{Simulating Registers with Histories and Resets} % <<<

The semantics of registers is very similar to that of histories.
The main difference is that registers are forced to contain at most one name.
To simulate registers with histories, we reset histories before inserting names.
Resetting histories might cause the automaton to forget names
  that are necessary for deciding how to proceed.
The solution is to use two histories for each register:
  one holds the old name, the other holds the new name.
Only the history where the new name will be written needs to be reset.

\begin{proposition}\label{prop:regs_his}
Let $\AA=\langle Q,q_0,H_0,\delta,F\rangle$ be an $(m,n)$-HRA.
We can construct an $(m+2n,0)$-HRA $\AA'=\langle Q',q'_0,H'_0,\delta',F'\rangle$
  that is bisimilar to~$\AA$.
We have $|Q'|\in O(2^n|Q|)$ and $|\delta'|\in O(2^n|\delta|)$.
The construction can be done in $O\bigl((m+n)(|Q'|+|\delta'|)\bigr)$ time.
\end{proposition}

\begin{proof}
For each $q$ in~$Q$, we include $2^n$~states $(q,f)$ in~$Q'$,
  where $f:[n]\to[2n]$ is such that $f(i)\in\{i,i+n\}$ for all~$i$.
The name that $\AA$~holds in register~$i$ will be found in history~$m+f(i)$ of~$\AA'$.
We set $\bar{f}$ to be the complement of $f$;
  that is, $\bar{f}(i) \defeq n+2i-f(i)$.
Let $f^\dagger(i)$ be $i$ if $i\in[m]$ and $m+f(i)$ otherwise.
Moreover, $q_0'=(q_0,\id)$,\quad $F'=\{(q,f)\ |\ q\in F\}$, and $H_0'$ is $H_0$ extended so that $H_0'(i)=\empti$ for all $i>m+n$.
Finally, we include in $\delta'$ precisely the following transitions.
\begin{itemize}
\item For each $q\trdelta{X,X'}q'\in\delta$, add \hbox{$(q,f)\trdelta{Y}\cdot\xtrdelta{f^\dagger(X),\bar{f}^\dagger(X')}(q',f')$} where $Y=[m{+}1,m{+}2n]\setminus \range{f}$, and $f'$ is given by: $f'(i)=\bar{f}(i)$ if $i\in X\cup X'$ 
and $f'(i)=f(i)$ otherwise.
(Note that we need a few extra states in $Q'$, which remain nameless in this proof.)
\item For each $q\trdelta{X}q'\in\delta$, add $(q,f)\xtrdelta{f^\dagger(X)}(q',f)$.
\end{itemize}
The relation $\{\, ((q,H),((q,f),H'))\mid H = H'\circ f^\dagger\,\}$ witnesses bisimilarity.
\end{proof}

% >>>
\subsection{Replacing Registers with Histories using Colours} % <<<

The result of \autoref{prop:regs_his} comes at the cost of introducing reset transitions
  even if the original automaton does not have such transitions.
Reset transitions are undesirable because they increase the complexity of the emptiness problem
  (\autoref{sec:empty}).
We can avoid introducing reset transitions by using the \emph{colouring technique} of~\cite{CMA}.
The construction is more involved and the resulting automaton is not bisimilar to the original,
  but it is language equivalent.

Before we proceed with the proof, let us illustrate the technique on two examples.
\autoref{ex:HRA-regs-no} illustrates how to check that a name is not in the simulated register;
\autoref{ex:HRA-regs-yes} illustrates also how to check that a name is in the simulated register.

\begin{example}\label{ex:HRA-regs-no}
The language
\begin{align*}
\LL_5 \;=\; \{\,a_1\ldots a_n\mid\text{$a_i \ne a_{i+1}$ for all $i$}\,\}
\end{align*}
is recognized by both of the automata below (with histories initially empty):
\begin{center}
\begin{tikzpicture}[automaton,baseline=0pt]
\node[state,initial,accepting] (q00) {$q_0$};
\path[transition]
  (q00) edge[loop right] node{$\scriptstyle \empti,1$} ();
\end{tikzpicture}
\hfil
\begin{tikzpicture}[automaton,baseline=0pt]
\node[state,initial,accepting] (q0) {$q_0$};
\node[state,accepting] (q1) [right=25mm of q0] {$q_1$};
\path[transition]
  (q0) edge[bend left] node[above]{$\scriptstyle\empti\ta2\,/\, 2,2$} (q1)
  (q1) edge[bend left] node[below]{$\scriptstyle\empti,1\,/\, 1,1$} (q0)
  (q0) edge[loop above] node{$\scriptstyle \empti,1\,/\,2,1$} ()
  (q1) edge[loop above] node{$\scriptstyle \empti,2\,/\,1,2$} ();
\end{tikzpicture}
\end{center}
The one on the left is a $(0,1)$-HRA, and we can straight away see its accepted language is~$\LL_5$.
The one on the right is a $(2,0)$-HRA for which it is less clear why it accepts~$\LL_5$.
The reason is the following invariant:
\begin{itemize}
\item $H(1) \uplus H(2)$ is a partition of all names seen so far; and
\item if the state is $q_k$ then the last seen name is in $H(k+1)$, for $k\in\{0,1\}$; and
\item all possible partitions of the names are realisable.
\end{itemize}
The first two points are easy to check.
Because all transitions have labels of the form $(X,\{i\})$,
  all seen names are remembered in precisely one history.
Because all transitions incoming into $q_k$ have labels the form $(X,\{k+1\})$,
  the last seen name is remembered in $H(k+1)$.
The consequence of these first two points is
  that the automaton on the right accepts a name only if it is different from the last seen name.
Indeed, all transitions outgoing from $q_k$ have labels of the form $(X,X')$ with $k+1 \notin X$.
Thus, the first two points should be seen as lemmas
  which let us establish that all words accepted by the automaton on the right belong to~$\LL_5$.

Informally, the third point is the key lemma that lets us establish the converse,
  that all words in~$\LL_5$ are accepted.
However, to see why this is so, we need to rephrase it in a more formal way.
Given a word $a_1\ldots a_n \in \LL_5$,
  we consider an arbitrary partition $H(1) \uplus H(2)$ of the set $\{a_1,\ldots,a_n\}$.
Without loss of generality, assume $a_n \in H(1)$.
The claim is that there exists a run of the automaton on the right
  that accepts the word $a_1\ldots a_n$ and ends in configuration $(q_0,H)$.
We can prove this by induction.
If $a_{n-1} \in H(k+1)$ then the previous state in the run must have been~$q_k$,
  for $k \in \{0,1\}$.
Suppose $a_{n-1} \in H(2)$; the other case is symmetric.
Then, by the induction hypothesis,
  we know that there is a run that accepts the word $a_1\ldots a_{n-1}$
    and ends in configuration $(q_1, H')$,
  where we choose
\begin{align*}
  H'(1) &\defeq
  \begin{cases}
  H(1) \setminus \{a_n\} &\text{if $a_n \notin \{a_1,\ldots,a_{n-1}\}$} \\
  H(1) \cup \{a_n\} &\text{if $a_n \in \{a_1,\ldots,a_{n-1}\}$}
  \end{cases}
&
  H'(2) &\defeq H(2) \setminus \{a_n\}
\end{align*}
It is clear that $H'(1) \uplus H'(2)$ is a partition of $\{a_1,\ldots,a_{n-1}\}$, as required.
Finally,
  we need to show that $(q_1,H') \trdelta{a_n} (q_0,H)$ belongs to the configuration graph.
If $a_n \notin \{a_1,\ldots,a_{n-1}\}$,
  then this is true because of $q_1 \trdelta{\empti,1} q_0$;
if $a_n \in \{a_1,\ldots, a_{n-1}\}$,
  then this is true because of $q_1 \trdelta{1,1} q_0$.
\end{example}

\begin{example}\label{ex:HRA-regs-yes}
The language
\begin{align*}
\LL_6 \;=\; \{\,a_1\ldots a_n\mid\text{$a_i = a_{i+1}$ iff $i$ is odd}\,\}
\end{align*}
is recognized by both of the following automata:
\begin{center}
\begin{tikzpicture}[automaton,baseline=0pt]
\node[state,initial,accepting] (q0) {$q_0$};
\node[state,accepting] (q1) [right=25mm of q0] {$q_1$};
\path[transition]
  (q0) edge[bend left] node[above]{$\scriptstyle \empti,1$} (q1)
  (q1) edge[bend left] node[below]{$\scriptstyle 1,1$} (q0);
\end{tikzpicture}
\hfil
\begin{tikzpicture}[automaton,baseline=0pt]
\node[state,initial,accepting] (q0) {$q_0$};
\node[state,accepting] (q1) [right=25mm of q0] {$q_1$};
\node (0) [right=25mm of q1] {};
\node[state,accepting] (q2) [above=10mm of 0] {$q_2$};
\node[state,accepting] (q3) [below=10mm of 0] {$q_3$};
\path[transition]
  (q0) edge node[above]{$\scriptstyle \empti,1$} (q1)
  (q1) edge node[below]{$\scriptstyle 1,2$} (q2)
  (q2) edge[bend right] node[above]{$\scriptstyle \empti,1 \,/\, 3,1$} (q1)
  (q1) edge node[above]{$\scriptstyle 1,3$} (q3)
  (q3) edge[bend left] node[below]{$\scriptstyle \empti,1 \,/\, 2,1$} (q1)
;
\end{tikzpicture}
\end{center}
The one on the left is a $(0,1)$-HRA, while the one on the 
right is a $(3,0)$-HRA. As in the previous example, the fact that the $(3,0)$-HRA accepts~$\LL_6$ is not immediately clear.
%\nt{I rewrote these sentences as I tend to dislike repetition in prose.}
Informally, the reason is the following invariant:
\begin{itemize}
\item $H(1) \uplus H(2) \uplus H(3)$ is a partition of the names seen so far;
\item if the state is $q_k$ then the last seen name is in $H(k)$, for $k\in\{1,2,3\}$;
\item $|H(1)|=1$ in $q_1$, and $|H(1)|=0$ otherwise; and
\item
  for each partition $H(1) \uplus H(2) \uplus H(3)$
    that is compatible with the previous constraints,
  there is a nondeterministic run that realises it.
%\nt{I'm not sure this invariant holds. it seems to suggest that we can put names $H(2)$ and $H(3)$ however we want. Also, some text after the itemize environment would be nice.}
\end{itemize}
The first two points hold for the same reasons as in \autoref{ex:HRA-regs-no}.
The third point holds because
  all incoming transitions of~$q_1$ insert a name in $H(1)$,
  all outgoing transitions of~$q_1$ remove a name from $H(1)$,
  and all runs alternate between state $q_1$ and some other state.
After an odd number of names was processed,
  the automaton is in state~$q_1$ and $H(1)$ contains only the last name seen.
All outgoing transitions from~$q_1$ accept a name only if it is in $H(1)$\,---\,%
  in other words, if it equals the last seen name.
After an even and positive number of names was processed,
  the automaton is in state $q_2$~or~$q_3$.
All outgoing transitions from~$q_2$ accept a name only if it is \emph{not} in $H(2)$,
  where the last seen name is;
  $q_3$ acts symmetrically.
Thus, the first three points let us establish
  that all words accepted by the automaton on the right belong to~$\LL_6$.

As in \autoref{ex:HRA-regs-no},
  the last point lets us establish that all word in $\LL_6$ are accepted.
Given that the proof of the last point is very similar to the one in \autoref{ex:HRA-regs-no},
  let us only sketch it.
Given a word $a_1\ldots a_n \in \LL_6$ with $n>0$,
  we consider an arbitrary partition $H(1) \uplus H(2) \uplus H(3)$
    of the set $\{a_1,\ldots,a_n\}$
  such that $H(1) \subseteq\{a_n\}$.
Let $k$ be such that $a_n \in H(k)$.
Formally, the claim of the last point is that there exists a run of the automaton on the right
  that is labelled by the word $a_1\ldots a_n$ and ends in the configuration $(q_k,H)$.
The case $n$ odd and~$>1$ is similar to the previous example:
We pick $k'$ such that $a_{n-1} \in H(k')$ and
\begin{align*}
  H'(1) &\defeq \emptyset
&
  H'(k') &\defeq H(k') \setminus \{a_n\}
&
  H'(5-k') &\defeq
  \begin{cases}
  H(5-k') \setminus \{a_n\}\! &\!\text{if $a_n \notin \{a_1,\ldots,a_{n-1}\}$} \\
  H(5-k') \cup \{a_n\}\! &\!\text{if $a_n \in \{a_1,\ldots,a_{n-1}\}$}
  \end{cases}
\end{align*}
Then we invoke the induction hypothesis to show there is a run that accepts $a_1\ldots a_{n-1}$
  and ends in configuration $(q_{k'},H')$.
In the case $n$~even, we pick
\begin{align*}
  H'(1) &\defeq \{a_n\}
&
  H'(2) &\defeq H(2) \setminus \{a_n\}
&
  H'(3) &\defeq H(3) \setminus \{a_n\}
\end{align*}
and then invoke the induction hypothesis to show that there is a run
  that accepts $a_1\ldots a_{n-1}$ and ends in configuration $(q_1,H')$.
We skip the case $n=1$ in this proof sketch.
\end{example}

\medskip

We are now ready for the general result, which we prove in two steps.
The main construction will be presented first and only applies to HRAs with initially empty registers.
(The correctness of the main construction requires that certain graphs are $2$-colourable.
The arguments in the previous two examples can be seen as giving explicit colouring algorithms
  that work in special cases.)
At a second stage, we show how to initially simulate nonempty registers at the expense of some more additional histories.

\begin{proposition}\label{prop:nrHRA-regs}
Let $\AA=\langle Q,q_0,H_0,\delta,F\rangle$ be an $(m,n)$-non-reset-HRA
  with registers initially empty.
We can construct an $(m+3n,0)$-non-reset-HRA $\AA'=\langle Q',q'_0,H'_0,\delta',F'\rangle$
  that accepts the same language as~$\AA$.
We have $|Q'|\in O(2^{2n}\cdot|Q|)$ and $|\delta'|\in O(2^{3.3n}\cdot|\delta|)$.
%The construction can be done in $O\bigl((m+n)|\delta'|\bigr)$ time.
%\nt{Could we have the bounds in the form $2^x$, instead of $x^n$? Also, I am not sure the time bound is necessary here.}
% RG: The time bound is not needed. I'll leave it commented for a little while.
\end{proposition}

\begin{proof}
{Each register~$i$ will be simulated by three histories,
  named $i_{\rm R}$, $i_{\rm B}$ and $i_{\rm Y}$ respectively.%
\footnote{%
  B and Y are the black and yellow colours of~\cite{CMA};
  R stands for `read'.
}
Each state $q \in Q$ will be simulated by several states $(q,f) \in Q'$,
  where $f : [m+1,m+n] \to \{\varnothing,{\rm R}, {\rm B}, {\rm Y}\}$.
The construction will ensure the following invariant:
\begin{itemize}
\item $H(i_{\rm R}) \uplus H(i_{\rm B}) \uplus H(i_{\rm Y})$
is a partition of the names that have been written to register~$i$ and have subsequently been rewritten by other names or are still in register~$i$;\footnote{note that a name can also be transferred out of register $i$ (via transition with label $X,Y$ where $i\in X\setminus Y$), instead of being directly rewritten, in which case we would not store it in
$H(i_{\rm R}) \uplus H(i_{\rm B}) \uplus H(i_{\rm Y})$.}
\item $|H(i_{\rm R})| = 1$ if $f(i)={\rm R}$, and $|H(i_{\rm R})|=0$ otherwise; and
\item if $f(i)\not=\varnothing$ then the current name of register~$i$ is in $H(i_{f(i)})$; register $i$ is empty otherwise.
\end{itemize}
Thus, according to the last point above, $f$ records in which of histories $i_{\rm R},i_{\rm B},i_{\rm Y}$ has the current name of register $i$ been stored. We shall instrument $\AA'$ in such a way that it will only store that name, say $a$, in $i_{\rm R}$ if the next time register $i$ is being invoked by $\AA$ is for reading $a$. 
This way we shall ensure that $i_{\rm R}$ never contains more than one name. 
Otherwise, i.e.\ if $\AA$ next invokes $i$ for overwriting its contents, $f$ will be mapping $i$ to one of $i_{\rm B},i_{\rm Y}$. These can be seen as garbage collecting histories: they contain all names that have passed from register $i$ and will not be immediately read from it. The reason why we need two of these, $i_{\rm B}$ and $i_{\rm Y}$, is to be able to reuse old names of register $i$ without running the risk of confusing them with its current name $a$.}

To simulate one transition $q \trdelta{X,X'} q'$, {accepting say a name $a$,}
  we shall use several transitions of the form $(q,f) \trdelta{Z,Z'} (q',f')$
where 
$f$~and~$f'$ agree {outside~$X\cup X'$.}
Let us consider an arbitrary such pair $(f,f')$, and see how to pick $Z$~and~$Z'$.
On histories, $X$~and~$Z$ coincide: $X \cap [m] = Z \cap [m]$.
For each register $i \in X \setminus [m]$,
  we need that {$a$} be equal to the name currently written in register~$i$.
But, we know the current name in register~$i$ only if $f(i)={\rm R}$.
That is why we include a transition $(q,f)\trdelta{Z,Z'}(q',f')$
  only if {$X \setminus [m]\subseteq f^{-1}({\rm R})$},
  and we include $\{\,i_{\rm R}\mid i \in X \setminus [m]\,\}$ in~$Z$.
For each register $i \in [m+1,m+n] \setminus X$,
  we must ensure that {$a$} is not equal to the current name in register~$i$, which resides in $H(i_{f(i)})$.
Hence, we pick all $Z$ such that
\[
  Z = (X \cap [m]) \uplus Z_1 \uplus Z_0
\]
where
 $ Z_1 = \{\,i_{\rm R}\mid i\in X \setminus [m]\,\}$ and 
$ Z_0 \subseteq \{\,i_{x}\mid i \in [m+1,m+n] \setminus X\land x\in\{{\rm B,Y}\}\land x\not=f(i)\,\}$ is such that, for all $i$, $|Z_0\cap\{i_{\rm B},i_{\rm Y}\}|\leq1$.
Now we must write the current name to histories $X' \cap [m]$,
  and we must simulate writing the current name to registers $X' \setminus [m]$.
For each $i \in X' \setminus [m]$ we shall \emph{nondeterministically} write the current name
  to one of $i_{\rm R}, i_{\rm B}, i_{\rm Y}$ by \emph{guessing} whether register $i$ will be used next for reading or not.
The place where we write is given by $f'(i)$,
that is,
\begin{align*}
  Z' = (X' \cap [m]) \uplus \{\,i_{f'(i)}\mid i\in X' \setminus [m]\,\}.
\end{align*}
Finally, we make sure that all $i\in ([m+1,m+n]\cap X)\setminus X'$ in $f'$ are mapped to $\varnothing$, as these registers are now empty, i.e.\ we impose $f'(([m+1,m+n]\cap X)\setminus X')\subseteq\{\empti\}$.

Thus, in summary, we take $Q'=Q\times([m+1,m+n]\to\{\varnothing,{\rm R,B,Y}\})$ and:
\begin{itemize}
\item $q_0'=(q_0,\{(i,\varnothing)\mid i\in[m+1,m+n]\})$ and $F'=F\times([m+1,m+n]\to\{\varnothing,{\rm R,B,Y}\})$;
\item $H_0'=H_0\cup\{(i,\empti)\mid i\in[m+n+1,m+3n]\}$;
\item we include $(q,f)\trdelta{Z,Z'}(q',f')$ in $\delta'$ just if there is some $q\trdelta{X,X'}q'$ in $\delta$ such that:
\begin{itemize}
\item $X\setminus[m]\subseteq f^{-1}(\rm R)$, 
\item for all $i\in[m+1,m+n]\setminus(X\cup X')$, $f'(i)=f(i)$,
\item for all $i\in ([m+1,m+n]\cap X)\setminus X'$, $f'(i)=\varnothing$,
\item  $Z = (X \cap [m])\uplus \{i_{\rm R}\mid i\in X \setminus
  [m]\}\uplus Z_0$ \\\phantom{$Z={}$} with $ Z_0 \subseteq \{i_{x}\mid i \in [m+1,m+n] \setminus X \land x\not=f(i)\}$,
\item $Z' = (X' \cap [m]) \uplus \{\,i_{f'(i)}\mid i\in X' \setminus [m]\,\}$.
\end{itemize}
\end{itemize}
\cutout{
(Observe that we nondeterministically guess $f^{-1}({\rm R})$.
If we guess right the simulation continues; if we guess wrong the simulation gets stuck.
Similarly, by nondeterministically writing in one of $i_{\rm B}, i_{\rm Y}$,
  we guess which old name was overwritten and is not the in register~$i$ any more,
  to implement disequality checks later on.)
}%
Let us now see what is the size of the HRA~$\AA'$ so constructed.
We have $|Q'|\in O(4^n \cdot |Q|)$.
For each transition $q \trdelta{X,X'} q'$ in~$\AA$,
  we introduce several transitions $(q,f) \trdelta{Z,Z'} (q'f')$ in~$\AA'$.
Let us count how many.
There are $\le 4^n$ choices for~$f$;
  there are $\le 3^n$ choices for~$Z_0$
    because for each $i$ we pick~$i_{\rm B}$, or~$i_{\rm Y}$, or none of them; and
  there are $\le 3^n$ choices for~$f'$
    because $f'(X')\subseteq\{i_{\rm R},i_{\rm B},i_{\rm Y}\}$
    and $f'$~is uniquely determined outside~$X'$.
In summary, $|\delta'|\in O(2^{2(1+\log_2 3)n} \cdot |\delta|)$.

{Finally, we show that $\LL(\AA)=\LL(\AA')$. 
Let first $w\in\LL(\AA')$ have an accepting transition path $p'$ in $\AA'$ with edges $(q_k,f_k)\trdelta{Z_k,Z_k'}(q_{k{+}1},f_{k{+}1})$, for $k=1,\ldots,N$.
Reading the definition of $\delta'$ backwards, this yields an accepting transition path $p$ in $\AA$ with edges $q_k\trdelta{X_k,X_k'}q_{k{+}1}$ where
\begin{itemize}
\item $X_k=(Z_k\cap[m])\cup\{\,i\mid i_{\rm R}\in Z_k\,\}$,
\item $X_k'=(Z_k'\cap[m])\cup\{\,i\mid \{i_{\rm R},i_{\rm B},i_{\rm Y}\}\cap Z_k'\not=\emptyset\,\}$.
\end{itemize}
To see that $p$ accepts $w$, suppose that $p'$ yields a sequence of configurations $((q_k,f_k),H_k')$.
Then, by induction, we can show that $p$ yields a sequence of configurations $(q_k,H_k)$, where:
\begin{itemize}
\item for all $i\in[m]$, $H_k(i)=H_k'(i)$; 
\item for all $i\in[m+1,m+n]$, if $f_k(i)\not=\varnothing$ then $H_k(i)=\{a\}$ for some $a\in H_k'(i_{f_k(i)})$, otherwise $H_k(i)=\empti$;
\item for all names $a$, if $a\in H_k'@Z_k$ then $a\in H_k@X_k$.
\cutout{for all $x\in X_k$:
\begin{itemize}
\item if $x=i_{\rm R}$ then $H_k(i_{\rm R})=H_k'(i)=\{a\}$ and $i\in X_k$;
\item if $x=i_{\rm B}$ then $f_k(i)=\rm Y$ and therefore $a\notin H_k'(i)$, and also $i\notin X_k$; dually of $x=i_{\rm Y}$;
\item if $x\in[m]$ then $a\in H_k'(x)$.
\end{itemize}}
\end{itemize}
Hence, $w\in\LL(\AA)$.
\\
Conversely, let $w=a_1\cdots a_{N}\in\LL(\AA)$ have an accepting transition path $p$ in $\AA$ with edges $q_k\trdelta{X_k,X_k'}q_{k{+}1}$ for $k=1,\ldots,N$. We construct a corresponding accepting path $p'$ in $\AA'$ with edges $(q_k,f_k)\trdelta{Z_k,Z_k'}(q_{k{+}1},f_{k{+}1})$ as follows. We have that $f_0=\{(i,\empti)\ |\ i\in[m]\}$. Moreover, $Z_k=(X_k\cap[m])\cup W_k$ and 
$Z_k'=(X_k'\cap[m])\cup W_k'$ where:
\begin{enumerate}[label=\({\alph*}]
%\item For each position $k$ such that in all previous appearances of $a_k$ in $w$, say as $a_k=a_{k'}$ with $k'<k$ (if any), $a_{k'}$ is not stored in registers (i.e.~$X_{k'}'\subseteq[m]$), we set $W_k=\empti$. 
\item For each position $k$ such that the previous appearance of $a_k$ in $w$ is some $a_{k'}=a_k$ with $k'<k$, we set $W_k=W_{k'}'$. If there is no previous appearance, we set $W_k=\empti$.
\item For each position $k$ and $i\in X_k'\setminus[m]$ such that the next appearance of $a_k$ in $w$ is some $a_{k'}$ with $i\in X_{k'}$, we include $i_{\rm R}$ in $W_k'$.
\item For each position $k$ and $i\in X_k'\setminus[m]$ such that the next appearance of $a_k$ in $w$ is some $a_{k'}$ with $i\notin X_{k'}$, we include in $W_k'$ one of $i_{\rm B},i_{\rm Y}$. We do the same also if there is no next appearance of $a_k$ in $w$.
\end{enumerate}
The above specifications determine the values of all $W_k,W_k'$, modulo the choice between ${\rm B}$ and $\rm Y$ in case (c). 
Clearly, if the path $p'$ can be so constructed then $\AA'$ accepts $w$. It remains to show that $p'$ can indeed be implemented in $\AA'$.
The form of the $f_k$'s is derived from (a-c) according to the definition of $\delta'$. But note that the definition of $\delta'$ imposes the following condition:
\begin{enumerate}[label=\({\alph*}]
\item[\(d] For each position $k$ and $i_{\rm Y}\in W_k'$ such that the next appearance of any $i_x$ in $p'$ is in some $W_{k'}$ (with $k<k'$), we must have $i_{\rm B}\in W_{k'}$. Dually if $i_{\rm B}\in W_k'$.
\end{enumerate}
For example,
  if $i_{\rm Y} \in W'_2$
    but none of $i_{\rm B}, i_{\rm Y}, i_{\rm R}$ occurs in any of $W_3,W'_3,W_4,W'_4,W_5,W'_5$,
  then $\{i_{\rm B},i_{\rm Y},i_{\rm R}\}\cap W_6\subseteq\{i_{\rm B}\}$.
This condition stems from the interdiction to include $i_{f(i)}$ in $W_{k'}$ when $f(i)\not=i_{\rm R}$.
The consequence of condition~(d) is that in case~(c) above
  we cannot pick ${\rm B}$~and~${\rm Y}$ arbitrarily.
%Thus, condition (d) restricts the ways in which we can pick $\rm B$ and $\rm Y$ in case (c) above. 

We need to show that a choice of ``colours'' ($\rm B$ and $\rm Y$) satisfying both (c) and (d) can be made. We achieve this by applying a graph colouring argument.
Let us define a labelled graph $\mathcal{G}$ with:
\begin{itemize}
\item Vertices $(k,i)$ and $(k,i)'$ for each $k\in[0,N-1]$ and $i\in[m+1,m+n]$;
\item For each $i$ and $k<k'$ as in (c) above, an edge between $(k,i)'$ and $(k',i)$ labelled with ``$=$''.
\item For each $i$ and $k<k'$ as in (d) above, an edge between $(k,i)'$ and $(k',i)$ labelled with ``$\not=$''.
\end{itemize}  
Then, a valid choice of colours can be made as long as $\mathcal{G}$ can be coloured with $\rm B$ and $\rm Y$ in such a way that $=$-connected vertices have matching colours, while $\not=$-connected vertices have different colours. For the latter, it suffices to show that the graph obtained by merging $=$-connected vertices can be 2-coloured, for which it is enough to show that $\mathcal{G}$ contains no cycles. Suppose $\mathcal{G}$ contained a cycle. Then, by definition of the edge relation of the graph, it must be the case that the leftmost vertex (i.e.\ the one with the least $k$ index) in the cycle be some $(k,i)'$.
The vertex $(k,i)'$ has two outgoing edges, one for each label. The $\not=$-edge in particular connects to some $(k',i)$ such that $k<k'$, obtained from condition (d). Since $(k',i)$ is part of the cycle, it must have an outgoing $=$-edge to some vertex $(k'',i)'$ with $k''<k'$. But note that condition (d) stipulates that there is no mention of $i$ between $k$ and $k'$ in $p'$, and therefore $k''\leq k$. Moreover, $k''=k$ is not an option as it would imply that register $i$ was not rewritten between steps $k$ and $k'$ in $p$, in which case $k$ and $k'$ would fall under case (b) above. Hence, $k''<k$ which contradicts our assumption that $(k,i)'$ was the leftmost vertex in the cycle.}
\cutout{
our above labelling is correct and, in particular, that there is a valid choice of labels from $\{i_b,i_y\}$ in step (c) above. By a valid choice we mean one such that if, for instance, $Y_k'=i_b$ then $f_{k'}(i)\not=b$ and therefore $X_{k'}'$ can correctly pick $a_k$. By definition, the value of $f_{k'}(i)$ is determined by the rightmost position $l$, with $k<l<k'$, which assigns a name in either of $i_r,i_b,i_y$ (since $a_k$ becomes locally fresh from state $q_k$ to $q_{k'}$, such an $l$ always exists). Our constraint, therefore, (in this case) is that $Y_l\not=b$. In particular, if position $l$ falls under case (c) above, we need to choose $Y_l=y$. Thus, it suffices to colour all our positions which fall under case (c) with colours from $b,y$, such that no inter-related $q_k$ and $q_l$ have the same colour. We claim that such a colouring is always possible. Indeed, we can build a graph $\mathcal{G}$ as follows. The nodes of $\mathcal{G}$ are the positions which fall under case (c), arranged on a line in ascending order, from left to right. Now, for each inter-related positions $k$ and $l$ as above, we add a directed edge from node $k$ to node $l$. Then, a valid colouring of our positions is possible iff $\mathcal{G}$ is 2-colourable, i.e.~it contains no cycles. Now note that each node in $\mathcal{G}$ has at most one outgoing edge, and all edges go from left to right (above, $k<l$). Hence, $\mathcal{G}$ is 2-colourable and we can therefore build a valid path $p'$ in $\AA'$ accepting $w$.
}
\end{proof}

\cutout{The construction in the proof above could be extended
  such that $\AA'$ contains reset transitions if and only if $\AA$ contains reset transitions.
However,
  instead of complicating the construction from above,
  we can simply use \autoref{prop:regs_his}
    to handle the case in which $\AA$ contains reset transitions.}

Note that the above result can be extended
  to handle the case in which registers are not initially empty by simply 
 making use of \autoref{lem:forclos}. However, the construction in that lemma leads to a doubly exponential blow-up in size, which we can be avoided by the alternative approach that follows.

\begin{proposition}\label{prop:empty-init-histories}
Let $\AA=\langle Q,q_0,H_0,\delta,F\rangle$ be an $(m,n)$-non-reset-HRA.
We can construct a bisimilar $(m+n,n)$-non-reset-HRA $\AA'=\langle Q',q'_0,H'_0,\delta',F'\rangle$
such that, for all $i\in[m+n+1,m+2n]$, $H_0(i)=\empti$.
Moreover, we have $|Q'|\in O(2^n\cdot|Q|)$ and $|\delta'|\in O(2^{2n}\cdot|\delta|)$.
%The construction can be done in $O\bigl((m+n)|\delta'|\bigr)$ time.
\end{proposition}
\begin{proof}
The main idea behind the construction of $\AA'$ is to use the additional $n$ histories to store just the initial names of the registers in $\AA$. Once these names have been used in the computation, they are transferred to their actual registers (if any). We will also need to track which of the registers in $\AA$ are still simulated by histories in $\AA'$. Thus, we set \[
Q' = Q\times([m+1,m+n]\to\{0,1\})
\]
and $q_0'=(q_0,\{(i,1)\mid i\in[m+1,m+n]\})$, $H_0'=H_0\cup\{(m+n+i,\empti)\mid i\in[n]\}$ and $F'=F\times([m+1,m+n]\to\{0,1\})$. Moreover, for each $q\trdelta{X,X'}q'$ in $\delta$ and map $f$, we include in $\delta'$ a transition $(q,f)\trdelta{Y,Y'}(q',f')$ where:
\begin{itemize}
\item $Y=(X\cap[m])\uplus Y_1\uplus Y_0$, where 
\[
Y_1=\{\,i\in X\mid f(i)=1\,\}\cup
\{\,n+i\mid i\in X\land f(i)=0\,\}
\]
and $Y_0\subseteq \{\,i\in[m+1,m+n]\mid i\notin X\land f(i)=0\,\}$;
\item $Y'=(X'\cap[m])\cup\{\,n+i\mid i\in X'\setminus[m]\,\}$;
\item $f'=f[i\mapsto 0\mid i\in X\cup X']$.
\end{itemize}
Then, taking $R$ to be the relation:
\begin{align*}
R=\{\,
((q,H),((q,f),H')\mid&\;
H\upharpoonright[m]=H'\upharpoonright[m]
\land \forall i\in[m+1,m+n].\\
&\land f(i)=1\implies H'(n+i)=\empti\land H'(i)=H(i)\\
&\land f(i)=0\implies H'(n+i)=H(i)\\
&\land \forall j\in[m+1,m+n].\,H'(i)\cap H'(n+j)=\empti
\,\}
\end{align*}
we can show that $R$ is a bisimulation.

Let us now see what is the size of the HRA~$\AA'$ we constructed.
We have $|Q'|\in O(2^n \cdot |Q|)$.
For each transition $q \trdelta{X,X'} q'$ in~$\AA$,
  we introduce several transitions $(q,f) \trdelta{Y,Y'} (q'f')$ in~$\AA'$:
there are $\le 2^n$ choices for~$f$;
  and there are $\le 2^n$ choices for~$Y_0$.
In summary, $|\delta'|\in O(2^{2n} \cdot |\delta|)$.
\end{proof}

Hence, the general case follows.

\begin{corollary}\label{cor:nrHRA-regs}
Let $\AA=\langle Q,q_0,H_0,\delta,F\rangle$ be an $(m,n)$-non-reset-HRA.
We can construct an $(m+4n,0)$-non-reset-HRA $\AA'=\langle Q',q'_0,H'_0,\delta',F'\rangle$
  that accepts the same language as~$\AA$.
We have $|Q'|\in O(2^{3n}\cdot|Q|)$ and $|\delta'|\in O(2^{5.3n}\cdot|\delta|)$.
%\nt{Need to replace $X$ with the bound from previous proposition}
%The construction can be done in $O\bigl((m+n)|\delta'|\bigr)$ time.
\end{corollary}

% >>>
\subsection{Simulating Registers Symbolically} % <<<

So far we saw how to simulate registers using histories.
If we are interested only in emptiness/reachability rather than language equivalence,
we can actually simulate the behaviour of registers without the inclusion of additional histories. 
This alternative is going to be crucial in \autoref{sec:unary}, where the number of histories will be fixed to just one.

We next describe how this simulation can be done. Given an assignment $H$ with $m$ histories and $n$ registers, we can represent $H$ \emph{symbolically} as follows:
\begin{itemize}
\item we map each name stored in the registers of $H$ to a number from the set $[n]$;
\item we subsequently replace in $H$ all these names by their number.
\end{itemize}
For example,
consider the assignment
\begin{align*}
    \bigl\{
      1\mapsto\{a,b,c\},\,
      2\mapsto\{d\},\,
      3\mapsto\empti,\,
      4\mapsto\{a\},\,
      5\mapsto\{d\}
    \bigr\}
\end{align*}
of a $(1,4)$-HRA\null.
We can simulate it symbolically by mapping $d$ to $1$, and $a$ to $2$. This results to a symbolic representation:
\begin{align*}
      \bigl\{
        1\mapsto\{2,b,c\},\,
        2\mapsto 1,\,
        3\mapsto\empti,\,
        4\mapsto 2,\,
        5\mapsto 1
      \bigr\}
\end{align*}
where the nominal part has been curtailed to the fact that $H(1)$ contains the names $b$ and $c$.

We can now employ this representation technique to represent configurations of $(m,n)$-HRAs by corresponding ones belonging to $(m,0)$-HRAs.
In particular, given the configuration
\begin{align*}
  \Bigl(
    q,\;
    \bigl\{
      1\mapsto\{a,b,c\},\,
      2\mapsto\{d\},\,
      3\mapsto\empti,\,
      4\mapsto\{a\},\,
      5\mapsto\{d\}
    \bigr\}
  \Bigr)
\end{align*}
of a $(1,4)$-HRA, we
map it to the configuration
\begin{align*}
  \Bigl(
    \bigl(
      q,
      \bigl\{
        1\mapsto\{2\},\,
        2\mapsto 1,\,
        3\mapsto\empti,\,
        4\mapsto 2,\,
        5\mapsto 1
      \bigr\}
    \bigr), \;
    \bigl\{
      1\mapsto\{b,c\}
    \bigr\}
  \Bigr)
\end{align*}
of a $(1,0)$-HRA which incorporates the non-nominal part of our representation scheme in its state.
Clearly, the state space of the new automaton in this simulation will experience an exponential blowup, as the next result shows. However, no additional histories will be needed, which is the main target here.

\begin{proposition}\label{prop:remove-regs-withstates}
Let $\AA=\langle Q,q_0,H_0,\delta,F\rangle$ be an $(m,n)$-HRA.
We can construct an $(m,0)$-HRA $\AA'=\langle Q',q'_0,H'_0,\delta',F'\rangle$
  that is empty if and only if $\AA$~is empty.
We have $|Q'| \in O(2^{mn} n B_n |Q|)$ and $|\delta'| \in O(2^{mn} n B_n |\delta|)$,
  where $B_n$~is the $n$th Bell number.
Moreover, $\AA'$ contains reset transitions if and only if $\AA$~contains reset transitions.
%The construction takes $O\bigl((m+n)(|Q'|+|\delta'|)\bigr)$ time.\nt{Same comment as before on time}
\end{proposition}
\begin{proof}
Each state $q \in Q$ will be simulated by several states $(q,f) \in Q'$,
  where $f : [m + n] \to \PP([n])$ will be called an assignment \emph{skeleton}.
Such a skeleton~$f$ 
is \emph{valid} when:
\begin{itemize}
\item $|f(i)| \le 1$ for registers $i \in [m+1,m+n]$;
\item $f(i)\subseteq\bigcup_{j=1}^n f(m+j)$ for histories $i \in [m]$;
\item for all $k \in [n]$ there is a $k' \in [k]$ such that $\bigcup_{i=1}^{k} f(m+i) = [k']$.
\end{itemize}
The latter condition essentially stipulates that $f$ is a partition function on the set $[m+1,m+n]$: the elements of the set are uniquely assigned numbers which can be seen as class indices\,---\,two elements are assigned the same number iff they belong to the same class. There is also a special class in this partition, namely of all elements of $[m+1,m+n]$ to which $f$ assigns $\empti$.

We can now define the rest of $\AA'$. First, we let
$q_0'=(q_0,f_0)$, where $(f_0,H_0')$ is the symbolic representation of $H_0$.
In order to construct $\delta'$
we define a transition relation on skeletons,
  which is very similar to the configuration graph of HRAs (\autoref{def:confgraph})
  except that it allows symbols to be permuted after the transition is taken.
We write $f \trdelta{X,X'} f'$ when
  there exists a permutation~$\pi$ on $[n]$ and a~$k\in[n]$ such that
  $k \in f \at X$ and $f'=\pi\circ(f[\move{k}{X'}])$.
We write $f \trdelta{X} f'$ when
  there exists a permutation~$\pi$ such that $f'=\pi\circ(f[X \mapsto \empti])$.

To simulate one transition of the form $q \trdelta{X,X'} q'$ from~$\delta$,
  we use several transitions of the form $(q,f) \trdelta{\ell} (q',f')$ in~$\delta'$.
Let us consider an arbitrary pair $(f,f')$ of valid skeletons, and see how to pick~$\ell$.
There are four cases, depending on whether $X$~and~$X'$ mention or not registers.
\begin{itemize}
\item Case $X \subseteq [m]$ and $X' \subseteq [m]$.
  It must be that $f \trdelta{\empti,\empti} f'$, and we pick $\ell=(X,X')$.
\item Case $X \subseteq [m]$ and $X' \not\subseteq [m]$.
  It must be that $f \trdelta{\empti,X'} f'$, and we pick $\ell=(X,\empti)$.
\item Case $X \not\subseteq [m]$ and $X' \subseteq [m]$.
  It must be that $f \trdelta{X,\empti} f'$, and we pick $\ell=(\empti,X')$.
\item Case $X \not\subseteq [m]$ and $X' \not\subseteq [m]$.
  It must be that $f \trdelta{X,X'} f'$, and we pick $\ell=(\empti,\empti)$.
\end{itemize}
Similarly, each reset transition $q \trdelta{X} q'$ from~$\delta$ yields
several transitions of the form $(q,f) \trdelta{Z} (q',f')$ in~$\delta'$.
Given an arbitrary pair $(f,f')$ of valid skeletons, we pick~$Z$ as follows.
If $X \subseteq [m]$ then it must be that $f'=f$ and we pick $Z=X$.
Otherwise, $f \trdelta{X} f'$ and we pick $Z=\empti$.

To estimate $|Q'|$ it suffices to count how many valid skeletons there are.
The values $f(m+1),\ldots,f(m+n)$ of a valid skeleton
  correspond to a partition of the registers
  and a selection of a class (if any) whose registers are empty.
There are $B_n$ possible partitions and $\le(n+1)$ possible selections,
  which gives $\le (n+1)B_n$ cases.
For the values $f(1),\ldots,f(m)$ of a valid skeleton
  there are $\le 2^{mn}$ possibilities.
In total, $|Q'|\le 2^{mn} (n+1) B_n |Q|$.

%The $[m+1,m+n]$ component of each skeleton $f$ corresponds to just such a partition and a selection of which class (if any) is special ($f$ assigns $\empti$ to all its elements). This gives $\le (n+1)B_n$ cases. Combined with the 
%$2^{mn}$ choices for $f\upharpoonright[m]$, we obtain $2^{mn}(n+1)B_n$.The $n$ registers can be partitioned in equivalence classes in $B_n$ ways,
%  by the definition of the Bell number.
%We can identify these classes by the smallest register they contain.
%One of the classes might be empty: $\le n+1$ cases.\nt{I think what we want to say here is that one of the classes may have all its registers empty. I would suggest: ``The $[m+1,m+n]$ component of each skeleton $f$ corresponds to just such a partition and a selection of which class (if any) is special ($f$ assigns $\empti$ to all its elements). This gives $\le (n+1)B_n$ cases. Combined with the 
%$2^{mn}$ choices for $f\upharpoonright[m]$, we obtain $2^{mn}(n+1)B_n$.''
%\rg{I wrote this paragraph in a hurry, but, yes, that's what I wanted to say.
%I'll rewrite it. Did we define $\upharpoonright$?}}
%For each history, we need to specify which classes they contain:
%  $\le 2^n$~cases for each of the $m$~histories.
%In total, $|Q'|\le B_n (n+1) 2^{mn}$.
%\nt{I think the old computation was correct as well (but not as tight): we have $(n+1)!$ choices for $f\upharpoonright[m+1,m+n]$: $f$ can assign to $m+1$ one of two values ($\empti,\{1\}$), to $m+2$ on of 3 values, and so on, and to $m+n$ one of $n+1$ values. Combined with the $2^{mn}$ choices for $f\upharpoonright[m]$, we obtain $2^{mn}(n+1)!$}

To estimate $|\delta'|$, note that once $f$~is fixed in the construction above,
  the constraints on $f'$ determine it uniquely.
So, the number of transitions increases by the same factor as the number of~states.
\end{proof}

Since $\log B_n \in \Theta(n \log n)$,
  we have that $\log\bigl(2^{mn} (n+1) B_n\bigr) \in \Theta(mn + n\log n)$.
%\nt{Since this is going in the exponent, maybe big-Theta is not good enough and we shoud use asymptotically-$<$ instead?}
% >>>

% vim:spell:spelllang=en_gb:fmr=<<<,>>>:

\section{Emptiness and Universality}\label{sec:empty}

\subsection{Emptiness}

Here we show that deciding emptiness is \textsc{Ackermann}-complete.
We work by reducing from and to state reachability problems in counter systems (similarly e.g.~ to~\cite{DA,CMA}).
For the upper bound,
  we reduce nonemptiness of HRAs to control-state reachability of T-VASSs.
For the lower bound,
  we reduce control-state reachability of R-VASSs to nonemptiness of HRAs.
Recall that 
the \boldemph{nonemptiness problem} for HRAs asks, given an HRA~$\AA$ with initial state $q_0$ and initial assignment~$H_0$, whether
  $(q_0,H_0) \trdeltaa{w} (q_F,H_F)$ for some word~$w$, final state $q_F$ and assignment~$H_F$.

The configurations of a $k$-dimensional TR-VASS
  (Transfer--Reset Vector Addition System with States)
  have the form $(q,\vec{v})$,
  where $q$~is a state from a finite set,
  and $\vec{v}$~is a $k$-dimensional vector of nonnegative counters.
A VASS has moves that shift the counter vector,
  changing $\vec{v}$ into $\vec{v}+\vec{v}\,{}'$,
  where $\vec{v}\,{}'$~comes from some finite and fixed subset of~$\Z^k$.
An R-VASS also has moves that reset a counter,
  changing $\vec{v}$ into $\vec{v}[i\mapsto0]$ for some counter~$i$.
A T-VASS also has moves that transfer the content of one counter into another counter,
  changing $\vec{v}$ into $\vec{v}[j\mapsto \vec{v}(i)+\vec{v}(j)][i\mapsto0]$
  for some~$i \ne j$.
A TR-VASS is the obvious combination of the above, and is formally defined as follows.

\begin{definition}[TR-VASS]
A $k$-dimensional \boldemph{Transfer-Reset Vector Addition System with States}~$\AA$
   is a pair $\langle Q,\delta \rangle$,
   where $Q$ is a finite set of states,
   and $\delta\subseteq Q\times (\Z^k\uplus[k]^2\uplus[k])\times Q$ is a transition relation.
A \boldemph{configuration} of $\AA$ is a pair $(q,\vec{v})$
  of a state~$q$ and a vector $\vec{v}\in\N^k$ of counter values.
The \boldemph{configuration graph} of $\AA$ is constructed by including an arc $(q,\vec{v})\to(q',\vec{v}\,{}')$ when one of the following holds: 
\begin{itemize}
\item there is some $(q,\vec{v}\,{}'',q')\in\delta$ such that $\vec{v}\,{}'=\vec{v}+\vec{v}\,{}''$
\item there is some $(q,(i,j),q')\in\delta$ such that
  $\vec{v}\,{}'=\vec{v}[i\mapsto0][j\mapsto \vec{v}(i)+\vec{v}(j)]$ and $i \ne j$
\item there is some $(q,(i,i),q')\in\delta$ and $\vec{v}\,{}'=\vec{v}$
\item there is some $(q,i,q')\in\delta$ such that $\vec{v}\,{}'=\vec{v}[i\mapsto 0]$.
\end{itemize}
The \boldemph{control-state reachability problem} for $\AA$ asks whether, given states $q_0,q_F$ and initial vector $\vec{v}_0$, is there some $\vec{v}_F$ such that
  $(q_0,\vec{v}_0) \trdeltaa{} (q_F,\vec{v}_F)$.
\end{definition}

The reduction from a $(m,0)$-HRA to a T-VASS of dimension $2^m-1$ is done by mapping each nonempty set of histories $X$ into a counter $\widetilde{X}$ of the corresponding T-VASS. Then, name-accepting transitions are mapped into counter decreases and increases, while resets result in transfers between the counters.
Let $\widetilde{\,{\cdot}\,}:\PP\bigl([m]\bigr)\to[0,2^m-1]$ be a bijection
  such that $\widetilde{\emptyset}=0$;
  for instance, one could take $\widetilde{X}\defeq\sum_{i\in X} 2^{i-1}$.
Further, given an assignment~$H$, let $\widetilde{H}$ denote the vector
  $(h_1,\ldots,h_{2^m-1})\in\N^{2^m-1}$ such that $h_{\widetilde{X}}=|H@X|$,
  for all nonempty $X\subseteq[m]$;
that is, $h_{\widetilde{X}}$~counts how many names occur
  in exactly the histories indexed by~$X$.

\begin{lemma}\label{lem:hra-to-trvass}
Given an $(m,0)$-HRA~$\AA$ it is possible
  to construct a T-VASS~$\AA'$ of dimension $2^m-1$ such that,
  for all $q,q',H,H'$,
\[\text{
  $\exists w,\, (q,H) \trdeltaa{w}_{\AA} (q',H')$
  \qquad if and only if \qquad
  $(q,\widetilde{H}) \too_{\AA'} (q',\widetilde{H'})$.
}\]
Let $Q$~and~$\delta$ be the states and the transitions of~$\AA$,
  and let $Q'$~and~$\delta'$ be the transitions of~$\AA'$.
We have that $|Q'| \in O(2^m |Q|)$ and $|\delta'| \in O(2^m |\delta|)$.
Moreover, the construction takes $O(|Q'|+m|\delta'|)$ time.
If there are no reset transitions in~$\AA$,
  then $\AA'$ is a $|\delta|$-dimensional VASS with $Q'=Q$ and $|\delta'|=|\delta|$
  that uses only increments and decrements.
\end{lemma}
Let $\vec{0}$ be the all-zero vector $(0,\ldots,0)$.
Let $\vec{\delta}_i$ be $\vec{0}[i\mapsto1]$ for $i\in[m]$,
and $\vec{\delta}_0$ be $\vec{0}$.
\begin{proof}
For each transition $q \trdelta{X,X'} q'$ of the HRA, we construct a transition
  $q\xtrdelta{\vec{\delta}_{\widetilde{X}'}-\vec{\delta}_{\widetilde{X}}}q'$
  in the T-VASS\null.
For each transition $q \trdelta{X} q'$ of the HRA, we construct a path
\begin{align*}
  q
  \trdelta{1,j_1} \cdot
  \trdelta{2,j_2} \cdot
  \trdelta{3,j_3} \cdots
  \xtrdelta{2^m-2,j_{2^m-2}} \cdot
  \xtrdelta{2^m-1,j_{2^m-1}} q'
\end{align*}
  in the T-VASS such that $j_{\widetilde{Y}}=\widetilde{Y \setminus X}$.
To construct such a path we iterate through $2^m-1$ nonempty sets~$Y$,
  and for each we compute $Y \setminus X$ in $O(m)$~time.
\end{proof}

\autoref{lem:hra-to-trvass} implies that
nonemptiness of a HRA reduces to control-state reachability of a T-VASS.
We shall describe an algorithm that solves control-state reachability for the T-VASS
  constructed in \autoref{lem:hra-to-trvass}.
The analysis of this algorithm depends on the so-called Length Function Theorem,
  which is phrased in terms of the Fast Growing Hierarchy and bad sequences.
We define these next.

The Fast Growing Hierarchy
  consists of classes $\FF_0$, $\FF_1$, $\FF_2$, \dots\ of functions,
    where $\FF_0=\FF_1$ contain the linear functions,
    $\FF_2$ contains the elementary functions,
    primitive recursive functions are in $\FF_k$ for some finite~$k$,
    and $\FF_{\omega}$ is the \textsc{Ackermann} complexity class.
The classes $\FF_k$ are defined in terms of the following functions:
\begin{align*}
  F_0(x) &\defeq x + 1
&
  F_{n+1}(x) &\defeq (\underbrace{F_n\circ\cdots\circ F_n}_{\text{$x+1$ times}})(x)
    = F_n^{x+1}(x)
&
  F_{\omega}(x) &\defeq F_x(x)
\end{align*}
For $k \ge 2$,
  (a)~$f \in \FF_k$ if and only if $f \in O(F_k^n)$ for some~$n$; and
  (b)~a nondeterministic algorithm
      using space bounded by some function in~$\FF_k$
  can be transformed into a deterministic algorithm
    using time bounded by some (other) function in~$\FF_k$.

Let $X$ be a partially ordered set with some size function $|{\cdot}|:X\to\mathbb{N}$. 
We say that a sequence $x_0, x_1, x_2,\ldots$ of elements of $X$ is a \boldemph{bad sequence} when $x_i \not\le x_j$ for all $i < j$.
Given a strictly increasing function $g:\mathbb{N}\to\mathbb{N}$, we say that the sequence is \boldemph{controlled} by~$g$ when $|x_{i+1}| \le g(|x_i|)$ for all $i$.
We will consider such sequences of VASS configurations, where the order is given by
\begin{align*}
(q,\vec{v}) \le (q',\vec{v}')
  \qquad\text{iff}\qquad
q=q'
  \;\land\; \vec{v}(1) \le \vec{v}'(1)
  \;\land\; \vec{v}(2) \le \vec{v}'(2)
  \;\land\; \vec{v}(3) \le \vec{v}'(3)
  \;\land\; \ldots
\end{align*}

\begin{lemma}[Length Function Theorem \cite{algo-wqo}]\label{th:length-function}
Let $\hat{q}_0, \hat{q}_1, \hat{q}_2, \ldots$ be a bad sequence of $k$-dimensional VASS configurations.
If the sequence is controlled by some function $g\in\FF_{\gamma}$ with $\gamma \ge 1$,
  then its length is bounded by $f(|\hat{q}_0|)$ for some function $f\in\FF_{\gamma+k}$.
\end{lemma}

We can now describe and analyze an algorithm for deciding emptiness of a $(m,0)$-HRA\null.

\begin{proposition}\label{prop:emptiness-ub}
The emptiness problem for $(m,0)$-HRAs is in $\FF_{2^m}$ when $m>0$.
Thus, the emptiness problem is in~$\FF_{\omega}$ when $m$~is part of the input.
\end{proposition}
\begin{proof}
Let $\AA$ be the given HRA,
  and let $\AA'$ be the T-VASS constructed as in \autoref{lem:hra-to-trvass}.
We use the backward coverability algorithm~\cite[Sections 1.2.2 and 2.2.2]{algo-wqo},
  which explores all bad sequences
$
  (q_0,\vec{v}_0),\ (q_1,\vec{v}_1),\ \ldots,\ (q_L,\vec{v}_L)
$
such that
\begin{itemize}
\item $(q_0,\vec{v}_0)$ is a minimal final configuration, and
\item $(q_k,\vec{v}_k)$ is a minimal configuration out of those
  that can reach a configuration~$\ge (q_{k-1},\vec{v}_{k-1})$.
\end{itemize}
%\rg{In \cite[Section 1.2.2]{algo-wqo} one can find a correctness proof for the algo
%  when applied to WSTS\null.
%It is trivial to show that TR-VASS is WSTS, but I think this isn't explicit in~\cite{algo-wqo}.
%Maybe we should include it.}
The constraint that $(q_0,\vec{v}_0)$ is a minimal final configuration simply means
  that $q_0$~is final and $\vec{v}_0=\vec{0}$.
To construct such sequences,
  we need an effective way of generating all possible $(q_k,\vec{v}_k)$,
  given a fixed $(q_{k-1},\vec{v}_{k-1})$.
For this, we enumerate all transitions of $\AA'$ that go to~$q_{k-1}$.
There are two types of such transitions:
  $q_k \trdelta{\vec{\delta}_i-\vec{\delta}_j}q_{k-1}$ and $q_k\trdelta{i,j}q_{k-1}$.
For $q_k\trdelta{\vec{\delta}_i-\vec{\delta}_j}q_{k-1}$,
  we let $\vec{v}_k$ be $\max(\vec{v}_{k-1}-\vec{\delta}_i+\vec{\delta}_j,\vec{0})$,
  where $\max$~is taken pointwise.
For $q_k \trdelta{i,j} q_{k-1}$ with $i=j$, we take $\vec{v}_k$ to equal $\vec{v}_{k-1}$.
For $q_k \trdelta{i,j} q_{k-1}$ with $i \ne j$, we may have multiple choices for~$\vec{v}_k$.
Assuming $\vec{v}_{k-1}(i)=0$,
  it could be that $\vec{v}_k(i)$ is any of $0,1,\ldots,\vec{v}_{k-1}(j)$;
otherwise, if $\vec{v}_{k-1}(i) \ne 0$, the transition could not have been taken.
In all the cases from above,
  we keep only those choices of~$\vec{v}_k$ that ensure the sequence is bad.

To show that the sequences so constructed are finite, we use \autoref{th:length-function}.
Let $|(q,\vec{v})|$ be the number bits in a concrete representation of $(q,\vec{v})$:
  we encode $q$ with $\sim\log_2 |Q'|=m\log_2|Q|$ bits,
  then we write each $\vec{v}(i)$ in binary,
    precede each of its bits by~$1$ and mark the end with~$0$.
(For example, we represent $5$ by $1110110$.)
For the sequences constructed as in the previous paragraph,
  we have $|(q_k,\vec{v}_k)| \le 2 \cdot |(q_{k-1},\vec{v}_{k-1})|$.
Thus, the sequences are controlled by $g(x)=2x$, which is a function in~$\FF_1$.
As $\AA'$ has dimension $2^m-1$,
  \autoref{th:length-function} gives us
  that the length~$L$ of the sequence is bounded by some function in~$\FF_{2^m}$.

A nondeterministic algorithm can repeatedly guess the correct successor in the sequence,
  using $2^L \cdot |(q_0,\vec{0})|$ space.
If $m\ge 1$, then this is bounded by $f(m\log|Q|)$ for some function $f \in \FF_{2^m}$,
  and we are in a situation where
    the distinctions time\slash space and deterministic\slash nondeterministic
    are~irrelevant.
\end{proof}

It is possible to modify the algorithm described in the previous proof
  so that it works directly on the HRA representation,
  without appealing to \autoref{lem:hra-to-trvass}.
Similarly, it is possible to extend the algorithm described in the previous proof
  to handle registers directly, without appealing to \autoref{prop:regs_his}.
Such improvements may be worthwhile in an implementation,
  but the complexity upper bound remains Ackermannian.

\medskip

Doing the opposite reduction we show that deciding emptiness is \textsc{Ackermann}-hard
  even for strongly deterministic HRAs.
In this direction, each R-VASS of dimension~$m$ can be simulated by an $(m,0)$-HRA so that the value of each counter $i$ of the former is the same as the number of names appearing precisely in history $i$ of the latter.
In order to extend the bound to strongly deterministic HRAs one can choose to reduce from a restricted class of R-VASSs, so that the image of the reduction can be made strongly deterministic, or resolve nondeterminacy at the level of HRAs by appropriate obfuscation. We follow the latter, simpler solution.
%The only slight obstacle is in obtaining an HRA that is strongly deterministic, but the $n$ registers are used precisely to that effect, i.e.\ for resolving non-determinism.

\begin{proposition}\label{prop:R-VASS-to-HRA}
The emptiness problem for strongly deterministic HRAs is \textsc{Ackermann}-hard.
\end{proposition}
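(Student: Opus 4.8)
The plan is to reduce control-state reachability for R-VASSs to emptiness of strongly deterministic HRAs, which by Fact~\ref{fact:RVASS} yields the non-primitive recursive lower bound. Given an R-VASS $\AA=\langle Q,\delta\rangle$ of $m$ dimensions together with a source configuration $(q_0,\vec{v}_0)$ and a target state $q$, I would build an $(m,0)$-HRA $\AA'$ whose histories $1,\dots,m$ track the counter values: the invariant is that if $\AA'$ reaches configuration $(q',H)$ by reading some word then $\AA$ can reach $(q',\vec{v})$ with $v_i=|H\at\{i\}|$ for each $i$, and conversely. The initial assignment $H_0'$ puts $v_{0,i}$ many distinct names into history $i$ (each name living in exactly that one history), and I add a fresh final state $q_F\in F$ together with an $\epsilon$-transition (a reset transition labelled $\emptyset$, which does nothing) from $q$ to $q_F$, so that $\LL(\AA')\neq\emptyset$ precisely when $q$ is reachable in $\AA$.

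The transition encoding is the natural one. An increment on counter $i$ becomes a fresh-name insertion $q'\xrightarrow{\emptyset,\{i\}}q''$: it can always fire, and it increases $|H\at\{i\}|$ by one. A decrement on counter $i$ becomes $q'\xrightarrow{\{i\},\emptyset}q''$: it fires only if some name appears precisely in history $i$, i.e.\ only if the counter is positive, matching R-VASS semantics exactly; and it decreases $|H\at\{i\}|$ by one. A reset of counter $i$ becomes the reset transition $q'\xrightarrow{\{i\}}q''$, which empties history $i$. Zero-vector (no-op) moves become reset transitions labelled $\emptyset$. Because every name always lives in at most one history throughout any run (this is preserved by all three transition kinds, starting from $H_0'$), the sets $H\at X$ for $|X|\ge 2$ are always empty and play no role, so the simulation is faithful in both directions; one then argues by a straightforward induction on run length that $(\AA,q_0,\vec{v}_0,q)\in\reach$ iff $\LL(\AA')\neq\emptyset$.

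It remains to check strong determinism, and this is the only genuinely delicate point. Looking at the condition in the definition: we must show that whenever $q'\trdeltaa{Y_1}\!\cdot\!\xtrdelta{X\setminus Y_1,X_1}q_1'\in\delta'$ and $q'\trdeltaa{Y_2}\!\cdot\!\xtrdelta{X\setminus Y_2,X_2}q_2'\in\delta'$ we get $q_1'=q_2'$, $Y_1=Y_2$, $X_1=X_2$. The issue is that an R-VASS may well have two distinct transitions out of the same state with the same effect, or transitions whose HRA images overlap syntactically (e.g.\ an increment on $i$ and a reset of $j$). The clean fix is to make the HRA \emph{read a distinct control letter for each step}: I enrich the alphabet implicitly by routing each R-VASS transition $t\in\delta$ through its own private intermediate states, so that the non-reset steps performed from a given state all have pairwise different "shapes" — concretely, I first send $q'$ by a reset-$\emptyset$ transition into a copy-state $q'_t$ unique to the chosen transition $t$, and only then perform the increment/decrement/reset. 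Since the choice of $t$ is recorded in the state, no two maximal reset-prefixed-then-accepting $\delta'$-paths out of $q'$ can share the same label unless they are literally the same path; resets similarly get private states. This makes $\AA'$ strongly deterministic while preserving the language (the extra $\epsilon$-steps do not change $\LL(\AA')$), completing the reduction. The main obstacle, then, is not the simulation correctness — which mirrors Proposition~\ref{prop:non_primitive} — but arranging the determinism bookkeeping without breaking faithfulness of the counter encoding; once the private-state trick is in place, everything else is routine.
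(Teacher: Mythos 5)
There is a genuine gap, and it sits exactly at the point you flag as delicate. Strong determinism is a condition on the \emph{labels} of reset-closure-then-accept paths: it quantifies over pairs $q\trdeltaa{Y_1}\!\cdot\!\xtrdelta{X\setminus Y_1,X_1}q_1$ and $q\trdeltaa{Y_2}\!\cdot\!\xtrdelta{X\setminus Y_2,X_2}q_2$ and demands $q_1=q_2$, $Y_1=Y_2$ and $X_1=X_2$; the intermediate states traversed are invisible to it. Your private copy-states $q'_t$ are reached by reset transitions labelled $\emptyset$, which contribute nothing to $Y$. So if the source R-VASS has, say, two increment transitions $q'\trans{\delta_i}q_1''$ and $q'\trans{\delta_j}q_2''$, your automaton contains $q'\trdeltaa{\emptyset}q'_t\xtrdelta{\emptyset,\{i\}}q_1''$ and $q'\trdeltaa{\emptyset}q'_{t'}\xtrdelta{\emptyset,\{j\}}q_2''$, both instances of the condition with $X=\emptyset$ and $Y_1=Y_2=\emptyset$, violating strong determinism whenever $i\neq j$ or $q_1''\neq q_2''$. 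This is not a bookkeeping problem that $\epsilon$-routing can repair: both branches accept a globally fresh name, and fresh names are interchangeable, so no deterministic HRA can resolve that branching by the input. Similarly, two reset transitions of distinct counters out of the same state, each followed by an increment, yield paths with $Y_1=\{i\}\neq\{j\}=Y_2$ and $X=\emptyset$ on both sides --- again a violation your construction does not address, since your resets are not followed by any distinguishing accepting transition.

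The missing idea is that the reduction must start from R-VASS instances that are \emph{already deterministic}, which is what the paper does: it restricts to R-VASSs in which each state has at most one outgoing increment (and nothing else if so) and no two outgoing transitions with the same effect, and observes via~\cite{Schnoebelen:2010} (whose Minsky-machine encoding branches only in the form ``if $c=0$ then goto, else decrement'') that control-state reachability remains non-primitive recursive for this subclass. Note that Fact~\ref{fact:RVASS} as stated does not give you this; the determinism of the hard instances is an additional fact that must be invoked. Even then the naive translation of resets is not strongly deterministic, and the paper threads a three-name code through the histories (extra names $a^i$ stored precisely in the place-sets $\{i\ominus 1,i,i\oplus 1\}$) so that the accepting transitions following a reset of counter $i$ read a place-set unique to $i$, rendering distinct resets from a common state non-conflicting in the sense of the definition. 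Your counter simulation itself --- increments, decrements, and the invariant $v_i=|H\at\{i\}|$ --- is fine and matches the paper's; it is the determinisation that needs a genuinely different mechanism.
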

\begin{proof}
Let $\AA$ be an $m$-dimensional R-VASS
  whose additive transitions only increment or decrement single counters:
  for each transition $q \trans{\vec{v}} q'$,
  we have $\vec{v} = \pm \vec{\delta}_i$ for some~$i$.
By~\cite{Schnoebelen:2010}, control-state reachability for such R-VASSs is \textsc{Ackermann}-hard.
%\footnote{The argument presented  in~\cite{Schnoebelen:2010} assumes Minsky machines which only branch in the form ``$\texttt{if c=0 then goto l; c--;}$" and are therefore deterministic in the above sense, while Schnoebelen's auxiliary constructions are also deterministic.}
We construct an $(m,0)$-HRA $\AA'$ with the same states as $\AA$, and
we map:
  each $q\xtrdelta{\delta_i}q'$ to $q\trdelta{\empti,\{i\}}q'$,
  each $q\xtrdelta{-\delta_i}q'$ to $q\trdelta{\{i\},\empti}q'$,
  and each $q\xtrdelta{i}q'$ to $q\trdelta{\{i\}}q'$.
We can see that $\AA'$ simulates the behaviour of $\AA$
  by storing the value of each counter~$i$ as $|H \at \{i\}|$.
Hence,
  $\LL(\AA')$~is nonempty if and only if
  $q_F$ is reachable, from $(q_0,\vec{v}_0)$, in the R-VASS~$\AA$.

We observe that $\AA'$ may not be strongly deterministic. Suppose that the size of the transition function of $\AA$ is $n$. We can then impose strong determinacy on $\AA'$ by enriching it with $n$ registers and preluding each transition of the above translation with a transition reading from one of the additional registers. We thus obtain an $(m,n)$-HRA that is strongly deterministic and simulates $\AA$ as above.
\end{proof}

\begin{proposition}\label{prop:emptiness-general}
The emptiness problem of HRAs is \textsc{Ackermann}-complete.
\end{proposition}
\begin{proof}
Combine \autoref{prop:regs_his}
  with \autoref{prop:emptiness-ub} and \autoref{prop:R-VASS-to-HRA}.
\end{proof}

\subsection{Universality}

We finally consider universality and language containment.
Note first that our machines inherit undecidability of these properties from register automata~\cite{RA2}.
However, these properties are decidable in the deterministic case.

In order to simplify our analysis, we shall be reducing HRAs to the following compact form where  $\epsilon$-transitions are incorporated inside name-accepting ones. 
As we show below, no expressiveness is lost by this packed form.

A \emph{packed $(m,0)$-HRA} is a tuple $\AA=\langle Q,q_0,\delta,H_0,F\rangle$ defined exactly as an $(m,0)$-HRA, with the exception that now:
\[
\delta \subseteq Q\times \PP([m])\times \PP([m])\times \PP([m])\times Q
\]
We shall write $q\xrightarrow{Y;X,X'}q'$ for $(q,Y,X,X',q')\in\delta$. The semantics of such a transition is the same as that of a pair of transitions $q\trdelta{Y}\cdot\trdelta{X,X'}q'$ of an ordinary HRA\null. Formally, configurations of packed HRAs are pairs $(q,H)$, like in HRAs, and the configuration graph of a packed HRA $\AA$ like the above is constructed as follows. We set $(q,H)\trdelta{a}(q,H')$ if there is some $q\xrightarrow{Y;X,X'}q'$ in $\delta$ such that, setting $H_Y=H[Y\mapsto\empti]$, we have
$a\in H_Y\at X$ and $H'=H_Y[\move{a}{X'}]$.

\begin{lemma}
Let $\AA$ be an $(m,0)$-HRA. There is a packed $(m,0)$-HRA $\AA'$ such that $\AA\sim\AA'$.
\end{lemma}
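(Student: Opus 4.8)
The goal is to show that every $(m,0)$-HRA $\AA$ is bisimilar to a packed $(m,0)$-HRA $\AA'$. A packed transition $q\xrightarrow{Y;X,X'}q'$ performs a reset of places $Y$ followed immediately by a name-accepting step $(X,X')$. So the plan is to \emph{compress maximal chains of $\epsilon$-reset transitions} of $\AA$ into the $Y$-component of a single packed transition, while keeping the name-accepting part $(X,X')$ intact. First I would recall the notation $q\trdeltaa{Y}q'\in\delta$ introduced just before Definition~9, meaning that there is a sequence of reset transitions in $\delta$ whose labels union to $Y$ (and $q\trdeltaa{\emptyset}q$ always holds by taking the empty sequence); note that performing such a sequence of resets has exactly the same effect on an assignment $H$ as the single update $H[Y\mapsto\emptyset]$, because resets commute and $H[Y_1\mapsto\emptyset][Y_2\mapsto\emptyset]=H[Y_1\cup Y_2\mapsto\emptyset]$. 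This observation is the workhorse.

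Concretely, I would set $\AA'=\langle Q,q_0,\delta',H_0,F'\rangle$ with the same state set, same initial state, same initial assignment, and define
\[
\delta' \;=\; \{\, (q,Y,X,X',q') \;\mid\; \exists q''.\ q\trdeltaa{Y}q''\in\delta \ \text{and}\ q''\trdelta{X,X'}q'\in\delta \,\}.
\]
The set $F'$ cannot simply be $F$, because $\AA$ may reach an accepting state via trailing $\epsilon$-resets that a packed automaton cannot take after its last name-accepting transition; so I would put $F'=\{\, q \mid q\trdeltaa{Y}q'\in\delta \text{ for some } q'\in F \text{ and some } Y \,\}$, i.e.\ all states from which a final state of $\AA$ is $\epsilon$-reset-reachable. (Since $q\trdeltaa{\emptyset}q$, this contains $F$.) This handles the subtlety that the packed model's acceptance condition, as given in Definition~6 specialised to the $(m,0)$ case, only inspects the state reached and has no separate $\epsilon$-closure.

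For correctness I would exhibit the bisimulation explicitly. The natural candidate is
\[
R \;=\; \{\, ((q,H),(q',H)) \;\mid\; q'\trdeltaa{\emptyset}\text{-reset-reaches }q,\ \text{i.e.}\ \exists Y.\ q'\trdeltaa{Y}q''\in\delta,\ H''=H[Y\mapsto\emptyset]\cdots \,\},
\]
but it is cleaner to let $R$ relate a configuration $(q',H')$ of $\AA'$ to every configuration $(q,H)$ of $\AA$ reachable from it by $\epsilon$-transitions, i.e.\ $(q',H')\transs{\epsilon}(q,H)$ in $\AA$, together with the diagonal pairs. One then checks the two simulation clauses of Definition~7 in both directions: (i) a name-step $(q',H')\trans{a}(q'_1,H'_1)$ of $\AA'$ via $q'\xrightarrow{Y;X,X'}q'_1$ is matched in $\AA$ by the $\epsilon$-reset chain realising $q'\trdeltaa{Y}q''$ followed by $q''\trdelta{X,X'}q'_1$, landing in $(q'_1,H'_1)$; conversely a computation $\hat q\transs{\epsilon}\!\cdot\!\trans{a}\hat q_1$ of $\AA$ decomposes, by the commutativity observation, as some resets (their union is a $Y$) then one $(X,X')$-step, which is precisely a packed transition; and (ii) the final-state clauses match because of how $F'$ was defined, using again $q\trdeltaa{\emptyset}q$ for the base case and closure of $F'$ under $\epsilon$-reset-predecessors. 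Finally Lemma~\ref{l:bisim} gives $\LL(\AA)=\LL(\AA')$, though bisimilarity is the stated conclusion.

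The main obstacle is purely bookkeeping rather than conceptual: getting the $\epsilon$-closure right at the two ``ends'' of a run — the initial configuration (the packed automaton must be allowed to take leading resets before its first name, which is absorbed into the $Y$ of its first transition, so this is fine) and, more delicately, acceptance after the last name, which forces the modified final-state set $F'$ and a correspondingly slightly asymmetric bisimulation relation. Once the relation is set up to close configurations of $\AA'$ under $\epsilon$-reachability in $\AA$, the verification of the clauses is routine, relying only on the fact that a block of reset transitions collapses to one set-union reset.
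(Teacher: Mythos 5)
Your construction is exactly the paper's: same state set, initial state and initial assignment, the same $\delta'=\{(q,Y,X,X',q')\mid q\trdeltaa{Y}\cdot\trdelta{X,X'}q'\in\delta\}$, and the same $F'$ consisting of all reset-predecessors of $F$. The paper's witness for $\AA\sim\AA'$ is simply the identity relation $\{((q,H),(q,H))\}$ --- your ``diagonal pairs'' --- which already suffices because the bisimulation of Definition~\ref{d:bisim} is weak (both clauses are $\epsilon$-closed). The enlargement you suggest, relating $(q',H')$ to every $\epsilon$-derivative $(q,H)$ in $\AA$, is not only unnecessary but would actually fail to be a bisimulation: from such a $(q,H)$ the original automaton may be unable to match a packed step out of $(q',H')$ whose reset chain passes through a different intermediate state, or whose $(X,X')$ test requires names that the premature resets leading to $(q,H)$ have already erased; restricted to the identity, your argument is precisely the paper's.
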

\begin{proof}
Let $\AA=\langle Q,q_0,\delta,H_0,F\rangle$. We set $\AA'=\langle Q,q_0,\delta',H_0,F'\rangle$ where:
\begin{align*}
F' &= \{ q'\in Q\ |\ \exists q\in F,Y.\ q'\trdeltaa{Y}q\in\delta\}\\
\delta' &= \{(q,Y,X,X',q')\ |\ q\trdeltaa{Y}\cdot\trdelta{X,X'}q'\in\delta\}
\end{align*}
Bisimilarity of $\AA$ and $\AA'$ is witnessed by the identity on
configurations, which means that $R = \{\,((q,H),(q,H))\mid q\in Q \land H\in\his\,\}$ is a bisimulation.
\end{proof}

\cutout{The notion of bisimulation was defined on configuration graphs so it readily extends to packed HRAs. The same holds for determinism.} 

\cutout{We next prove that, given packed HRAs $\AA_1$ and $\AA_2$, we can construct a packed HRA $\AA$ such that $\LL(\AA)=\LL(\AA_1)\setminus\LL(\AA_2)$. In particular, $\AA$ operates as a product automaton of $\AA_1$ and $\AA_2$, as long as $\AA_2$ can simulate $\AA_1$. As soon as $\AA_1$ can make a transition which cannot be matched by $\AA_2$, $\AA$ switches mode and starts operating as $\AA_1$ only. $\AA$ accepts if it reaches an accepting state in the latter mode or, alternatively, if while still in the product mode it reaches a state which is final for $\AA_1$ but not for $\AA_2$.
The details are delegated to the Appendix.

\begin{proposition}
Language containment is decidable for deterministic packed HRAs.
\end{proposition}
\begin{proof}
Let $\AA_i=\langle Q_i,q_{0i},\delta_i,H_{0i},F_i\rangle$ be a packed $(m_i,0)$-HRA, for $i=1,2$.
\cutout{W.l.o.g.~we can assume that $F_1=F_2=\empti$: for bisimilarity checks, finality of a state can be marked by a transition to a designated blind state (i.e.~no outgoing transitions) with some reserved name $a$ (a constant).}
Following the above rationale, we construct a packed $(m,0)$-HRA $\AA=\langle Q,(q_{01},q_{02}),\delta,H_0,F\rangle$, $m=m_1+m_2$,
\cutout{histories. such that $\LL(\AA)=\empti$ iff $\LL(\AA_1)\subseteq\LL(\AA_2)$. 
In particular, $\AA$ simulates a product automaton of $\AA_1$ and $\AA_2$, as long as $\AA_2$ can simulate $\AA_1$. As soon as $\AA_1$ can make a transition which cannot be matched by $\AA_2$, $\AA$ switches mode and starts simulating $\AA_1$ only. Moreover, if the product automaton reaches a state which is final for $\AA_1$ but not for $\AA_2$ then it accepts.
\\}
where 
\[
Q=(Q_1\times Q_2)\cup Q_1,\;\, F=(F_1\times(Q_2\setminus F_2))\cup F_1.
\]
Moreover, $H_0=H_{01}+H_{02}$, where $H+H'=\{(i,H(i))\ |\ i\in[m_1]\}\cup\{(m_1{+}i,H'(i))\ |\ i\in[m_2]\}$.
Below we write $Z+Z'=Z\cup\{m_1{+}i\ |\ i\in Z'\}$.
Now, $\delta=\delta'\cup\delta_1'$ where $\delta'$ is given as follows.
For each $(q_1,q_2)\in Q$ and $q_1\xtrdelta{Y_1;X_1,X_1'}q_1'\in\delta_1$,
\begin{itemize}
\item[(i)] for each $q_2\xtrdelta{Y_2;X_2,X_2'}q_2'\in\delta_2$ add %a transition 
$(q_1,q_2)\xtrdelta{Y;X,X'}(q_1',q_2')$ in $\delta'$, with $Y=Y_1+Y_2$, $X=X_1+X_2$ and $X'=X_1'+X_2'$;
\item[(ii)] for all $X$ such that there is no $q_2\xtrdelta{Y;X\setminus Y,X'}q_2'$ add a transition $(q_1,q_2)\xtrdelta{Y_1;X_1+X,X_1'}q_1$ in $\delta'$.
\end{itemize}
%We add the dual transitions for each $(q_1,q_2)\in Q$ and $q_2\xtrdelta{Y_2;X_2,X_2'}q_2'\in\delta_2$.
Thus, transitions of type (i) capture the product behaviour, while transitions of type (ii) detect a simulation breach. Finally, $\delta_1'=\{(q,Y{+}[m_2],X,X',q')\ |\ (q,Y,X,X',q')\in\delta_1\}$.
\\
We claim that $\LL(\AA)=\LL(\AA_1)\setminus\LL(\AA_2)$. Note first that, by construction, $\LL(\AA)\subseteq\LL(\AA_1)$. Now, if $s\in\LL(\AA)$ and $s$ is accepted at a state in $F_1\times(Q_2\setminus F_2)$ then, because $\LL(\AA)\subseteq\LL(\AA_1)$ and $\AA_2$ is deterministic, we have $s\in\LL(\AA_1)\setminus\LL(\AA_2)$. Otherwise, if $s=s'as''$ with $a$ the point where a transition of type (ii) is taken, then $s'a$ is a witness of a path in the configuration graph of $\AA_1$ which cannot be simulated by $\AA_2$: upon acceptance of $s'$ by $\AA_2$, $a$ appears precisely in some histories $X$ such that $\AA_2$ has no transition to accept $a$ at that point. Thus, $s\in\LL(\AA_1)\setminus\LL(\AA_2)$. 
\\
Conversely, if $s\in\LL(\AA_1)\setminus\LL(\AA_2)$ then 
either $s$ induces a configuration in $\AA_2$ which does not end in a final state, or 
$s=s'as''$ where $s'$ is accepted by $\AA_2$ but at that point $a$ is not a possible transition. We can see that, in each case, $s\in\LL(\AA)$.\\
Hence, $\LL(\AA_1)\subseteq\LL(\AA_2)\iff\LL(\AA)=\empti$.
\qed\end{proof}}

We shall decide language containment via complementation. In particular, given a deterministic packed HRA $\AA$, the automaton $\AA'$ accepting the language $\names^*\setminus\LL(\AA)$ can be constructed in the analogous way as for deterministic finite-state automata, namely by obfuscating the automaton with all missing transitions and swapping final with non-final states. %Finding the missing transitions is easy: for each state $q$ and each set $X$ such that there is no transition of the form $q\xtrdelta{Y;X\setminus Y,X'}q'$ in $\AA$, we add a transition $q\xtrdelta{\empti;X,\empti}q_F$ to some sink final state $q_F$.

\begin{lemma}\label{lem:complement}
Deterministic packed HRAs are closed under complementation.
\end{lemma}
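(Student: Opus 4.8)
The plan is to mimic the classical completion‑and‑swap construction for deterministic finite automata, adapted to the packed setting as sketched just before the statement. Let $\AA=\langle Q,q_0,\delta,H_0,F\rangle$ be a deterministic packed $(m,0)$‑HRA. Pick a fresh state $q_F$ and let $\AA''=\langle Q\cup\{q_F\},q_0,\delta'',H_0,F\rangle$, where $\delta''$ is $\delta$ together with, for every $q\in Q\cup\{q_F\}$ and every $X\subseteq[m]$ for which $\delta$ contains no transition of the form $q\xrightarrow{Y;X\setminus Y,X'}q'$, the transition $q\xrightarrow{\emptyset;X,\emptyset}q_F$. Since $\delta$ has no transitions out of $q_F$, this in particular puts in $\delta''$ the loops $q_F\xrightarrow{\emptyset;X,\emptyset}q_F$ for all $X$, so $q_F$ is a sink which, on a name occurring precisely in $X$, deletes it and stays put. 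Finally set $\AA'=\langle Q\cup\{q_F\},q_0,\delta'',H_0,(Q\setminus F)\cup\{q_F\}\rangle$, i.e.\ $\AA''$ with accepting and non‑accepting states interchanged.

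First I would show that $\AA''$ (hence $\AA'$) is deterministic and never blocks at a reachable configuration, i.e.\ for every reachable $(q,H)$ and name $a$ there is exactly one outgoing transition. The guiding fact is: for $(q,H)$ and $a$ there is a unique $X=\{i\mid a\in H(i)\}$ with $a\in H\at X$; and a packed transition $q\xrightarrow{Y;W,X'}q'$ is enabled by $a$ at $(q,H)$ exactly when $a\in H[Y\mapsto\emptyset]\at W$, which — because $a$ occurs precisely in $X$ — happens iff $W=X\setminus Y$. Hence names occurring precisely in $X$ at $q$ are handled by $\delta$ iff $\delta$ has a transition $q\xrightarrow{Y;X\setminus Y,X'}q'$, and otherwise by the single added transition $q\xrightarrow{\emptyset;X,\emptyset}q_F$. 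Moreover, every new transition targets $q_F$ and $q_F$ only loops, so any configuration $(q,H)$ with $q\in Q$ that is reachable in $\AA''$ is reachable in $\AA$ using $\delta$‑transitions alone; determinism of $\AA$ then gives exactly one enabled $\delta$‑transition there, so $\AA''$ does too. At $q_F$ the added loops give exactly one enabled transition per $X$. Since determinism does not depend on the accepting set, $\AA'$ is a deterministic packed HRA as well.

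With completeness established, the language computation is routine. Every $w\in\names^*$ labels a unique run of $\AA''$ from $(q_0,H_0)$ to some configuration $(q,H)$ (the trivial run if $w=\epsilon$). If this run never enters $q_F$, it uses only $\delta$‑transitions, so by determinism it is the unique run of $w$ in $\AA$, and $w\in\LL(\AA)$ iff $q\in F$. Otherwise, let $u$ be the prefix after which the run first enters $q_F$, doing so on a name $a$ from a configuration $(q_u,H_u)$; then $\delta$ has no transition covering names occurring precisely in $\{i\mid a\in H_u(i)\}$ at $q_u$, and since the run of $u$ in $\AA$ is unique, $\AA$ blocks on $ua$, hence on $w$, so $w\notin\LL(\AA)$. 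Either way, $w\in\LL(\AA)$ iff the unique $\AA''$‑run of $w$ ends in a state of $F$. Swapping the accepting set therefore yields $\LL(\AA')=\{w\mid\text{the run of }w\text{ ends in }(Q\setminus F)\cup\{q_F\}\}=\names^*\setminus\LL(\AA)$.

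The only step demanding real care is the determinism‑and‑completeness claim of the second paragraph: unlike for finite automata, one must check that the reset component $Y$ of packed transitions creates no hidden enabled transitions (handled by the clean criterion $W=X\setminus Y$) and that every reachable $Q$‑configuration of $\AA''$ is genuinely a reachable configuration of $\AA$, so that the determinism hypothesis on $\AA$ actually applies. Everything else is a verbatim transcription of the classical DFA complementation argument.
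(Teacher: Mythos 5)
Your construction and argument coincide with the paper's: complete the automaton with a sink state $q_F$ reached by transitions $q\xrightarrow{\emptyset;X,\emptyset}q_F$ for every uncovered $X$, loop at $q_F$, and swap accepting states, then split into the two cases (run stays in $Q$ vs.\ run falls into the sink). Your treatment is somewhat more careful about why completeness and determinism of the completed automaton hold (the $W=X\setminus Y$ enabledness criterion), but it is the same proof.
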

\begin{proof}
Let $\AA=\langle Q,q_{0},\delta,H_{0},F\rangle$ be a packed $(m,0)$-HRA.
Following the above rationale, we construct a packed $(m,0)$-HRA $\AA'=\langle Q\uplus\{q_F\},q_0,\delta\cup\delta',H_0,F'\rangle$, 
where $F'=\{q_F\}\cup(Q\setminus F)$ and $\delta'$ is given as follows.
For each $q\in Q$ and all $X$ such that there is no $q\xtrdelta{Y;X\setminus Y,X'}q'$ add a transition $q\xtrdelta{\empti;X,\empti}q_F$ in $\delta'$. In addition, $\delta'$ contains a transition $q_F\xtrdelta{[m];\empti,\empti}q_F$.
\\
We claim that $\LL(\AA')=\names^*\setminus\LL(\AA)$. Indeed, if $s\in\LL(\AA')$ and $s$ is accepted at a state in $Q\setminus F$ then, since $\AA$ is deterministic, we have $s\notin\LL(\AA)$. Otherwise, if $s=s'as''$ with $a$ the point where a transition to the sink state is taken then, upon acceptance of $s'$ by $\AA$, $a$ appears precisely in some histories $X$ such that $\AA$ has no transition to accept $a$ at that point. Thus, $s\notin\LL(\AA)$. 
\\
Conversely, if $s\in\names^*\setminus\LL(\AA)$ then 
either $s$ induces a configuration in $\AA$ which does not end in a final state, or 
$s=s'as''$ where $s'$ is accepted by $\AA$ but at that point $a$ is not a possible transition. We can see that, in each case, $s\in\LL(\AA')$.
\end{proof}

\begin{proposition}\label{prop:lang-inclusion}
Language containment and universality are undecidable for (general) HRAs
  and \textsc{Ackermann}-complete for strongly deterministic HRAs.
\end{proposition}

\begin{proof}
Undecidability in the general case is inherited from RAs.

Now consider two HRAs $\AA$~and~$\AA'$ such that we can compute the complement of~$\AA'$.
Then,
  we can decide the language containment $\LL(\AA)\subseteq\LL(\AA')$
  by checking whether the product of $\AA$ with the complement of~$\AA'$ is empty.
The product construction is polynomial,
  and the emptiness check is in \textsc{Ackermann} (\autoref{prop:emptiness-ub}).
Thus, language containment is in \textsc{Ackermann}
  if computing the complement of~$\AA'$ is in \textsc{Ackermann}.
This is the case because
  (a)~removing registers can be done while preserving determinism
    with only an exponential increase in size (\autoref{prop:regs_his}), and
  (b)~complementing deterministic HRAs without registers takes polynomial time
    (\autoref{lem:complement}).
%\rg{TODO: check these points carefully.}
For hardness,
  note that emptiness and universality are equally hard in the deterministic case
    (\autoref{lem:complement}),
  and emptiness is \textsc{Ackermann}-hard (\autoref{prop:R-VASS-to-HRA}).

We showed that
  language containment is in \textsc{Ackermann}
  and universality is \textsc{Ackermann}-hard.
Finally, note that there is a trivial reduction from universality to language containment.
\end{proof}

%%% Local Variables: 
%%% mode: latex
%%% TeX-master: "jour"
%%% End: 
% vim:spell:spelllang=en_us:

\section{Weakening HRAs}\label{sec:weak}
% prelude <<<

Since the complexity of HRAs is substantially high, e.g.~for deciding emptiness, it is useful to seek for restrictions thereof which allow us to trade expressiveness for efficiency.
As the encountered complexity stems from the fact that HRAs can simulate computations of R-VASSs,
our strategy for producing weakenings is to restrict the functionalities of the corresponding R-VASSs. We follow two directions:

\begin{enumerate}[label=\({\alph*}]
\item We remove reset transitions. This corresponds to removing counter transfers and resets and drops the complexity of control-state reachability to exponential space.
\item We restrict the number of histories to just one. We thus obtain polynomial space complexity as the corresponding counter machines are simply one-counter automata. This kind of restriction is also a natural extension of FRAs with history resets.
\end{enumerate}
Observe that each of the aspects of HRAs targeted above corresponds to features (1,2) we identified in the Introduction, witnessed by the languages $\LL_1$ and $\LL_2$ respectively. We shall see that each restriction leads to losing the corresponding language.

%Our analysis on emptiness for general HRAs from \autoref{sec:empty} is not applicable to these weaker machines as we now need to take registers into account: the simulation of registers by histories is either not possible or not practical for deriving satisfactory complexity bounds.
%Additionally,
%a direct analysis will allow us to reduce instances of counter machine problems to our setting decreasing the complexity size by an exponential, compared to our previous reduction.
%
%Solving emptiness for each of the weaker versions of HRAs will involve reduction to a name-free counter machine. In both cases, the reduction shall follow the same concept as in \autoref{sec:empty}, namely of simulating computations with names \emph{symbolically}.

% >>>
\subsection{Non-reset HRAs} % <<<

We first weaken our automata by disallowing resets. We show that the new machines retain all their closure properties apart from Kleene-star closure. The latter is concretely manifested in the fact that language $\LL_1$ of the Introduction is lost. On the other hand, the emptiness problem reduces in complexity to exponential space.

\begin{definition}
A \boldemph{non-reset HRA} of type $(m,n)$ is an $(m,n)$-HRA $\AA=\langle Q,q_0,H_0,\delta,F\rangle$ such that there is no $q\trdelta{X}q'\in\delta$.
\end{definition}

\subsubsection*{Closure properties}
Of the closure constructions of \autoref{sec:closure} we can see that union and intersection readily apply to non-reset HRAs, while the construction for concatenation needs some amendments.

More specifically, of the two constructions presented in the proof of \autoref{prop:closure}, the one for concatenation can be adapted to non-reset HRAs as follows.
We add empty transitions from the final states of $\AA_1'$ to the initial state of a version of $\AA_2'$ which keeps the places used by $\AA_1'$ untouched and uses its own separate copy of places, obfuscating its own transitions so as to capture accidental matchings of the legacy names of $\AA_1'$.
This solution cannot be used for Kleene closure as in each loop the automaton needs to find a fresh copy of its initial configuration, and be able to use it (in the previous construction, the final assignment of $\AA_1'$ is lost).

On the other hand, using an argument similar to that of~\cite[Proposition~7.2]{CMA}, we can show that the language $\LL_1$ is not recognised by non-reset HRAs and, hence, the latter are not closed under Kleene star. Finally, note that the HRA constructed for the language $\LL_4$ in \autoref{ex:complement} is a non-reset HRA, which implies that non-reset HRAs are not closed under complementation.

\subsubsection*{Emptiness}

In the general case we saw an upper bound of $\FF_{2^m}$ (\autoref{prop:emptiness-ub}),
  by a reduction to T-VASS followed by the backward coverability algorithm.
For a non-reset HRA, the same reduction yields a VASS, without transfers.
In the absence of transfers,
  better bounds are known for the backward coverability algorithm%
  ~\cite{backward-cover-analysis}.
More generally,
  it has been known for some time that coverability for VASS
  is \textsc{ExpSpace}-complete \cite{Rackoff,Lipton}.

The following result refers to the number~$N$ of bits used to represent an HRA\null.
Of course, $N$~depends on the exact representation being used.
Still, we do not make this representation explicit
  because the result holds for a wide variety of possible representations.
We only require that the representation obeys $m,n,|\delta|,\log|Q|\in O(N)$.

\begin{proposition}\label{prop:nrHRA-ub}
The emptiness problem for a non-reset HRA is in $\textsc{ExpSpace}$.
More precisely, it is in $\textsc{NSpace}\bigl(2^{O(N \log N)}\bigr)$,
  where $N$~is the number of bits used to represent the HRA\null.
\end{proposition}

\begin{proof}
We start with an $(m,n)$-non-reset-HRA $\AA=\langle Q,q_0,\delta,H_0,F\rangle$.
We use \autoref{prop:remove-regs-withstates} to construct an $(m,0)$-non-reset-HRA
  $\AA'=\langle Q',q'_0,\delta',H'_0,F' \rangle$ that preserves emptiness.
Moreover, $\log|Q'|\in O(mn+n\log n+\log|Q|)$.
Using \autoref{lem:hra-to-trvass},
  we reduce $\AA'$ to a $(|\delta|+1)$-dimensional VASS with $|Q'|$~states
  that uses only increments\slash decrements.
(\autoref{lem:hra-to-trvass} creates an $m$-dimensional VASS
  where $m$ is the number of sets labelling transitions.
  \autoref{prop:remove-regs-withstates}
    puts in $\AA'$ only sets that already occurred in $\AA$,
    with the possible exception of~$\empti$.)
Now we apply the backward coverability algorithm,
  as described in the proof of \autoref{prop:emptiness-ub}.
By \cite[Theorem~2]{backward-cover-analysis},
  the algorithm will only consider counter values less than some
  $V=(3|Q'|)^{2^{O(|\delta|\log|\delta|)}}$.
In the nondeterministic version of the algorithm,
  we guess the next configuration,
  which means we only need space $O((|\delta|+1)\log V)$ to store a couple of configurations.
We have
\begin{align*}
  \log \log V
    = O(|\delta|\log|\delta| + \log\log|Q'|)
    = O(|\delta|\log|\delta| + \log(mn+n\log n + \log|Q|))
\end{align*}
Since $m,n,|\delta|,\log|Q|\in O(N)$,
  we conclude $\log|\delta|+\log\log V \in O(N \log N)$.
This implies that a nondeterministic version of the backward coverability algorithm
  works in $\textsc{NSpace}\bigl(2^{O(N \log N)}\bigr)$.
\end{proof}

The previous proposition has a couple of obvious consequences.
First, emptiness is also in $\textsc{DSpace}\bigl(2^{O(N \log N)}\bigr)$,
  by Savitch's theorem.
Second, emptiness is also in $\textsc{DTime}\bigl(2^{2^{O(N \log N)}}\bigr)$,
  by a standard easy argument~\cite[Theorem~5.3]{oded-complexity}.
In fact, one can show that the time bound applies to the backward coverability algorithm,
  without invoking generic constructions from complexity theory:
By \cite[Theorem~2]{backward-cover-analysis},
  the runtime of the backward coverability algorithm%
  \,---\,like the counter values\,---\,%
  is also upper bounded by some $T=O\bigl((3|Q'|)^{2^{O(|\delta|\log|\delta|)}}\bigr)$.
The rest of the argument is as in the proof of \autoref{prop:nrHRA-ub}.

\begin{proposition}\label{prop:nrHRA-lb}
The emptiness problem for non-reset HRAs is \textsc{ExpSpace}-hard.
\end{proposition}

\begin{proof}
By~\cite{Lipton},
  the control-state reachability problem for VASS is $\textsc{ExpSpace}$-hard
  even if all the transitions are restricted to have labels of the form~$\pm\vec{\delta}_i$.
(More precisely,
  Lipton proves that certain parallel programs of size ${\it poly}(k)$
  can simulate any Turing machine that uses $<2^k$~space.
Then, \cite[Lemma~2]{Lipton} asserts that reachability in these programs
  reduces to reachability in VAS, the full proof being:
`We omit a detailed proof of this lemma.
It should, however, be clear that parallel programs can be encoded as vector addition systems.'
Similarly, we claim without proof, that it should be clear how Lipton's programs
  reduce to the control-state reachability problem for VASSs whose transitions only
  increment\slash decrement single counters.)
We shall reduce the control-state reachability problem for such VASSs
  to the emptiness problem for non-reset HRAs.

Let $m$~be the dimension of the VASS\null.
We construct a HRA with $m'$~histories,
  where $m'$ is the smallest integer such that $m \le 2^{m'}-1$.
As a result, there exists an injection $\phi : [m] \to \PPnz([m'])$;
  we fix arbitrarily one such injection.
\begin{itemize}
\item For each transition $q \trdelta{+\vec{\delta}_i} q'$ in the VASS,
  we include a transition $q \trdelta{\empti, \phi(i)} q'$ in the HRA\null.
\item For each transition $q \trdelta{-\vec{\delta}_i} q'$ in the VASS,
  we include a transition $q \trdelta{\phi(i), \empti} q'$ in the HRA\null.
\end{itemize}
This construction maintains the invariant $|H \at \phi(i)|=\vec{v}(i)$.
To establish the invariant, we pick the initial history assignment $H_0$ accordingly.
Finally, we set as final the state in whose reachability we are interested.

The reduction described above is clearly polynomial,
  from which it follows that emptiness of non-reset HRAs (even without registers)
  is $\textsc{ExpSpace}$-hard.
\end{proof}

\begin{proposition}\label{prop:nrHRA}
The emptiness problem for non-reset HRAs is \textsc{ExpSpace}-complete.
\end{proposition}
\begin{proof}
Immediate from \autoref{prop:nrHRA-ub} and \autoref{prop:nrHRA-lb}.
\end{proof}

% >>>
\subsection{Unary HRAs} \label{sec:unary} % <<<

Our second restriction concerns allowing resets but bounding the number of histories to just one. Thus, these automata are closer to the spirit of FRAs and, in fact, extend them by rounding up their history capabilities. We show that these automata require polynomial space complexity for emptiness and retain all their closure properties apart from intersection. The latter is witnessed by failing to recognise $\LL_2$ from the Introduction. We can see that extending this example to multiple interleavings we can show that intersection is in general incompatible with bounding the number of histories.

\begin{definition}
A $(1,n)$-HRA is called \boldemph{unary HRA} of $n$ registers.
\end{definition}

\cutout{The reduction of FRAs to unary HRAs is rather straightforward and we give only a sketch of it. In particular, for each FRA $\AA$ we can construct a bisimilar unary HRA $\AA'$ by the following mapping of transitions.\ntside{Define FRAs}
\begin{itemize}
\item $q\trdelta{i}q'$ gets mapped to $q\xtrdelta{\{1,i{+}1\},\{1,i{+}1\}}q'$;
\item $q\trdelta{i^\circledast}q'$ gets mapped to $q\xtrdelta{\empti,\{1,i{+}1\}}q'$;
\item $q\trdelta{i^\bullet}q'$ gets mapped to $q\xtrdelta{\{1\},\{1,i{+}1\}}q'$ and $q\xtrdelta{\empti,\{1,i{+}1\}}q'$.
\end{itemize}
The set of states remains the same, and the initial history assignment $H_0$ of $\AA'$ contains in positions $2,\ldots,n{+}1$ a copy of the initial register assignment $\sigma_0$ of $\AA$, while $H_0(1)=\{\sigma_0(i)\ |\ i\in\{2,\cdots,n{+}1\}\}$.

\begin{lemma}
For $\AA,\AA'$ as above, $\AA\sim\AA'$.
\end{lemma}}

In other words, unary HRAs are extensions of FRAs where names can be selectively inserted or removed from the history and, additionally, the history can be reset. These capabilities give us in fact a strict extension.

\begin{example}\label{ex:qwe}
The automata used in \autoref{ex:HRA} for $\LL_1$ and $\LL_3$ were unary HRAs. Note that neither of those languages is FRA-recognisable.
On the other hand, in order to recognise
$\LL_2$, an HRA would need to use at least two histories: one history for the odd positions of the input and another for the even ones. We can formalise an argument to show that $\LL_2$ is not recognisable by unary HRAs as follows.
\end{example}

\begin{proof}
Suppose $\LL_2=\LL(\AA)$ for some unary HRA $\AA$ of $n$ registers and let
\[
w= a_1b_1\ldots a_kb_kb_1a_1\cdots b_k a_k
\]
for $k=n+1$ and some pairwise distinct names $a_1,b_1,\ldots,a_k,b_k$. As $w\in\LL_2$, there is a path, say $p$, in $\AA$ which accepts $w$. We divide $p$ as $p_1p_2$ with $p_2$ accepting the second half of $w$. Let $\hat{p}=\hat{p}_1\hat{p}_2$ be the corresponding configuration path and let $(q',H')$ be the first configuration in $\hat{p}_2$. We set $S=\{a_1,b_1,\cdots,a_k,b_k\}\setminus\{a\ |\ a\in H'(i)\land i>1\}$ and do a case analysis on the labels of the form $(X,X')$ which appear in $p_2$ and accept names from $S$. Since names in $S$ do not appear in any $H'(i)$, for $i>0$, it must be that each such $X$ is either $\{1\}$ or $\empti$. We have the following cases.
\begin{itemize}
\item There are two such labels, say $(\{1\},X_i)$ and $(\{1\},X_j)$, accepting names $a_i$ and $b_j$ respectively. But this would imply that $\AA$ also accepts $w'$, where $w'$ is $w$ with these occurrences of $a_i$ and $b_j$ swapped, contradicting $\LL(\AA)=\LL_2$ (as $w'\notin\LL_2$).
\item There are two such labels, say $(\empti,X_i)$ and $(\empti,X_j)$, accepting names $a_i$ and $b_j$ respectively. In order for $\AA$ not to accept $w'$ ($w'$ as above), it is necessary that a reset transition with label $Y\ni1$ occurs between the two transitions. Suppose $i<j$. Then, since $k>n$, there is a name $a_{i'}$ which does not appear in any place after clearing $Y$. Thus, $(\empti,X_j)$ can accept $a_{i'}$ and complete the path $p$ by accepting a word $w'\notin\LL_2$. Dually if $j\leq i$.
\item Each $a_i\in S$ is accepted by a label $(\{1\},X')$, and each $b_j\in S$ by a label $(\empti,X')$. Let $a_i\in S$ be the last such accepted in $p_2$. This means that the rest of the path has length at most $2n$. Therefore, since $k>n$, there is a $b_j\in S$ accepted in $p_2$ before $a_i$. Let $(q,H)$ be the configuration just before accepting $b_j$.
In order for $\AA$ not to accept any $a_{i'}$ at that point, it must be that all $a_{i'}\in S$ appear in $H$. Since $|S|>n+1$, there exists $a_{i'}\in H(1)\cap S$ such that $a_{i'}\not=a_i$. But then, the transition accepting $a_i$ can accept $a_{i'}$ instead and lead to acceptance of a word $w'\not\in\LL_2$.
\end{itemize}
We therefore reach a contradiction in every case.
\end{proof}

\subsubsection*{Closure properties}
The closure constructions of \autoref{sec:closure} readily apply to unary HRAs, with one exception: intersection. For the latter, we can observe that $\LL_2=\LL(\AA_1)\cap\LL(\AA_2)$,
where
\begin{wrapfigure}{r}{.25\linewidth}\vspace{-3.5mm}
\parbox{.3\linewidth}{%
\begin{tikzpicture}[automaton]
\node[state,initial,accepting] (q0) {$q_0$};
\node[state] (q1) [right=of q0] {$q_1$};
\path[transition]
  (q0) edge node[above]{$\scriptstyle\empti\ta1$} (q1)
  (q1) edge[bend left] node[below]{$\scriptstyle\empti,\empti\,/\, 1,1$} (q0);
\end{tikzpicture}\qquad
\begin{tikzpicture}[automaton]
\node[state,initial,accepting] (q0) {$q_0$};
\node[state] (q1) [right=of q0] {$q_1$};
\path[transition]
  (q0) edge node[above]{$\scriptstyle\empti\ta\empti\,/\, 1,1$} (q1)
  (q1) edge[bend left] node[below]{$\scriptstyle\empti,1$} (q0);
%  (q0) edge[loop above] node{$\scriptstyle 2,1$} ()
%  (q1) edge[loop above] node{$\scriptstyle 1,2$} ();
\end{tikzpicture}}\vspace{-7.5mm}
\end{wrapfigure}
%We can see that their corresponding languages are:
%
$\LL(\AA_1) = \{ a_1a_1'\ldots a_na_n'\in\names^*\ |\ a_1\ldots a_n\in\LL_0\}$
and
$\LL(\AA_2) = \{ a_1a_1'\ldots a_na_n'\in\names^*\ |\ a_1'\ldots a_n'\in\LL_0\}$,
and
$\AA_1$ and $\AA_2$ are the unary $(1,0)$-HRAs on the side, with empty initial assignments.
On the other hand, unary HRAs are not closed under complementation as well, as one can construct unary HRAs accepting $\overline{\LL(\AA_1)}$ and $\overline{\LL(\AA_2)}$, and then take their union to obtain a unary HRA for $\overline{\LL_2}$.

\subsubsection*{Emptiness}
In the case of just one history, the results on
TR-VASS reachability~\cite{Schnoebelen:2010,Figueira_etal:2011} from \autoref{sec:empty} provide rather rough bounds.
It is therefore useful to do a direct analysis.
We reduce nonemptiness for unary HRAs to control-state reachability for one dimensional R-VASSs.
Our analysis below shows that the minimal path has length at most quadratic,
  from which it follows that nonemptiness has polynomial complexity.

The following result applies to any R-VASS representation for which $|Q| \log|Q| \in O(N)$,
  where $|Q|$~is the number of states of the R-VASS,
  and $N$~is the number of bits used to represent the R-VASS\null.
Note that the condition is true if all the states are listed in the R-VASS representation,
  something all reasonable representations would do.

\begin{lemma}\label{lem:1RVASS}
Control-state reachability for one dimensional R-VASSs is in~$\textsc{NL}$,
  provided that non-reset transitions increase and decrease the counter by at most~$1$.
\end{lemma}

\begin{proof}
Let $\AA=\langle Q,\delta\rangle$ be an R-VASS of dimension~$1$.
The proof relies on two observations:
\begin{description}
\item[Fact 1]
%  $\AA$ is up-monotonic:
  If $(q,i) \trdeltaa{} (q',i')$ is a configuration path of $\AA$
  then, for each $k>0$, there is a path $(q,i{+}k) \trdeltaa{} (q',i'')$ of the same length.
\item[Fact 2]
%  $\AA$ is down-monotonic:
  If $(q,i) \trdeltaa{} (q',i')$ is a configuration path of~$\AA$
    in which there are no reset transitions and the counter never becomes less than some $k>0$,
  then there is a path $(q,i{-}k) \trdeltaa{} (q',i'')$ of the same length.
\end{description}

Consider an instance $(\AA,q_0,i_0,q_F)$ of the control-state reachability problem:
  Is the state~$q_F$ reachable in~$\AA$ starting from configuration~$(q_0,i_0)$?
Let $p$ be a configuration path of minimal length
  from $(q_0,i_0)$ to some configuration whose state is~$q_F$.
%Let $p$ be a configuration path from $(q_0,i_0)$ to $(q_F,i_F)$, for some $i_F$,
%  such that $p$~is of least length among all paths leading to some $(q_F,i)$.
Let us see if a state can appear repeatedly in~$p$.
By Fact~1, $p$~is non-decreasing:
  any path segment $(q,i) \trdeltaa{} (q,i')$ can be circumvented if $i \ge i'$.
Now suppose that $p$~contains a segment $(q,i) \trdeltaa{} (q,i+k)$ for some $k>0$.
Consider the segment $(q,i+k) \trdeltaa{} (q'',i'')$
  that follows, uses only non-reset transitions, and is maximal.
By Fact~2, if the counter never becomes $<k$ in the latter segment,
  then there exists a path $(q,i) \trdeltaa{} (q'',i'')$ of the same length.
Since $p$~is minimal, this is a contradiction, and therefore the counter must become~$<k$,
  somewhere after $(q,i+k)$.
%Now suppose $(q,i{+}k)$ appears after $(q,i)$ in $p$, some $k>0$. By Fact~2 and minimality of $p$, there must be a configuration $(q',k{-}1)$ after $(q,i{+}k)$ in $p$ such that there are no reset edges between $(q,i{+}k)$ and $(q',k{-}1)$. 
Let $p'$ be the segment $(q,i+k) \trdeltaa{} (q',k-1)$.
% RG: I'm not sure the previous (implicit) claim that p' is decreasing was OK.
% (However, they it isn't needed.)
%\footnote{Here by subpath we mean a sequence of nodes from $p$ in the same order as in $p$.}
Since non-reset transitions decrease the counter by $\le 1$,
  it must be that all the values $i+k, i+k-1, \ldots, k-1$ occur in~$p'$.
When one of these values is reached for the first time,
  it must be paired with a state that was not used for the bigger values.
It follows that $(i+k)-(k-1)+1 \le |Q|$, and so $i \le |Q|-2$.
%Since $p'$ is non-decreasing, all its states are different and hence, as $p'$ has length $i{+}2$, we have $i{+}2\leq \card{Q}$, i.e.~$i\leq \card{Q}{-}2$.
This gives us a bound on the counter value of any state that can be repeated in $p$. Thus, each state can appear in $p$ at most $\card{Q}$ times.
This implies that the length of $p$ is at most $\card{Q}^2$ and that in $p$ the counter does not exceed the value $i_0+\card{Q}^2$.

We can therefore answer the instance $(\AA,q_0,i_0,q_F)$
  of the control-state reachability problem as follows.
Note first that,
  by Facts 1~and~2 and because the length of minimal reaching path is~$\le|Q|^2$,
  we can replace $i_0$ by~$\min(i_0, |Q|^2)$.
Because we only consider initial counter values $\le|Q|^2$
  and because the minimal path has length $\le|Q^2|$,
  we can store one configuration on the minimal path using $O(\log|Q|)$ bits.
Since $|Q| \log |Q| \in O(N)$, we have $\log|Q| \in O(\log N)$,
  and therefore $O(\log N)$~bits suffice to represent a configuration of the minimal path.
Finally, we note that a nondeterministic algorithm
  can guess the next configuration on the minimal path.
%Thus, we need only check $(\AA,q_0,N_0,q_F)\in\reach$ with $N_0=\min(i_0,\card{Q}^2{-}1)$.
%We do this by non-deterministically computing $\card{Q}^2$ consecutive configurations and checking whether any of them is final.
%We only store the current configuration and a counter bounded by $\card{Q}^2$.
%Thus, we require space $\log\card{Q}+\log(N_0{+}\card{Q}^2{-}1)$ for the configuration and $\log\card{Q}^2$ for the counter so, in total, less than $\log\card{Q}+\log(2\card{Q}^2)+\log\card{Q}^2$.
%Since $N=\|(\AA,q_0,N_0,q_F)\|\geq\card{Q}\cdot\log\card{Q}$, we require space $O(\log N)$.
% By Savitch's theorem, we get \textsc{DSpace}$(\log^2N)$.
%
\end{proof}

We remark that an \textsc{NL} upper bound follows
  from an analysis of the backward coverability algorithm as well.
However, the proof from above has the advantage that it is self-contained.

\medskip

We now give an upper bound for the emptiness problem of unary HRAs.
The result holds for all representations that obey several weak requirements.
Let $\langle Q,q_0,\delta,H_0,F\rangle$ be a unary HRA with $n$~registers, represented with~$N$ bits.
We require that
\begin{itemize}
\item $n \in O(N)$, which is justified because there are $>2^n$ possible labels on transitions;
\item $|\delta|\in O(N)$,
  which is justified because we expect each transition to require at least a bit;
\item $|Q| \log|Q| \in O(N)$,
  which is justified because we expect each state
  to be mentioned at least once in the representation.
  (This last point implies that $\log|Q| \in O(\log N)$.)
\end{itemize}

\begin{proposition}\label{prop:unary-empty-ub}
The emptiness problem for unary HRAs is in $\textsc{PSpace}$.
More precisely, it is in $\textsc{NSpace}(N \log N)$,
  where $N$~is the number of bits used to represent the HRA\null.
\end{proposition}

\begin{proof}
Let $\AA=\langle Q,q_0,\delta,H_0,F\rangle$ be the given unary HRA\null.
Using \autoref{prop:remove-regs-withstates},
  we build a $(1,0)$-HRA~$\AA'$ that preserves emptiness,
  has $O(B_n2^nn\log|\delta|)$ transitions, and has $O(B_n2^nn\log|Q|)$ states.
Using the construction from \autoref{lem:hra-to-trvass},
  we reduce the emptiness of~$\AA'$ to control-state reachability in an R-VASS~$\AA''$.
Specialized to our case, the construction says that
\begin{itemize}
\item
  for each transition $q \trdelta{\empti,\{1\}} q'$ in~$\AA'$,
  we include a transition $q \trdelta{+1} q'$ in~$\AA''$;
\item
  for each transition $q \trdelta{\{1\},\empti} q'$ in~$\AA'$,
  we include a transition $q \trdelta{-1} q'$ in~$\AA''$; and
\item
  for each transition $q \trdelta{\{1\}} q'$ in~$\AA'$,
  we include a transition $q \trdelta{\rm reset} q'$ in~$\AA''$.
\end{itemize}
According to \autoref{lem:1RVASS},
  the control-state reachability problem for $\AA''$ is in $\textsc{NSpace}(\log N'')$,
  where $N''$~is the number of bits used to represent~$\AA''$.
Thus, it remains to compute $N''$ as a function of~$N$.
For this, we pick one particular representation of $\AA''$, namely a list of transitions.
For such a representation we have
$
  N'' = O\bigl( B_n2^nn|\delta|\cdot \log (B_n2^nn|Q|) \bigr)
$. Thus,
\begin{align*}
  \log N'' = O(n \log n + \log |\delta| + \log (n \log n+ \log |Q|)) = O(N \log N)
\end{align*}
The last step assumes that  $n,\ |Q|\log|Q|,\ |\delta| \in O(N)$.
We require the representation of~$\AA$ to satisfy these assumptions.
\end{proof}

\begin{proposition}\label{prop:unary-empty-lb}
The emptiness problem for unary HRAs is \textsc{PSpace}-hard.
\end{proposition}
\begin{proof}
By \cite[Theorem 5.1a]{Lazic},
  the nonemptiness problem of register automata is $\textsc{PSpace}$-hard.
Register automata are a special case of unary HRAs.
\end{proof}

\begin{proposition}\label{prop:unary-empty}
The emptiness problem for unary HRAs is \textsc{PSpace}-complete.
\end{proposition}
\begin{proof}
Immediate from \autoref{prop:unary-empty-ub} and \autoref{prop:unary-empty-lb}.
\end{proof}

% >>>

%%% Local Variables:
%%% mode: latex
%%% TeX-master: "jour"
%%% End:
% vim:spell:spelllang=en_gb:fmr=<<<,>>>:

\section{Summary of Main Results}\label{sec:summary} % <<<

The theorems in this section summarize the main results proved in the previous sections.

\begin{theorem}\label{th:closure}
Languages recognised by HRAs are closed under
  union, intersection, concatenation, and Kleene star,
  but not under complementation.
Also,
\begin{itemize}
\item if resets are banned, then closure under Kleene star is lost;
\item if the number of histories is bounded, then closure under intersection is lost.
\end{itemize}
\end{theorem}

\begin{proof}
Immediate from \autoref{prop:closure}, \autoref{lem:noncomplement},
  and the closure results of \autoref{sec:weak}.
\end{proof}

\begin{theorem}\label{th:emptiness}
Deciding emptiness of an $(m,n)$-HRA has the following complexity:
\begin{enumerate}[label=\({\alph*}]
\item \textsc{NL}-complete if $m=n=0$;
\item \textsc{NP}-complete if $m=0$ and all sets labelling transitions are singletons;
\item \textsc{PSpace}-complete if $m\le1$;
\item \textsc{ExpSpace}-complete if there are no reset transitions; and
\item \textsc{Ackermann}-complete in the general case.
\end{enumerate}
\end{theorem}

\proof
(\hbox to .6 em{\rm\hss a\hss})~When $m=n=0$,
  nonemptiness is equivalent to reachability in a directed graph,
  which is a standard $\textsc{NL}$-complete problem.
\(b ~In this case, HRAs are equivalent to RAs that disallow repetitions of values in registers,
  as they were originally defined~\cite{RA1}.
  For such RAs, nonemptiness is known to be
    $\textsc{NP}$-complete~\cite[Theorem 4]{RA-NP-hard}.
\(c ~\autoref{prop:unary-empty}.
\(d ~\autoref{prop:nrHRA}.
\(e ~\autoref{prop:emptiness-general}.\qed

For universality and language inclusion, see \autoref{prop:lang-inclusion}.

% >>>

% >>>
% vim:spell:spelllang=en_gb:fmr=<<<,>>>:

\section{Connections with existing formalisms}\label{sec:connect}

We have already seen that HRAs strictly extend FRAs.
In this section, we compare HRAs with CMAs (class memory automata).
Like HRAs and FRAs,
  CMAs work on infinite alphabets,
  and have a decidable nonemptiness problem.
Implicitly, we also compare with other formalisms:
  CMAs have been shown to express the same languages as data automata%
    ~\cite[Proposition~3.7]{CMA};
  and data automata have been shown to express the same languages as
    the two-variable fragment of
    existential monadic second order logic
      with data equality, position successor, and class successor%
        ~\cite[Proposition~14]{DA}.

%In this section we shall draw connections between HRAs and
%an automata model over infinite alphabets at the limits of decidability, called \emph{Data Automata (DA)}, introduced in~\cite{DA} in the context of XML theory. 
%DAs operate on \emph{data words}, i.e.~over finite sequences of elements from $\SS\times\names$, where $\SS$ is a finite set of \emph{data tags} and $\names$ is an infinite set of \emph{data values} (but we shall call them \emph{names}). 
%A DA operates in two stages which involve a transducer automaton and a finite-state automaton respectively. Both automata operate on the tag projection of the input, with the second automaton focussing on tags paired with the same name.
%
%For the rest of our discussion we shall abuse data words and treat them simply as strings of names, neglecting data tags. This is innocuous since there are straightforward translations between the two settings.\footnote{A string of names is the same as a data word over a singleton set of data tags; while data tags can be simulated by names in registers of the initial configuration which do not get moved nor copied during the computation.}
%An equivalent formulation of DAs which is closer to our framework is the following~\cite{CMA}.

\begin{definition}
A \boldemph{Class Memory Automaton (CMA)} is a tuple $\AA=\langle Q,q_0,\phi_0,\delta,F_1,F_2\rangle$ where $Q$ is a finite set of states, $q_0\in Q$ is initial, $F_1\subseteq F_2\subseteq Q$ are sets of final states and the transition relation is of type \
$
\delta\subseteq Q\times (Q\cup\{\bot\})\times Q
$. Moreover, $\phi_0$ is an initial \emph{class memory function}, that is, a function $\phi:\names\to Q\cup\{\bot\}$ with finite domain ($\{\,a \mid \phi(a)\not=\bot\,\}$ is finite).
\end{definition}

The semantics of a CMA $\AA$ is given as follows. Configurations of $\AA$ are pairs of the form $(q,\phi)$, where $q\in Q$ and $\phi$ a class memory function. The configuration graph of $\AA$ is constructed by setting $(q,\phi)\trdelta{a}(q',\phi')$ just if there is $(q,\phi(a),q')\in\delta$ and $\phi'=\phi[a\mapsto q']$. The initial configuration is $(q_0,\phi_0)$, while a configuration $(q,\phi)$ is accepting just if $q\in F_1$ and, for all $a\in\names$, $\phi(a)\in F_2\cup\{\bot\}$.

Thus, CMAs resemble HRAs in that they store input names in ``histories", only that histories are identified with states: for each state $q$ there is a corresponding history $q$ (note notation overloading), and a transition which accepts a name $a$ and leads to a state $q$ must store $a$ in the history $q$. Moreover, each name appears in at most one history (hence the type of $\phi$), while the finality conditions for configurations allow us to impose that, at the end, all names must appear in specific histories, if they appear in any.
For instance,
  the language $\overline{\LL_4}$ of \autoref{ex:complement},
  which we know cannot be recognized by HRAs (\autoref{lem:noncomplement}),
  can be recognized by the following CMA on the left (with $F_1=F_2=\{q_0\}$).
\begin{center}
\begin{tikzpicture}[automaton]
\node[state,initial,accepting] (q0) {$q_0$};
\node[state] (q1) [right=of q0] {$q_1$};
\path[transition]
  (q0) edge[bend left] node[above]{$\scriptstyle \bot$} (q1)
  (q1) edge[bend left] node[below]{$\scriptstyle q_1$} (q0)
  (q0) edge[loop below] node{$\scriptstyle q_1$} ()
  (q1) edge[loop below] node{$\scriptstyle \bot$} ();
\end{tikzpicture}\qquad
\begin{tikzpicture}[automaton]
\node[state,initial,accepting] (q0) {$q_0$};
\node[state] (q1) [right=of q0] {$q_1$};
\path[transition]
  (q0) edge[bend left] node[above]{$\scriptstyle \empti,1$} (q1)
  (q1) edge[bend left] node[below]{$\scriptstyle 1,2$} (q0)
  (q0) edge[loop below] node{$\scriptstyle 1,2$} ()
  (q1) edge[loop below] node{$\scriptstyle \empti,1$} ();
\end{tikzpicture}%\vspace{-2.75mm}
\end{center}
Each name is put in history $q_1$ when seen for the first time, and in history~$q_0$ when seen for the second time. The automaton accepts if all its names are in $q_0$. This latter condition is what makes the essential difference to HRAs, namely the capability to check where the names reside for acceptance. For example, the HRA on the right above would accept the same language it we were able to impose the condition that accepting configurations $(q,H)$ satisfy $a\in H\at\{2\}$ for all names $a\in \bigcup_i H(i)$.
Note though that, extending HRAs with such finality conditions would render their nonemptiness problem reducible from reachability of R-VASS (i.e.\ the question whether a specific state \emph{and} counter content can be reached), a problem known to be undecidable~\cite{ArakiK76}.

The above example proves that HRAs cannot express the same languages as CMAs. Conversely, as shown in~\cite[Proposition~7.2]{CMA}, the fact that CMAs lack resets does not allow them to express languages like, for example, $\LL_1$. 
{As a result, the languages expressed by CMAs are closed under intersection, union and concatenation, but not under Kleene star.}
%\rgtext{(In contrast, recall that HRAs are closed under Kleene star.)}
In the latter sections of~\cite{CMA} several extensions of CMAs are considered, one of which does involve resets. However, the resets considered there do not seem directly comparable to the reset capability of HRAs.

On the other hand, a direct comparison can be made with non-reset HRAs. We already saw in \autoref{prop:nrHRA-regs} that, in the latter idiom, histories can be used for simulating register behaviour. In the absence of registers, CMAs differ from non-reset HRAs solely in their constraint of relating histories to states (and their termination behaviour, which is more expressive). As the latter can be easily counterbalanced by obfuscating the set of states, we obtain the following.

\begin{proposition}
For each non-reset HRA $\AA$ there is a CMA \hbox{$\AA'$ such that $\LL(\AA)=\LL(\AA')$.}
\end{proposition}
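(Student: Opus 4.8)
The plan is to give a direct translation from a non-reset HRA $\AA$ into a CMA $\AA'$ recognising the same language. First I would invoke Proposition~\ref{prop:nrHRA} to reduce to the case where $\AA$ has no registers: given an arbitrary non-reset $(n,m)$-HRA we obtain an equivalent non-reset $(n{+}3m,0)$-HRA, so it suffices to construct a CMA equivalent to a non-reset $(m,0)$-HRA. This removes the register bookkeeping and leaves us with the essential mismatch between the two models, namely that a CMA identifies ``the history a name currently lives in'' with a state of the automaton, whereas an HRA has $m$ histories that are independent of control states, and each name may reside in an arbitrary \emph{subset} of those $m$ histories at once.

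The key idea is to encode, for each name, the subset $X\subseteq[m]$ of histories it currently occupies, by using the CMA's class memory function as a map from names to such subsets. Concretely, I would take the CMA state set to be (roughly) $Q\times\PP([m])$: a CMA configuration $((q,Z),\phi)$ will correspond to the HRA configuration $(q,H)$ where, for every name $a$, $\phi(a)$ is the CMA-state whose $\PP([m])$-component records exactly the set $H\at\{a\}$ of histories containing $a$ (with $\phi(a)=\bot$ precisely when $a$ is fresh, i.e. appears in no history), and $Z$ is a scratch component used only to remember the subset of the name just read so that the termination/acceptance machinery lines up. An HRA transition $q\xtrdelta{X,X'}q'$ then becomes a CMA transition that, reading a name $a$, checks $\phi(a)$ encodes exactly $X$ (using $\bot$ for the case $X=\emptyset$) and updates $\phi(a)$ to a state encoding $X'$. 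Because $\AA$ is non-reset, histories are only ever modified one name at a time by these moves, so no global update of $\phi$ is needed — this is exactly why the construction works for non-reset HRAs but would fail for general ones. For acceptance I would use the two-tier final sets $F_1\subseteq F_2$: $F_1$ consists of the states whose $Q$-component lies in $F$ of $\AA$, and $F_2=Q\times\PP([m])$ (i.e. we put no constraint on where names reside at the end, since HRA acceptance imposes none), which also absorbs the ``more expressive termination behaviour'' of CMAs that the paper alludes to. The initial class memory function $\phi_0$ is read off from $H_0$ in the obvious way; since we already reduced to empty registers we may further assume $H_0$ is essentially arbitrary over the $m$ histories and simply encode it.

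The remaining work is to verify $\LL(\AA)=\LL(\AA')$. I would exhibit the relation $R=\{\,((q,H),((q,Z),\phi))\mid \forall a.\ \phi(a)\text{ encodes }H\at\{a\}\,\}$ (together with the obvious compatibility of $Z$ after a read) and check it is preserved by transitions in both directions, so that runs of $\AA$ on a word $w$ correspond bijectively to runs of $\AA'$ on $w$, and accepting runs to accepting runs. This is routine: each clause of the CMA configuration-graph definition is matched against the corresponding clause for non-reset HRAs. The main obstacle, and the only point requiring care, is the handling of auxiliary/dummy states — an HRA transition must be split into the ``test $X$, write $X'$'' single CMA step, and one must make sure the $\PP([m])$-component and the $Z$-component are threaded correctly so that the encoding invariant is restored after each step and that fresh names ($\phi(a)=\bot$) are treated consistently with the convention $\bot \leftrightarrow \emptyset$; but this is exactly the kind of finite-control bookkeeping the preceding constructions in the paper (e.g. the packed-HRA lemma and Proposition~\ref{prop:nrHRA}) already handle, so no new difficulty arises.
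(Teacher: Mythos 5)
Your proposal is correct and follows essentially the same route as the paper: first eliminate registers via Proposition~\ref{prop:nrHRA}, then counterbalance the CMA constraint that ties histories to states by obfuscating the state set with subsets of $[m]$, so that the class memory function records for each name the set of histories it occupies, and trivialise $F_2$ to absorb the CMA's stronger acceptance condition. The only point needing care, which you already flag, is that a name removed from all histories has a non-$\bot$ class memory encoding $\emptyset$, so transitions with test $X=\emptyset$ must match both $\bot$ and the $\emptyset$-encoding states.
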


\cutout{
  It might be good to also comment on complexity, but the CMA paper is light on this.
  Should we also say something about nested CMAs?
}

\cutout{It is straightforward to show that CMAs extend in expressiveness non-reset HRAs which contain no registers.
Since CMAs can also express registers, it should be possible to apply the colouring technique and show that 
for any non-reset HRA $\AA$ there is a CMA $\AA'$ such that $\LL(\AA)=\LL(\AA')$.
The above could be achieved either directly, or by showing that non-reset HRAs can be equivalently decomposed into machines containing exclusively histories or registers, and use the fact that CMAs extend both in expressiveness.}

%%% Local Variables: 
%%% mode: latex
%%% TeX-master: "jour"
%%% End: 
% vim:spell:spelllang=en_gb:

\section{Further directions}
Our goal is to apply automata with histories in static and runtime verification. 
For static verification, the complexity results derived in this paper may seem discouraging at first. However, they are based on very specific representations of hard problems; in practice, we expect programs to yield automata of simpler complexities. Experience with tools based on coverability of TR-VASSs, like e.g.~BFC~\cite{BFC}, positively testify in that respect.
Another solution, already pursued herein, is to explore constrained versions of our machines.
A specific such variant we envisage to consider is one with restricted resets, in analogy to e.g.~\cite{zero}.
In a related direction, we aim to look at abstractions that would allow us to attack the model-checking problem for these automata, and also look at temporal logics that capture part or all of the expressivity of HRAs.

In this work we examined nondeterministic automata but did not look at alternating variants. This is justified by the undecidability of universality already at the level of register automata.
However, if one is willing to restrict the number of registers and histories, there may still be room for decidability.
In the case of register automata, it has been shown~\cite{Lazic}
  that alternating register automata with one register are decidable for emptiness,
  and become undecidable at two registers.
While these automata cannot capture languages that inherently require more than one register, they can use alternation to express name freshness and e.g.\ capture the languages $\LL_0,\LL_2$ of the Introduction, and also a variant of $\LL_1$ which uses constants for tokenizing the input (instead of $a_0$). It would be useful to examine whether a similar restriction can yield decidable alternating HRAs, and what would their expressivity be.
Finally, a problem left open here is decidability and complexity of bisimilarity.
(In a private communication,
  Piotrek Hofman sketched a proof that bisimilarity is decidable.)
%\rg{We should maybe mention that we were told in a private communication that
%  bisimilarity is decidable.
%I still didn't check that proof $\ldots$}
%Although it is known that bisimilarity is undecidable for Petri nets~\cite{Jancar}, the version which seems of relevance towards an undecidability argument for HRAs is that of \emph{visibly} counter automata with labels, i.e.~automata which accept labels at each transition, and the action of each transition is determined by its label. The latter problem is not known to be decidable.

\frenchspacing
\bibliography{biblio}{}
\bibliographystyle{plain}

\ifdraft{
\newpage
{\Huge Stuff being rewritten:}

\input{old-emptiness2.tex}
\input{old-regs.tex}
\input{old-weakening.tex}
}{}

\end{document}

%BBBBBBBBBBBBBBBBBBBBBBBBBBBBBBBBBBBBBBBBBBBBBBBBBBBBBBBBBBBBBBBBBBBBBBBBBBBBBBBBBBBBBBBBBBBBBBBBBBBBBBBBBBBBBBBBBBBBBBBBBBBBBBBBBBBBBB